\newcommand{\p}{\partial}
\newcommand{\lsemioplus}{\mathbin{\mbox{$\lefteqn{\hspace{.77ex}\rule{.4pt}{1.2ex}}{\in}$}}}
\newcommand{\sgn}{\mathop{\rm sgn}\nolimits}
\newcommand{\rank}{\mathop{\rm rank}\nolimits}
\newcommand{\spanindex}{{\mbox{\tiny$\langle\,\rangle$}}}
\newcommand{\todo}[1][\null]{\ensuremath{\clubsuit}}
\newcommand{\noprint}[1]{}
\newcommand{\checked}[1][\null]{\ensuremath{\boldsymbol{\surd}}}
\newtheorem{theorem}{Theorem}
\newtheorem{lemma}[theorem]{Lemma}
\newtheorem{corollary}[theorem]{Corollary}
\newtheorem{proposition}[theorem]{Proposition}
\newtheorem*{proposition*}{Proposition}
{\theoremstyle{definition}
\newtheorem{definition}[theorem]{Definition}
\newtheorem{example}[theorem]{Example}
\newtheorem{remark}[theorem]{Remark}
\newtheorem*{notation*}{Notation}
\newtheorem*{conventions*}{Conventions}
}
\begin{document}

\par\noindent {\LARGE\bf
Algebraic method of group classification\\ 
for semi-normalized classes of differential equations
\par}

\vspace{6mm}\par\noindent{\large
C\'elestin Kurujyibwami$^\dag$, Dmytro R. Popovych$^{\ddag\S}$ and Roman O. Popovych$^{\ddag\Diamond}$
}\par\vspace{3mm}\par

{\small\it
\vspace{2mm}\par\noindent
$^\dag$\,College of Science and Technology, University of Rwanda, P.O.\,Box: 3900, Kigali, Rwanda

\vspace{2mm}\par\noindent
$^\ddag$\,Department of Mathematics and Statistics, Memorial University of Newfoundland,\\
$\phantom{^\ddag}$\,St.\ John's (NL) A1C 5S7, Canada

\vspace{2mm}\par\noindent
$^\S$\,Institute of Mathematics of NAS of Ukraine, 3 Tereshchenkivska Str., 01004 Kyiv, Ukraine

\vspace{2mm}\par\noindent
$^\Diamond$\,Mathematical Institute, Silesian University in Opava, Na Rybn\'\i{}\v{c}ku 1, 746 01 Opava, Czechia

}

\vspace{6mm}\par\noindent
E-mails:
celeku@yahoo.fr,
dpopovych@mun.ca, 
rop@imath.kiev.ua

\vspace{8mm}\par\noindent\hspace*{10mm}\parbox{140mm}{\small
We generalize the notion of semi-normalized classes of systems of differential equations, 
study properties of such classes and extend the algebraic method of group classification to them.
In particular, we prove the important theorems 
on factoring out symmetry groups and invariance algebras of systems from semi-normalized classes 
and on splitting such groups and algebras within disjointedly semi-normalized classes. 
Nontrivial particular examples of classes that arise in real-world applications 
and showcase the relevance of the developed theory are provided.
To convincingly illustrate the efficiency of the proposed method, 
we apply it to the group classification problem for the class
of linear Schr\"odinger equations with complex-valued potentials
and the general value of the space dimension.
We compute the equivalence groupoid of the class by the direct method
and thus show that this class is uniformly semi-normalized with respect to the linear superposition of solutions.
This is why the group classification problem reduces to the classification
of specific low-dimensional subalgebras of the associated equivalence algebra,
which is completely realized for the case of space dimension two.
Splitting into different classification cases is based on three integer parameters
that are invariant with respect to equivalence transformations.
We also single out those of the obtained results 
that are relevant to linear Schr\"odinger equations with real-valued potentials.
}

\vspace{4mm}

\section{Introduction}

Systems of differential equations that appear in real-world applications 
generally contain constant or functional parameters 
that are not fixed and should be found using experimental data or additional arguments. 
An example of such parameters called arbitrary elements is given by potentials in Shr\"odinger equations.
The scope of relevant models is often restricted by certain requirements to their symmetry properties 
following from the corresponding theory, 
like certain scale invariance or the Galilean or special relativity principles in physics. 
In mathematical terminology, this means that 
modeling differential equations have to possess 
a certain symmetry group or symmetry groups with certain properties
or the widest symmetry group among the possible ones~\cite{ovsy1982A}.  
Thus, to single out such models from a parameterized class of systems of differential equations,
one should solve the group classification problem for this class. 
This is why the theory of group classification plays nowadays a central role 
in symmetry analysis of differential equations and its application to various sciences.

This theory has its origins in the work of Sophus Lie. 
His greatest achievement here was solving the group classification problems 
for three classes differential equations, 
second-order ordinary differential equations~\cite{lie1893A}, 
second-order linear partial differential equations in two independent variables~\cite{lie1881a}
and nonlinear Klein--Gordon equations of the form $d^2z/dx\,dy=F(z)$~\cite{lie1881b}. 
Both the main approaches to solving group classification problems, 
which are the direct and the algebraic methods, also originated from these Lie's studies. 
The modern development of the theory of group classification was started by Ovsyannikov \cite{ovsy1959a,ovsy1982A}, 
and this has been the field of intensive research ever since, 
see \cite{CRChandbook1994V1,CRChandbook1995V2,lahn2006a,vane2020b} and references therein.

The group classification problem for a class of differential equations
is complex beyond constructing the Lie symmetry group of a single system of differential equations.
It involves classifying solutions of a cumbersome overdetermined system of partial differential equations,
where the unknown functions are not only the coefficients of Lie-symmetry vector fields 
but also the arbitrary elements of the class.
This is why the progress in implementing the group classification framework in software
is not significant despite there existing many specialized packages for finding Lie symmetries
in various symbolic computation systems 
\cite{BaranMarvan,carm2000a,chev2007a,chev2010d,marv2009a,Wittkopf2004PhD}.
Moreover, a considerable fraction of group classification problems
presented in the literature is tackled using a brute-force version of the direct method, 
which is advisable only for some simple classes of differential equations, 
see a discussion in~\cite[Section~4]{opan2020b}.
This definitely inefficient approach requires a rigorous effort to keep track of various cases
of integrating the corresponding determining equations for Lie symmetries
and furthermore to check inequivalence of these cases.
Any lapse in attention leads to missed, overlapping, repeated and/or incorrect classification cases
afflicting most papers following this approach.
Due to their unhandiness, the corresponding computations are not properly presented if presented at all,
which makes the obtained classification results unverifiable and thus not trustworthy.

In the meanwhile, a number of novel techniques for group classification have been developed recently,
which include variations to the advanced modification of the direct method 
called the \emph{method of furcate splitting} \cite{ivan2010a,opan2020b,niki2001a}
and various flavors of the \emph{algebraic method} \cite{bihl2012b,card2011a,kuru2018a,opan2017a,popo2010a,vane2020b}. 
The former method was suggested and applied for the first time in~\cite{niki2001a}
but was given its name only in~\cite{ivan2010a} and formalized to its full extend in~\cite{opan2020b}.
See also \cite{bihl2020a,popo2004b,vane2012b,vane2015a} and references therein 
for examples of its efficient application.
It is advisable only for classes of simple structure, 
which is, at the same time, not appropriate to apply the more powerful algebraic method.
In particular, arbitrary elements in such classes should be constants or depend on few arguments, 
and the reasonable upper bound for the number of arbitrary-element arguments, 
which was handled in the literature~\cite{niki2001a}, is two. 
At the same time, the latter method is not so sensitive to the number of arguments of arbitrary elements.

Despite its invention by Lie, 
the algebraic method of group classification became a common tool of group analysis of differential equations
considerably later, only since the 1990s 
\cite{basa2001a,gagn1993a,gaze1992a,gung2004a,lahn2005a,lahn2006a,maga1993a,ozem2013a,wint1992a,zhda1999d}.
Moreover, the fact that the class of differential equations under consideration 
should admit special (normalization) properties for the proper applicability of this method 
was understood even later~\cite{popo2006b,popo2005c,popo2010a}.  
The straightforward application of the algebraic method to non-normalized classes of differential equations 
results in the so-called preliminary group classification of such classes \cite{akha1991a,card2011a,ibra1991a}, 
see the discussion in \cite[Section~XI]{bihl2012b}.
To extend the scope of applicability of the algebraic method, 
a number of various techniques and approaches were invented, 
including 
splitting the class under consideration into normalized subclasses \cite{kuru2020a,popo2010a},
constructing its normalized superclass or even a hierarchy normalized classes 
related to it \cite{bihlo2017a,kuru2020a,opan2017a,popo2010a}, 
singling out its singular subclasses \cite{bihl2012b}, 
gauging its arbitrary elements using a wide subgroupoid of its equivalence groupoid 
\cite{bihlo2017a,kuru2020a,opan2017a,opan2020b,popo2010a},
mappings between classes that are generated by point transformations~\cite{opan2020b} and  
partitioning classification cases into regular and singular ones~\cite{vane2020b}.
One of such approaches is to extend the algebraic method to classes with weaker normalization properties, 
e.g., to classes that are semi-normalized or normalized in the generalized or the extended sense. 
The first version of the algebraic method for group classification of semi-normalized classes 
was suggested in~\cite{kuru2018a} and applied therein 
for solving the group classification problem for the class 
of (1+1)-dimensional linear Schr\"odinger equations with complex-valued potentials.
Then this version was used to classify Lie symmetries 
of (1+1)-dimensional linear evolution equations of arbitrary order greater than two~\cite{bihlo2017a}. 
It was shown in~\cite{opan2020b} that further development of the algebraic method of group classification 
requires a deep revision of foundations of the theory of group classification, 
including the statement of group classification problems. 

In the present paper, we enhance and further generalize the above version of this method 
and apply it to the class~$\mathcal F$ of (1+$n$)-dimensional ($n\geqslant1$) linear Schr\"odinger equations
with complex-valued potentials 
\begin{gather}\label{MLinSchEqs}
i\psi_t+\psi_{aa}+V(t,x)\psi=0,
\end{gather}
where $t$ and $x=(x_1,\dots,x_n)$ are the real independent variables,
$\psi$ is the complex dependent variable
and $V$ is an arbitrary smooth complex-valued potential depending on $t$ and $x$.
Here and in what follows, the subscripts $t$ and $a$ denote differentiation with respect to $t$ and $x_a$, respectively,
the indices $a$, $b$, $c$, and $d$ run from $1$ to $n$,
and we use summation convention over repeated indices.
By~$\mathcal F_{\mathbb R}$ we denote the subclass of the class~$\mathcal F$ 
that consists of the equations of the form~\eqref{MLinSchEqs} with real-valued potentials~$V$.

More specifically, we generalize the notion of semi-normalized classes of differential equations, 
study properties of such classes and extend the algebraic method of group classification to them.
Basic for this extension are the new theorems 
on factoring out symmetry groups and invariance algebras of systems from semi-normalized classes 
and on splitting such groups and algebras within disjointedly semi-normalized classes. 
For the class~$\mathcal F$ with the general value of~$n$,
we compute its equivalence groupoid~$\mathcal G^\sim$ by the direct method
and thus prove its uniform semi-normalization with respect to the linear superposition of solutions.
In this way, we reduce the group classification problem for the class~$\mathcal F$
to the classification of appropriate low-dimensional subalgebras of its equivalence algebra.
This problem is completely solved in the case of $n=2$.
Splitting into different classification cases is based on three integer parameters.
These parameters are invariant with respect to equivalence transformations 
and characterize the dimensions of parts of the corresponding Lie symmetry algebras,
which are related to transformation of time, rotations 
and generalized time-dependent shifts with respect to space variables, respectively.
The analogous results for the subclass~$\mathcal F_{\mathbb R}$ are singled out from those above. 
In the study of the class~$\mathcal F$, the present paper succeeds~\cite{kuru2018a}, 
where the case $n=1$ was comprehensively studied, and~\cite{kuru2015a}, 
where the groupoid~$\mathcal G^\sim$ was computed for the case $n=2$.

The study of symmetries of Schr\"odinger equations began in the 1970s 
with Lie symmetries of linear Schr\"odinger equations with real-valued time-independent potentials
\cite{Boyer1974,Miller1977,Niederer1972,Niederer1973a,Niederer1973b,Niederer1974},
but later the focus shifted to nonlinear Schr\"odinger equations
\cite{belm2008a,BoyerSharpWinternitz1976,Fushchich&Moskaliuk1981,Gagnon88a,Gagnon89a,Gagnon89b,Gagnon89c,gagn1993a,kuru2020a,
ozem2013a,Popovych&Ivanova&Eshragi2004,popo2010a,Zhdanov&Roman2000},
see also detailed reviews in~\cite[Section~4]{popo2010a} and~\cite[Section~1]{kuru2018a}.
At the same time, Lie symmetries of linear Schr\"odinger equations and their generalizations
were not properly classified.
In particular, this is the case for linear Schr\"odinger equations with complex-valued potentials  
although they also arise as models in quantum mechanics,
scattering theory, condensed matter physics, quantum field theory and so forth
\cite{Bender2002,Fernandez&Guardiola&Ros&Znojil1999,Muga&Palao&Navarro&Egusquiza2004}.
Moreover, the existing results for the standard case of real-valued time-independent potentials 
are neither exhaustive nor completely correct, which has been found out in the course of the consideration 
of more general Schr\"odinger equations.
Thus, in~\cite{niki2016a}, Schr\"odinger equations with position-dependent mass and stationary real-valued potentials 
were considered from the Lie-symmetry point of view for arbitrary space dimension, 
and their Lie symmetries were completely classified for space dimension two. 
The obtained classification list contains three case families formally intersecting the case of constant mass. 
As a result, a Lie-symmetry classification case 
for Schr\"odinger equations with constant mass and real-valued time-independent potentials
that had been missed in~\cite{Boyer1974} was identified; 
see the conclusion of the present paper for more details.   
Hence there are a number of open problems even on Lie symmetries of various Schr\"odinger equations, 
including linear ones. 
One can consider wider or new classes of Schr\"odinger equations~\cite{niki2020a,niki2016a} 
or carry out complete, partial or inverse group classification 
of models with additional properties such as superintegrability~\cite{niki2022b,niki2023a}.

The structure of this paper is as follows.
Basic notions related to classes of differential equations
are reviewed in Section~\ref{sec:BasicsNotionsOfGroupClassification}.
In Section~\ref{sec:Semi-normalizedClasses}, 
we extend the notion of a semi-normalized class of differential equations 
to the semi-normalization with respect to 
a uniform family of point-symmetry groups of systems from the class under consideration
and a subgroup of the corresponding equivalence group, 
study properties of newly defined semi-normalized classes 
and prove the theorems on factoring out point-symmetry groups and maximal Lie invariance algebras
for systems from such classes.
In Section~\ref{sec:DisjointSemi-normalization}, 
we distinguish between the disjoint and the non-disjoint semi-normalizations, 
study specific properties of disjointedly semi-normalized classes of differential equations 
and prove the stronger theorems on splitting point-symmetry groups and maximal Lie invariance algebras 
within such classes.
We also present two important examples, 
the class that is disjointedly semi-normalized
only with respect to a proper subgroup of its equivalence group
and a semi-normalized class that is not disjointedly semi-normalized.
The additional attribute ``uniform'' for semi-normalization is introduced in Section~\ref{sec:UniformSemi-norm}. 
Therein we discuss an example of a class that is semi-normalized but not uniformly semi-normalized. 
More attention is paid to the particularly important case of uniform semi-normalization
given by the classes of homogeneous linear systems of differential equations 
that are uniformly semi-normalized with respect to the linear superposition of solutions. 
This specific semi-normalization is especially relevant to the symmetry analysis 
of the linear Schr\"odinger equations in the second part of the present paper.
The algebraic method of group classification is extended to semi-normalized classes 
in Section~\ref{sec:AlgMethodForSemi-normalizedClasses}.

Section~\ref{sec:PreliminarySymAnalisysOfMultiDSchEqs} is devoted to
the preliminary symmetry analysis of the class~$\mathcal F$ of (1+$n$)-dimensional linear Schr\"odinger equations
 of the form~\eqref{MLinSchEqs} with complex-valued potentials and with an arbitrary value $n\geqslant1$. 
We compute the equivalence groupoid~$\mathcal G^\sim$, 
the equivalence group~$G^\sim$ and the equivalence algebra~$\mathfrak g^\sim$ of this class. 
It is proved that
the class~$\mathcal F$ is uniformly semi-normalized with respect to the linear superposition of solutions.
We derive and preliminarily integrate the determining equations
for Lie symmetries of equations from the class~$\mathcal F$, 
which allows us to obtain the general form of Lie-symmetry vector fields,
the kernel Lie invariance algebra~$\mathfrak g^\cap$ of the class~$\mathcal F$
as well as the classifying condition for such Lie symmetries.
Then we analyze properties of subalgebras of~$\mathfrak g^\sim$
(more precisely, of the projection~$\pi_*\mathfrak g^\sim$ of~$\mathfrak g^\sim$ to the space of independent and dependent variables)
that are appropriate as maximal Lie invariance algebras of equations from the class~$\mathcal F$. 
In particular, we obtain that the least upper bound for the dimensions of these subalgebras equals $n(n+3)/2+5$.
As a result, we show that 
the group classification problem for the class~$\mathcal F$ 
reduces to the classification of appropriate subalgebras of the algebra~$\pi_*\mathfrak g$ 
with respect to the equivalence relation generated by the action of the projected group~$\pi_*G^\sim$.
The group classification problem for the class~$\mathcal F$ in the case $n=2$
is completely solved in Section~\ref{sec:GroupClassificationOf(1+2)DLinSchEqs}.
We use the above results on the class~$\mathcal F$ in Section~\ref{sec:Real-ValuedPotentials} 
to derive their counterparts for the class~$\mathcal F_{\mathbb R}$.\looseness=-1 

Lastly, we conclude with a summary of results, their discussion and suggestions for future considerations 
in Section~\ref{sec:Conclusion}.

\begin{conventions*}
Throughout the paper,
we assume summation over repeated indices.
Subscripts of functions denote differentiation with respect to the corresponding variables.
Given a class of differential equations,
$\pi$ denotes the natural projection of 
the space run by the tuple of the independent and dependent variables, relevant jet variables and the arbitrary elements of the class
onto the space run by the tuple of the independent and dependent variables only. 
When talking about pseudogroups, we omit the prefix ``pseudo''.%
\footnote{%
As for many other classes of differential equations, 
the equivalence group of the class~$\mathcal F$ 
and the point symmetry groups of equations from this class 
are in fact pseudogroups but not groups.  
The theory developed should be formulated involving the notions pseudogroups and pseudoalgebras. 
At the same time, such a modification essentially complicates the consideration and 
requires revising and enhancing 
the entire framework of symmetry analysis of differential equations, 
beginning with the notions of a system of differential equations and of a class of such systems, 
which we intend to carry out in the future. 
} 
The notation $\rm id$ is used for the identity transformation in the relevant space.
In Sections~\ref{sec:BasicsNotionsOfGroupClassification}--\ref{sec:AlgMethodForSemi-normalizedClasses} 
with theoretical results on general systems of differential equations 
and in Sections~\ref{sec:PreliminarySymAnalisysOfMultiDSchEqs} and~\ref{sec:GroupClassificationOf(1+2)DLinSchEqs}
with specific results on Schr\"odinger equations, we use different locally specified notations.  
\end{conventions*}

\section{Basics notions on classes of differential equations}\label{sec:BasicsNotionsOfGroupClassification}

\newcommand{\EqOrd}{p}

In order to make the presentation self-contained,
we briefly define the notion of a class (of systems) of differential equations
and various objects related to point transformations in such classes, 
mainly following \cite[Section~2]{vane2020b}. 
More details can be found in 
\cite{bihl2012b,opan2022a,popo2006b,popo2010a,vane2020b} 
and references therein.

\begin{notation*}
Up to Section~\ref{sec:AlgMethodForSemi-normalizedClasses},
we use a notation for independent and dependent variables
that differs from that of the other sections:
$x=(x_1,\dots,x_n)$ denotes the complete $n$-tuple of independent variables
and $u=(u^1,\dots,u^m)$ is the $m$-tuple of dependent variables (that is, unknown functions of the independent variables), 
except particular examples, where natural specific notations of variables are used.
The indices $i$ and $j$ run from $1$ to $n$ and the indices $a$ and $b$ run from $1$ to $m$. 
\end{notation*}

Consider a system of differential equations $\mathcal L_\theta$: $L(x,u_{(p)},\theta_{(q)}(x,u_{(p)}))=0$
parameterized by a tuple of arbitrary elements
$\theta(x,u_{(p)})=(\theta^1(x,u_{(p)}),\dots,\theta^k(x,u_{(p)})),$
where $u_{(p)}$ stands for the tuple of derivatives of~$u$ with respect to~$x$ up to order~$p$,
including~$u$ as the zeroth order derivatives,
and $\theta_{(q)}$ denotes the tuple of derivatives of the tuple~$\theta$
with respect to its arguments~$(x,u_{(p)})$ up to order~$q$.
The arbitrary-element tuple $\theta$ runs through the set $\mathcal S$ of solutions
of an auxiliary system of differential equations $S(x,u_{(p)},\theta_{(q')}(x,u_{(p)}))=0$
and differential inequalities of the form $\Sigma(x,u_{(p)},\theta_{(q')}(x,u_{(p)}))\ne 0$,
where both $x$ and $u_{(p)}$ play the role of independent variables,
and $S$ and $\Sigma$ are tuples of smooth functions depending on $x,u_{(p)}$ and $\theta_{(q')}$.
Other kinds of inequalities are also possible.
By~$\mathcal L|_{\mathcal S}$ we denote the class of systems $\mathcal L_\theta$
with arbitrary elements $\theta$ running through~$\mathcal S$,
$\mathcal L|_{\mathcal S}:=\{\mathcal L_\theta\mid\theta\in\mathcal S\}$.

The \emph{(point) equivalence groupoid} $\mathcal G^\sim$ of the class~$\mathcal L|_{\mathcal S}$ 
is the small category with $\mathcal S$ 
as the set of objects and with the set of point transformations%
\footnote{%
In the case of single dependent variable, 
one can also consider the contact equivalence groupoid of the class~$\mathcal L|_{\mathcal S}$ 
that consists of the admissible contact transformations within this class. 
}
of~$(x,u)$, 
i.e., of local diffeomorphisms in the space with the coordinates~$(x,u)$, 
between pairs of systems from~$\mathcal L|_{\mathcal S}$ as the set of arrows, 
\[
\mathcal G^\sim=\big\{\mathcal T=(\theta,\Phi,\tilde\theta)\mid
\theta,\tilde\theta\in\mathcal S,\,\Phi\in{\rm Diff}^{\rm loc}_{(x,u)}\colon\Phi_*\mathcal L_\theta=\mathcal L_{\tilde\theta}\big\}.
\]
More specifically, 
elements~$\mathcal T$ of~$\mathcal G^\sim$ are called \emph{admissible (point) transformations} within the class~$\mathcal L|_{\mathcal S}$. 
The \emph{source} and \emph{target} maps ${\rm s},{\rm t}\colon\mathcal G^\sim\to\mathcal S$ 
are defined by ${\rm s},{\rm t}\colon\mathcal G^\sim\ni\mathcal T=(\theta,\Phi,\tilde\theta)\mapsto\theta,\tilde\theta$.
This map pair inspires the groupoid notation $\mathcal G^\sim\rightrightarrows\mathcal S$. 
Admissible transformations~$\mathcal T=(\theta,\Phi,\tilde\theta)$ and $\mathcal T'=(\theta',\Phi',\tilde\theta')$ 
are \emph{composable} 
if $\tilde\theta=\theta'$, and then their \emph{composition} is $\mathcal T\star\mathcal T'=(\theta,\Phi'\circ\Phi,\tilde\theta')$, 
which defines a natural partial multiplication on~$\mathcal G^\sim$.
For any $\theta\in\mathcal S$, the \emph{unit at~$\theta$} is given by \smash{${\rm id}_\theta:=(\theta,{\rm id},\theta)$}, 
where ${\rm id}$ is the identity transformation of~$(x,u)$. 
This defines the \emph{object inclusion map} $\mathcal S\ni\theta\mapsto{\rm id}_\theta\in\mathcal G^\sim$, 
i.e., the object set~$\mathcal S$ can be regarded 
to coincide with the base groupoid $\mathcal S\rightrightarrows\mathcal S:=\{{\rm id}_\theta\mid\theta\in\mathcal S\}$. 
The \emph{inverse} of~$\mathcal T$ is $\mathcal T^{-1}:=(\tilde\theta,\Phi^{-1},\theta)$, where 
$\Phi^{-1}$ is the inverse of~$\Phi$.
All required groupoid properties like associativity of the partial multiplication, 
its consistency with the source and target maps,  
natural properties of units and inverses are obviously satisfied. 
The subset $\mathcal G(\theta,\tilde\theta):={\rm s}^{-1}(\theta)\cap{\rm t}^{-1}(\tilde\theta)$ of~$\mathcal G^\sim$ 
with $\theta,\tilde\theta\in\mathcal S$ 
corresponds to the set of point transformations mapping the system~$\mathcal L_\theta$ to the system~$\mathcal L_{\tilde\theta}$.
The \emph{vertex group} $\mathcal G_\theta:=\mathcal G(\theta,\theta)={\rm s}^{-1}(\theta)\cap{\rm t}^{-1}(\theta)$ 
is associated with the point symmetry (pseudo)group~$G_\theta$ of the system~$\mathcal L_\theta$, 
$G_\theta=\big\{\Phi\in{\rm Diff}^{\rm loc}_{(x,u)}\mid(\theta,\Phi,\theta)\in\mathcal G_\theta\big\}$.
The \emph{fundamental groupoid}~$\mathcal G^{\rm f}$ of the class~$\mathcal L|_{\mathcal S}$ 
is the disjoint union of the vertex groups $\mathcal G_\theta$, $\theta\in\mathcal S$, 
\[
\mathcal G^{\rm f}:=\sqcup_{\theta\in\mathcal S}\mathcal G_\theta
=\big\{(\theta,\Phi,\theta)\mid\theta\in\mathcal S,\,\Phi\in G_\theta\big\}.
\]
Since it has the same object set~$\mathcal S$ and the same vertex groups as~$\mathcal G^\sim$
and $\mathcal T^{-1}\mathcal G_{\tilde\theta}\mathcal T=\mathcal G_\theta$ 
for any $\mathcal T\in\mathcal G(\theta,\tilde\theta)$,
it is a normal subgroupoid of~$\mathcal G^\sim$, 
which is also called the \emph{fundamental subgroupoid} of~$\mathcal G^\sim$. 

Denote by $\pi^k$, $k\in\{0,\dots,\EqOrd\}$, the natural projection from the space with the coordinates $(x,u_{(\EqOrd)},\theta)$ 
to the spaces with the coordinates $(x,u_{(k)})$. In this notation, $\pi^0=\pi$.

The \emph{(usual) equivalence group}~$G^\sim$ of the class~$\mathcal L|_{\mathcal S}$ 
is the (pseudo)group of point transformations $\mathscr T$ in the space with the coordinates $(x,u_{(\EqOrd)},\theta)$ 
that 
\begin{itemize}\itemsep=0ex
\item
are projectable to the spaces with the coordinates $(x,u_{(k)})$ for any $k\in\{0,\dots,\EqOrd\}$,
\item 
are consistent with the contact structure on the space with the coordinates $(x,u_{(\EqOrd)})$, 
i.e., $\pi^\EqOrd_*\mathscr T$ is the standard prolongation of~$\pi^k_*\mathscr T$ to $\EqOrd$th order jets $(x,u_{(\EqOrd)})$, and 
\item 
map the class~$\mathcal L|_{\mathcal S}$ onto itself, 
$\mathscr T_*\mathcal L_\theta\in\mathcal L|_{\mathcal S}$ for any $\theta\in\mathcal S$.
\end{itemize}
If the arbitrary-element tuple~$\theta$ depends at most on $(x,u_{(\EqOrd')})$ with $\EqOrd'<\EqOrd$, then 
the group~$G^\sim$ can be considered to act in the space with the coordinates $(x,u_{(\EqOrd')},\theta)$. 
We call the group $\pi_*G^\sim$ the \emph{shadow} of~$G^\sim$.
Two specific restrictions on equivalence transformations, 
their projectability and their locality with respect to arbitrary elements, 
can be weakened to generalize the notion of usual equivalence group in several ways.
This results in the notions of generalized equivalence group and extended equivalence group, respectively, 
or the notions of extended generalized equivalence group if both restrictions are weakened simultaneously 
\cite{ivan2010a,mele1996a,opan2017a,opan2020b,popo2006b,popo2010a}.

The \emph{action groupoid} $\mathcal G^H$a subgroup~$H$ of the usual equivalence group~$G^\sim$ of the class~$\mathcal L|_{\mathcal S}$, 
\[
\mathcal G^H:=\big\{(\theta,\pi_*\mathscr T,\mathscr T_*\theta)\mid\theta\in\mathcal S,\,\mathscr T\in H\big\},
\] 
is a wide subgroupoid\footnote{%
A subgroupoid~$\mathcal U$ of a groupoid~$\mathcal G$ is called \emph{wide}
if the sets of objects of~$\mathcal U$ and~$\mathcal G$ coincide, ${\rm s}(\mathcal U)={\rm s}(\mathcal G)$.
} 
of the equivalence groupoid~$\mathcal G^\sim$ of this class, 
$\mathcal G^H\subseteq\mathcal G^\sim$ and ${\rm s}(\mathcal G^H)=\mathcal S$. 
We say that an admissible transformation~$\mathcal T$ in the class~$\mathcal L|_{\mathcal S}$ 
is generated by an equivalence transformation from~$H$ if $\mathcal T\in\mathcal G^H$.

The \emph{kernel point symmetry group} $G^\cap:=\cap_{\theta\in\mathcal S}G_\theta$ 
of systems from the class~$\mathcal L|_{\mathcal S}$ 
is related to the unfaithful subgroup~\smash{$\tilde G^\cap$} of~$G^\sim$ 
under the action on~$\mathcal S$, $G^\cap=\pi_*\tilde G^\cap$. 
Thus, \smash{$\tilde G^\cap$} is a normal subgroup of~$G^\sim$ 
that is isomorphic to~$G^\cap$, and the isomorphism is established by~$\pi_*$.  

Given a transformation group, 
it is often required to consider the associated Lie algebra of vector fields, 
which consists of the generators of one-parameter subgroups of this group.
The infinitesimal counterparts 
of the {\it equivalence group}~$G^\sim$ and the kernel point symmetry group~$G^\cap$ of the class $\mathcal L|_{\mathcal S}$
and the point symmetry group~$G_\theta$ of a system~$\mathcal L_\theta$ 
are called 
the \emph{equivalence algebra}~$\mathfrak g^\sim$ and the \emph{kernel invariance algebra} 
of the class $\mathcal L|_{\mathcal S}$ 
and the \emph{maximal Lie invariance algebra} of the system $\mathcal L_\theta$, respectively. 

If the equivalence groupoid~$\mathcal G^\sim$ has specific properties, 
then the class~$\mathcal L|_{\mathcal S}$ is more convenient 
for studying within the framework of symmetry analysis of differential equations 
\cite{bihl2012b,kuru2018a,popo2006b,popo2010a,vane2020b}. 

\begin{definition}
The class~$\mathcal L|_{\mathcal S}$ is called
\emph{normalized} if $\mathcal G^{G^\sim}\!\!=\mathcal G^\sim$. 
It is called \emph{semi-normalized} if $\mathcal G^{\rm f}\star\mathcal G^{G^\sim}\!\!=\mathcal G^\sim$. 
\end{definition}

The Frobenius product 
$
\mathcal G^{\rm f}\star\mathcal G^{G^\sim}\!\!=\big\{\mathcal T\star\mathcal T'\mid
\mathcal T\in\mathcal G^{\rm f},\,\mathcal T'\in\mathcal G^{G^\sim}\!\!,\,{\rm t}(\mathcal T)={\rm s}(\mathcal T')\big\}
$
is a well-defined subgroupoid of~$\mathcal G^\sim$ 
since the fundamental groupoid~$\mathcal G^{\rm f}$ is a normal subgroupoid of~$\mathcal G^\sim$. 
It is obvious that any normalized class of differential equations is semi-normalized.

When considering different kinds of equivalence groups for the class~$\mathcal L|_{\mathcal S}$, 
the usual, the generalized, the extended or the extended generalized equivalence groups of~$\mathcal L|_{\mathcal S}$,
we should specify the corresponding kinds of (semi-)normalization, 
(semi-)normalization in the usual, the generalized, the extended or the extended generalized sense. 
By default, \mbox{(semi-)}normalization means (semi-)normalization in the usual sense.

More general notion of semi-normalization is presented in the next section.

\section{Semi-normalized classes of differential equations}\label{sec:Semi-normalizedClasses}

By the definition introduced in~\cite[Section~3]{kuru2018a},
a class of differential equations~$\mathcal L|_\mathcal S$ is uniformly semi-normalized
if there exists a family of point-symmetry groups of systems from this class
with certain properties with respect to its equivalence groupoid and its equivalence group,
and these point-symmetry groups can be proper subgroups of the corresponding complete point-symmetry groups.
At the same time, it was noted therein that a proper subgroup of the equivalence group
can be used instead of the entire equivalence group as well.
The note was important since already in~\cite[Section~7]{kuru2018a}
we needed to classify Lie symmetries of a class
that can be assumed as uniformly semi-normalized only
with respect to a proper subgroup of its equivalence group.
Carefully analyzing the above results,
we also see that the group classification problem can be solved by the algebraic method
for classes that possess at least a part of properties required for uniformly semi-normalized classes
in~\cite{kuru2018a}.
This is why, following terminology and notation of~\cite{opan2022a,vane2020b},
in the present paper we change the terminology and extend the algebraic method of group classification
to more general classes of differential equations with weaker normalization properties.

\begin{definition}\label{def:Semi-normClass} 
Let $\mathcal G^H$ be the action groupoid of a subgroup $H$ of~$G^\sim$.
Suppose that a family
$N_{\mathcal S}:=\{N_\theta<G_\theta\mid\theta\in\mathcal S\}$
of subgroups of the point symmetry groups~$G_\theta$
with the associated subgroups $\mathcal N_\theta:=\{(\theta,\Phi,\theta)\mid\Phi\in N_\theta\}$
of the vertex groups~$\mathcal G_\theta$ for each $\theta\in\mathcal S$
satisfies the property $\mathcal N_{{\rm s}(\mathcal T)}\star\mathcal T=\mathcal T\star\mathcal N_{{\rm t}(\mathcal T)}$
for any $\mathcal T\in\mathcal G^H$.
In other words, the family~$N_{\mathcal S}$ is \emph{uniform} with respect to the action of~$H$ on~$\mathcal G^\sim$,
or \emph{$H$-uniform} for short. 
Then the Frobenius product
\[
\mathcal N^{\rm f}\star\mathcal G^H=\big\{\mathcal T\star\mathcal T'\mid
\mathcal T\in\mathcal N^{\rm f},\,\mathcal T'\in\mathcal G^H,\,{\rm t}(\mathcal T)={\rm s}(\mathcal T')\big\}
\]
with $\mathcal N^{\rm f}:=\sqcup_{\theta\in\mathcal S}\mathcal N_\theta$
is a subgroupoid of~$\mathcal G^\sim$,
which coincides with the image of~$\mathcal N^{\rm f}$
under the ${\rm s}$-action (resp.\ the ${\rm t}$-action) of~$H$ on~$\mathcal G^\sim$.
If $\mathcal N^{\rm f}\star\mathcal G^H=\mathcal G^\sim$,
we call the class~$\mathcal L|_{\mathcal S}$
\emph{semi-normalized with respect to the subgroup $H$ of~$G^\sim$
and the family $N_{\mathcal S}$ of subgroups of the point symmetry groups}, 
or \emph{$(H,N_{\mathcal S})$-semi-normalized} for brevity. 
\end{definition}

\begin{remark}
Since $G^\sim$ is the usual equivalence group of the class~$\mathcal L|_{\mathcal S}$, 
we can say more precisely that this class is $(H,N_{\mathcal S})$-semi-normalized
\emph{in the usual sense}.
The other kinds of semi-normalization, 
in the generalized sense, in the extended sense or in the generalized extended sense, 
which are associated with equivalence groups of the respective kinds, 
will be introduced elsewhere. 
\end{remark}

The case $\mathcal G^H=\mathcal S\rightrightarrows\mathcal S$ is degenerate;
then necessarily $N_\theta=G_\theta$ for any $\theta\in\mathcal S$, 
and thus $\mathcal G^\sim=\mathcal G^{\rm f}$ and $\pi_*G^\sim=G^\cap$,
which is not of interest in the sense of group classification.
At the same time, the other improper case,
where the corresponding subgroup~$H$ of~$G^\sim$ coincides with the entire equivalence group~$G^\sim$,
is important.
The class~$\mathcal L|_{\mathcal S}$ that is semi-normalized with respect to the group $H=G^\sim$
and the family $N_{\mathcal S}=\big\{N_\theta=\{{\rm id}\}\mid\theta\in\mathcal S\big\}$ is genuinely normalized 
since then $\mathcal G^\sim=\mathcal G^{G^\sim}$.
Here and in what follows ${\rm id}$ denotes the identity transformation in the space with the coordinates $(x,u)$.
If the class~$\mathcal L|_{\mathcal S}$ is semi-normalized with respect to the group $H=G^\sim$
and the family $N_{\mathcal S}=\{N_\theta=G_\theta\mid\theta\in\mathcal S\}$,
then it is literally \emph{semi-normalized}.
It is obvious that a normalized class 
or a class that is semi-normalized with respect to a subgroup of its equivalence group 
and a family of subgroups of the point symmetry groups of systems from this class
is semi-normalized in the above specific sense.
More generally, if $H=G^\sim$, then the class~$\mathcal L|_{\mathcal S}$ is called
semi-normalized with respect to the family $N_{\mathcal S}$ of subgroups of the point symmetry groups,
i.e., we avoid denominating $H$ in this specific situation. 
Moreover, we contract the term 
``a class that is semi-normalized with respect to a subgroup of its equivalence group 
and a family of subgroups of the point symmetry groups of systems from this class'', 
when talking about such classes in general, 
and just call them semi-normalized.

\begin{theorem}\label{thm:EquivGroupoidOfSemi-normClasses}
Suppose that the class~$\mathcal L|_{\mathcal S}$ is semi-normalized with respect to the subgroup $H$ of~$G^\sim$
and the symmetry-group family $N_{\mathcal S}=\{N_\theta\mid\theta\in\mathcal S\}$.

(i) Then $\mathcal G^\sim=\mathcal N^{\rm f}\star\mathcal G^H=\mathcal G^H\star\mathcal N^{\rm f}$,
$\mathcal N^{\rm f}$ is a normal subgroupoid of $\mathcal G^\sim$
and $\mathcal N_{{\rm s}(\mathcal T)}\star\mathcal T=\mathcal T\star\mathcal N_{{\rm t}(\mathcal T)}$
for any $\mathcal T\in\mathcal G^\sim$.

(ii) Furthermore, for each $\theta\in\mathcal S$, $N_\theta$ is a normal subgroup of~$G_\theta$,
$G^{\rm ess}_\theta:=G_\theta\cap\pi_*H$ is a subgroup of~$G_\theta$,
and the group~$G_\theta$ is the Frobenius product 
of~$G^{\rm ess}_\theta$ and~$N_\theta$, $G_\theta=G^{\rm ess}_\theta N_\theta$.

(iii) Systems from the class~$\mathcal L|_{\mathcal S}$ are similar (i.e., they are related by point transformations)
if and only if they are $H$-equivalent.
\end{theorem}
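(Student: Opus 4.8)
The plan is to derive all three parts from the single defining identity $\mathcal{N}^{\rm f}\star\mathcal{G}^H=\mathcal{G}^\sim$, the $H$-uniformity $\mathcal{N}_{{\rm s}(\mathcal{T})}\star\mathcal{T}=\mathcal{T}\star\mathcal{N}_{{\rm t}(\mathcal{T})}$ (available a priori only for $\mathcal{T}\in\mathcal{G}^H$), and the trivial but crucial observation that each $\mathcal{N}_\theta$ is a vertex group, so $\mathcal{N}_\theta\star\mathcal{R}=\mathcal{N}_\theta=\mathcal{R}\star\mathcal{N}_\theta$ whenever $\mathcal{R}\in\mathcal{N}_\theta$. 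The whole argument is then essentially one promotion of the uniformity identity, followed by specialization.

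For part~(i) the first equality is the definition, so the content is to upgrade $H$-uniformity to all of $\mathcal{G}^\sim$. Given $\mathcal{T}\in\mathcal{G}^\sim$ with ${\rm s}(\mathcal{T})=\theta$, ${\rm t}(\mathcal{T})=\tilde\theta$, I would factor $\mathcal{T}=\mathcal{R}\star\mathcal{T}'$ with $\mathcal{R}\in\mathcal{N}_\theta$ and $\mathcal{T}'\in\mathcal{G}^H$ of source $\theta$ and target $\tilde\theta$. Then $\mathcal{N}_\theta\star\mathcal{T}=\mathcal{N}_\theta\star\mathcal{R}\star\mathcal{T}'=\mathcal{N}_\theta\star\mathcal{T}'=\mathcal{T}'\star\mathcal{N}_{\tilde\theta}$, the last step being $H$-uniformity for $\mathcal{T}'$; symmetrically $\mathcal{T}\star\mathcal{N}_{\tilde\theta}=\mathcal{R}\star\mathcal{T}'\star\mathcal{N}_{\tilde\theta}=\mathcal{R}\star\mathcal{N}_\theta\star\mathcal{T}'=\mathcal{N}_\theta\star\mathcal{T}'$, so both sides coincide. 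This extended identity is exactly the statement that $\mathcal{N}^{\rm f}$ is normal in $\mathcal{G}^\sim$. The reversed factorization then follows at once: any $\mathcal{R}\star\mathcal{T}'\in\mathcal{N}^{\rm f}\star\mathcal{G}^H$ lies in $\mathcal{N}_\theta\star\mathcal{T}'=\mathcal{T}'\star\mathcal{N}_{\tilde\theta}\subseteq\mathcal{G}^H\star\mathcal{N}^{\rm f}$, and the opposite inclusion is obtained the same way, giving $\mathcal{N}^{\rm f}\star\mathcal{G}^H=\mathcal{G}^H\star\mathcal{N}^{\rm f}=\mathcal{G}^\sim$.

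For part~(ii) I would specialize part~(i) to the vertex groups. Restricting the extended identity to $\mathcal{T}\in\mathcal{G}_\theta$ gives $\mathcal{N}_\theta\star\mathcal{T}=\mathcal{T}\star\mathcal{N}_\theta$, which transports through the isomorphism $\Phi\mapsto(\theta,\Phi,\theta)$ to $N_\theta\trianglelefteq G_\theta$. That $G^{\rm ess}_\theta=G_\theta\cap\pi_*H$ is a subgroup is immediate, since $\pi_*$ is a homomorphism of $G^\sim$ into ${\rm Diff}^{\rm loc}_{(x,u)}$, so $\pi_*H$ is a subgroup there and its intersection with the subgroup $G_\theta$ is again a subgroup. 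For $G_\theta=G^{\rm ess}_\theta N_\theta$, take $\Phi\in G_\theta$ and factor $(\theta,\Phi,\theta)=\mathcal{T}_1\star\mathcal{T}_2$ with $\mathcal{T}_1\in\mathcal{N}^{\rm f}$, $\mathcal{T}_2\in\mathcal{G}^H$; matching sources and targets forces $\mathcal{T}_1=(\theta,\Phi_1,\theta)\in\mathcal{N}_\theta$ and $\mathcal{T}_2=(\theta,\Phi_2,\theta)\in\mathcal{G}_\theta\cap\mathcal{G}^H$, whence $\Phi_1\in N_\theta$, $\Phi_2\in G_\theta\cap\pi_*H=G^{\rm ess}_\theta$ and $\Phi=\Phi_2\circ\Phi_1$. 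The reverse inclusion is trivial and the normality of $N_\theta$ makes the product a genuine subgroup.

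Part~(iii) is then free: $H$-equivalence, that is, the existence of some $\mathcal{T}\in\mathcal{G}^H$ joining $\theta$ to $\tilde\theta$, trivially implies similarity since $\mathcal{G}^H\subseteq\mathcal{G}^\sim$; conversely, for $\mathcal{T}\in\mathcal{G}^\sim$ of source $\theta$ and target $\tilde\theta$ the factorization $\mathcal{T}=\mathcal{R}\star\mathcal{T}'$ has $\mathcal{R}\in\mathcal{N}_\theta$ fixing $\theta$, so $\mathcal{T}'\in\mathcal{G}^H$ already joins $\theta$ to $\tilde\theta$. The one genuinely non-formal step is the promotion of $H$-uniformity to all of $\mathcal{G}^\sim$ in part~(i); the care needed there is purely in keeping the composition order straight and in invoking $H$-uniformity only after the $\mathcal{G}^H$-factor has been isolated, while absorbing the $\mathcal{N}_\theta$-factor into its vertex group at the right moments. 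Everything else is a specialization of that identity.
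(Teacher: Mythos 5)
Your proof is correct and follows essentially the same route as the paper's: promote the $H$-uniformity identity to all of $\mathcal G^\sim$ by isolating the $\mathcal G^H$-factor of the semi-normalization decomposition and absorbing the $\mathcal N_\theta$-factor into its vertex group, then specialize to vertex groups for (ii) and read (iii) off the decomposition. The only cosmetic difference is that the paper first records the reversed decomposition $\mathcal T=\mathcal T_2\star\tilde{\mathcal T}_1$ and computes with it (and also gives a second, purely group-theoretic verification of the normality of $N_\theta$), whereas you work directly with $\mathcal T=\mathcal R\star\mathcal T'$; the manipulations are identical.
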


\begin{proof}
(i) The $(H,N_{\mathcal S})$-semi-normalization of the class~$\mathcal L|_{\mathcal S}$ means that
any element $\mathcal T\in\mathcal G^\sim$
can be represented in the form $\mathcal T=\mathcal T_1\star\mathcal T_2$
for some $\mathcal T_1\in\mathcal N_{{\rm s}(\mathcal T)}$
and some \smash{$\mathcal T_2\in\mathcal G^H\cap\mathcal G\big({\rm s}(\mathcal T),{\rm t}(\mathcal T)\big)$}.
Due to the uniformity of~$N_{\mathcal S}$, there exists \smash{$\tilde{\mathcal T}_1\in\mathcal N_{{\rm t}(\mathcal T)}$}
such that \smash{$\mathcal T_1\star\mathcal T_2=\mathcal T_2\star\tilde{\mathcal T}_1$}.
In other words, any admissible transformation $\mathcal T\in\mathcal G^\sim$
can also be represented in the form $\mathcal T=\mathcal T_2\star\tilde{\mathcal T}_1$
for some \smash{$\mathcal T_2\in\mathcal G^H\cap\mathcal G\big({\rm s}(\mathcal T),{\rm t}(\mathcal T)\big)$}
and some \smash{$\tilde{\mathcal T}_1\in\mathcal N_{{\rm t}(\mathcal T)}$}.

It is obvious that $\mathcal N^{\rm f}$ is a wide subgroupoid of $\mathcal G^\sim$ (as well as $\mathcal G^{\rm f}$),
and it coincides with its fundamental subgroupoid.
In the notation of the previous paragraph, for any $\mathcal T\in\mathcal G^\sim$ we have
\[
\mathcal N_{{\rm s}(\mathcal T)}\star\mathcal T
=\mathcal N_{{\rm s}(\mathcal T)}\star\mathcal T_2\star\tilde{\mathcal T}_1
=\mathcal T_2\star\mathcal N_{{\rm t}(\mathcal T)}\star\tilde{\mathcal T}_1
=\mathcal T_2\star\tilde{\mathcal T}_1\star\mathcal N_{{\rm t}(\mathcal T)}
=\mathcal T\star\mathcal N_{{\rm t}(\mathcal T)},
\]
which directly implies that $\mathcal N^{\rm f}$ is a normal subgroupoid of $\mathcal G^\sim$.
In other words, under the condition $\mathcal G^\sim=\mathcal N^{\rm f}\star\mathcal G^H$,
the uniformity of~$N_{\mathcal S}$ with respect to the actions of~$H$ on~$\mathcal G^\sim$
implies the uniformity of~$N_{\mathcal S}$ with respect to the similar actions of the entire group~$G^\sim$
or even of the entire groupoid~$\mathcal G^\sim$.

(ii) We fix an arbitrary $\theta\in\mathcal S$ and take an arbitrary~$\Phi\in G_\theta$.
Then $(\theta,\Phi,\theta)\in\mathcal G^\sim$ and,
since $\mathcal G^H\star\mathcal N^{\rm f}=\mathcal G^\sim$,
the point transformation~$\Phi$ admits the factorization $\Phi=(\pi_*\mathscr T)\circ\Phi_1$
for some $\mathscr T\in H$ and some $\Phi_1\in N_\theta$.
The element~$N_\theta$ of the family~$\mathcal N_{\mathcal S}$ is a subgroup of~$G_\theta$, $N_\theta<G_\theta$,
and hence the point transformation $\Phi_0:=\pi_*\mathscr T=\Phi\circ\Phi_1^{-1}$
also belongs to~$G_\theta$, and consequently to $G_\theta\cap\pi_*H=:G^{\rm ess}_\theta$,
which is a group as intersection of two groups and is thus a subgroup of~$G_\theta$.
This implies that for any~$\Phi\in G_\theta$ we have the representation $\Phi=\Phi_0\circ\Phi_1$,
where $\Phi_0\in G^{\rm ess}_\theta$ and $\Phi_1\in N_\theta$, i.e.,
$G_\theta\subseteq G^{\rm ess}_\theta N_\theta$.
It is obvious that $G_\theta\supseteq G^{\rm ess}_\theta N_\theta$ as well,
which implies $G_\theta= G^{\rm ess}_\theta N_\theta$.

The fact that $\mathcal N^{\rm f}$ is a normal subgroupoid of $\mathcal G^\sim$
directly implies that $\mathcal N_\theta$ is a normal subgroup of~$\mathcal G_\theta$, i.e.,
$N_\theta$ is a normal subgroup of~$G_\theta$.
We can also immediately prove the last claim in terms of groups.
For an arbitrary~$\Phi\in G_\theta$ and an arbitrary~$\tilde\Phi\in N_\theta$,
we consider the composition $\Phi\circ\tilde\Phi\circ\Phi^{-1}$.
As an element of~$G_\theta$, the transformation~$\Phi$ admits
the factorization $\Phi=\Phi_0\circ\Phi_1$
with some $\Phi_0\in G^{\rm ess}_\theta$ and some $\Phi_1\in N_\theta$.
Since $G^{\rm ess}_\theta<\pi_*H$, we have $(\theta,\Phi_0,\theta)\in\mathcal G^H$.
Due to the uniformity of the family~$N_{\mathcal S}$ with respect to the action of~$H$ on~$\mathcal G^\sim$,
we obtain $N_\theta=\Phi_0\circ N_\theta\circ\Phi_0^{-1}$.
Hence the composition
$\Phi\circ\tilde\Phi\circ\Phi^{-1}=
\Phi_0\circ\Phi_1\circ\tilde\Phi\circ\Phi_1^{-1}\circ\Phi_0^{-1}$ belongs to~$N_\theta$.
Thus we have that $N_\theta$ is a normal subgroup of~$G_\theta$, $N_\theta\unlhd G_\theta$.

(iii) If the systems $\mathcal L_\theta$ and~$\mathcal L_{\theta'}$ from the class~$\mathcal L|_{\mathcal S}$
are related  by a point transformation~$\Phi$ in the space with coordinates~$(x,u)$,
$\mathcal L_{\theta'}=\Phi_*\mathcal L_\theta$,
then $\mathcal T=(\theta,\Phi,\theta')$ is an admissible transformation of this class, $\mathcal T\in\mathcal G^\sim$.
Hence there exist $\mathcal T_1\in\mathcal N_{\theta}$ and $\mathcal T_2\in\mathcal G^H\cap\mathcal G(\theta,\theta')$
such that $\mathcal T=\mathcal T_1\star\mathcal T_2$.
The existence of $\mathcal T_2\in\mathcal G^H\cap\mathcal G(\theta,\theta')$ means that
the systems~$\mathcal L_\theta$ and~$\mathcal L_{\theta'}$ are $H$-equivalent.
\end{proof}

The member~$N_\theta$ of the family $\mathcal N_{\mathcal S}$ 
and the subgroup~$G^{\rm ess}_\theta:=G_\theta\cap\pi_*H$ of~$G_\theta$ 
are called the \emph{uniform point-symmetry group} 
and the \emph{essential point-symmetry group} of the system~$\mathcal L_\theta$ 
that are associated with the $(H,N_{\mathcal S})$-semi-normalization of the class~$\mathcal L|_{\mathcal S}$, 
respectively.
The knowledge of a family of uniform point-symmetry groups trivializes them in the following sense:
since in view of Theorem~\ref{thm:EquivGroupoidOfSemi-normClasses}(ii)
the group~$G_\theta$ is factorized over~$N_\theta$ for each~$\theta$,
then we only need to find the subgroup~$G^{\rm ess}_\theta$ in order to construct~$G_\theta$.
If either $N_\theta=G_\theta$ for any $\theta\in\mathcal S$ or $N_\theta=\{{\rm id}\}$ for any $\theta\in\mathcal S$,
then the factorization is trivial.
Theorem~\ref{thm:EquivGroupoidOfSemi-normClasses}(i) implies that 
\emph{the family~$N_{\mathcal S}$ is not only $H$-uniform but also $G^\sim$-uniform and even $\mathcal G^\sim$-uniform}.
We can interpret each of these uniformities
as the equivariance of the map $\mathcal S\ni\theta\mapsto (\theta,N_\theta)$
under the analogous action on~$\mathcal S$
and on $\{(\theta,K_\theta)\mid \theta\in\mathcal S,\,K_\theta\leq G_\theta\}$. 

It also follows from Theorem~\ref{thm:EquivGroupoidOfSemi-normClasses}(i) 
that the condition $\mathcal G^\sim=\mathcal N^{\rm f}\star\mathcal G^H$ in Definition~\ref{def:Semi-normClass}
can be replaced by the condition $\mathcal G^\sim=\mathcal G^H\star\mathcal N^{\rm f}$
or even by the weaker condition $\mathcal G^\sim=\mathcal N^{\rm f}\star\mathcal G^H\star\mathcal N^{\rm f}$
since these conditions are equivalent if the family~$N_{\mathcal S}$ is uniform with respect to the action of~$H$ on~$\mathcal G^\sim$.
Roughly speaking, each of these conditions means that the entire equivalence groupoid~$\mathcal G^\sim$
is generated by distinguished equivalence transformations and transformations from uniform point symmetry groups.

Given a $(H,N_{\mathcal S})$-semi-normalized class~$\mathcal L|_{\mathcal S}$, 
one can extend the subgroup~$H$ of~$G^\sim$ and elements of~$N_{\mathcal S}$ 
in such a way that this class will be semi-normalized with respect to the extended objects.  

\begin{theorem}\label{thm:ExdendingSemi-normalization}
Let the class~$\mathcal L|_{\mathcal S}$ be semi-normalized with respect to the subgroup $H$ of~$G^\sim$
and the symmetry-group family $N_{\mathcal S}=\{N_\theta\mid\theta\in\mathcal S\}$, 
and let $G^\cap$ denote the kernel group of~$\mathcal L|_\mathcal S$. 
Then the class~$\mathcal L|_\mathcal S$ is 
$(\bar H,N_{\mathcal S})$- and $(\bar H,\bar N_{\mathcal S})$-semi-normalized, 
in particular
$(G^\sim,N_{\mathcal S})$- and $(G^\sim,\bar N_{\mathcal S})$-semi-normalized, 
where $\bar N_{\mathcal S}:=\{N_\theta G^\cap\mid N_\theta\in N_{\mathcal S}\}$ 
with $N_\theta G^\cap$ denoting the Frobenius product of $N_\theta\unlhd G\theta$ and $G^\cap\leq G\theta$
and $\bar H$ is any subgroup of~$G^\sim$ that contains~$H$, $H\leq\bar H\leq G^\sim$.
\end{theorem}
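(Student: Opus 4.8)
The plan is to unwind Definition~\ref{def:Semi-normClass} and check, for each admissible pair $(\bar H,\tilde N_{\mathcal S})$ with $\tilde N_{\mathcal S}\in\{N_{\mathcal S},\bar N_{\mathcal S}\}$, the two defining requirements: that the family $\tilde N_{\mathcal S}$ is $\bar H$-uniform, and that the covering identity $\tilde{\mathcal N}^{\rm f}\star\mathcal G^{\bar H}=\mathcal G^\sim$ holds. The four asserted semi-normalizations are then the cases $\bar H\in[H,G^\sim]$, the two ``in particular'' statements being $\bar H=G^\sim$. First I would record that each $\bar N_\theta=N_\theta G^\cap$ is genuinely a subgroup of $G_\theta$: by Theorem~\ref{thm:EquivGroupoidOfSemi-normClasses}(ii) we have $N_\theta\unlhd G_\theta$, and $G^\cap=\cap_{\sigma}G_\sigma\leq G_\theta$, so the product of a normal subgroup with a subgroup is a subgroup and $N_\theta G^\cap=G^\cap N_\theta$; thus Definition~\ref{def:Semi-normClass} is applicable to $\bar N_{\mathcal S}$.

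The covering identities are purely formal monotonicity. Since $H\leq\bar H$ we have $\mathcal G^H\subseteq\mathcal G^{\bar H}$, and since $N_\theta\leq\bar N_\theta$ we have $\mathcal N^{\rm f}\subseteq\bar{\mathcal N}^{\rm f}$. Both $\bar{\mathcal N}^{\rm f}$ and $\mathcal G^{\bar H}$ are subsets of $\mathcal G^\sim$ on which the partial multiplication of $\mathcal G^\sim$ restricts, so any Frobenius product formed inside them lands in $\mathcal G^\sim$. Hence, using the hypothesis $\mathcal N^{\rm f}\star\mathcal G^H=\mathcal G^\sim$ from Theorem~\ref{thm:EquivGroupoidOfSemi-normClasses}(i), I would sandwich
\[
\mathcal G^\sim=\mathcal N^{\rm f}\star\mathcal G^H\subseteq\mathcal N^{\rm f}\star\mathcal G^{\bar H}\subseteq\bar{\mathcal N}^{\rm f}\star\mathcal G^{\bar H}\subseteq\mathcal G^\sim,
\]
which forces equality throughout and settles the covering condition for both $(\bar H,N_{\mathcal S})$ and $(\bar H,\bar N_{\mathcal S})$ at once.

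The only non-formal point is the uniformity of the enlarged family. For $N_{\mathcal S}$ it is immediate: Theorem~\ref{thm:EquivGroupoidOfSemi-normClasses}(i) already gives $\mathcal N_{{\rm s}(\mathcal T)}\star\mathcal T=\mathcal T\star\mathcal N_{{\rm t}(\mathcal T)}$ for \emph{every} $\mathcal T\in\mathcal G^\sim$, and $\mathcal G^{\bar H}\subseteq\mathcal G^\sim$, so $N_{\mathcal S}$ is a fortiori $\bar H$-uniform. For $\bar N_{\mathcal S}$ the plan is to verify the equivalent conjugation form $\Phi\,\bar N_\theta\,\Phi^{-1}=\bar N_{\tilde\theta}$ for each $\mathcal T=(\theta,\Phi,\tilde\theta)\in\mathcal G^{\bar H}$, where by definition of the action groupoid $\Phi=\pi_*\mathscr T$ for some $\mathscr T\in\bar H\leq G^\sim$. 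Splitting the conjugation over the product, I would write
\[
\Phi\,\bar N_\theta\,\Phi^{-1}=(\Phi N_\theta\Phi^{-1})(\Phi G^\cap\Phi^{-1}).
\]
The first factor equals $N_{\tilde\theta}$ by the $\mathcal G^\sim$-uniformity of $N_{\mathcal S}$ just invoked. For the second factor, here is the crux of the argument and the step I expect to be the real obstacle, since for a \emph{general} admissible $\Phi$ there is no reason for conjugation to preserve the kernel group. The point is that uniformity is only required against equivalence-generated arrows, so $\Phi=\pi_*\mathscr T$ with $\mathscr T\in G^\sim$. Then, using that $\tilde G^\cap$ is a normal subgroup of $G^\sim$ with $\pi_*\tilde G^\cap=G^\cap$ and that $\pi_*$ is a group homomorphism,
\[
\Phi G^\cap\Phi^{-1}=(\pi_*\mathscr T)\,\pi_*(\tilde G^\cap)\,(\pi_*\mathscr T)^{-1}=\pi_*\big(\mathscr T\,\tilde G^\cap\,\mathscr T^{-1}\big)=\pi_*\tilde G^\cap=G^\cap.
\]
Combining the two factors yields $\Phi\,\bar N_\theta\,\Phi^{-1}=N_{\tilde\theta}G^\cap=\bar N_{\tilde\theta}$, which is the desired $\bar H$-uniformity of $\bar N_{\mathcal S}$. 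Together with the covering identities this establishes all four semi-normalizations and completes the proof.
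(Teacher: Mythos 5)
Your proposal is correct and follows essentially the same route as the paper's own proof: establish that $\bar N_\theta=N_\theta G^\cap$ is a subgroup via $N_\theta\unlhd G_\theta$, settle both covering identities with the same sandwich $\mathcal G^\sim=\mathcal N^{\rm f}\star\mathcal G^H\subseteq\mathcal N^{\rm f}\star\mathcal G^{\bar H}\subseteq\bar{\mathcal N}^{\rm f}\star\mathcal G^{\bar H}\subseteq\mathcal G^\sim$, get $\bar H$-uniformity of $N_{\mathcal S}$ from Theorem~\ref{thm:EquivGroupoidOfSemi-normClasses}(i), and derive uniformity of $\bar N_{\mathcal S}$ from the normality of $\tilde G^\cap$ in $G^\sim$ (the paper phrases this elementwise as $\bar\Psi\circ\Xi=\Xi'\circ\bar\Psi$ rather than via conjugation of the whole product, but the content is identical). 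The step you flagged as the potential obstacle is handled in the paper exactly as you handle it.
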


\begin{proof}
For each $\theta\in\mathcal S$,
the uniform symmetry group~$N_\theta$ and the kernel symmetry group~$G^\cap$
are a normal subgroup and a subgroup of~$G_\theta$
in view of Theorem~\ref{thm:EquivGroupoidOfSemi-normClasses}
and by the definition of~$G^\cap$, respectively.
Hence the Frobenius product $\bar N_\theta:=G^\cap N_\theta$ is a subgroup~$G_\theta$ for any $\theta\in\mathcal S$.

In view of Theorem~\ref{thm:EquivGroupoidOfSemi-normClasses}(i), 
the symmetry-subgroup family~$\mathcal N_{\mathcal S}$ 
is uniform with respect to any subgroup~$\bar H$ of~$G^\sim$. 
We show that this is also the case for the symmetry-subgroup family~$\bar N_{\mathcal S}$.

We take arbitrary $\mathcal T=(\theta,\bar\Psi,\theta')\in\mathcal G^{\bar H}$
and $\bar\Phi\in\bar N_\theta$.
By the definition of~$\bar N_\theta$,
there exists $\Xi\in G^\cap$ and $\Phi\in N_\theta$ such that $\bar\Phi=\Phi\circ\Xi$.
The uniformity of~$N_{\mathcal S}$ with respect to the action of~$\bar H$ on~$\mathcal G^\sim$
implies that there exists $\Phi'\in N_{\theta'}$ such that $\bar\Psi\circ\Phi=\Phi'\circ\bar\Psi$.
Since $G^\cap\unlhd\pi_*G^\sim$ and $\bar H\leq G^\sim$,
we have $\bar\Psi\circ\Xi=\Xi'\circ\bar\Psi$ for some $\Xi'\in G^\cap$.
As a result,
\begin{gather*}
\bar\Psi\circ\bar\Phi
=\bar\Psi\circ\Phi\circ\Xi
=\Phi'\circ\bar\Psi\circ\Xi
=\Phi'\circ\Xi'\circ\bar\Psi
=\bar\Phi'\circ\bar\Psi,
\end{gather*}
where $\bar\Phi':=\Phi'\circ\Xi'\in\bar N_{\theta'}$.

It is obvious that 
$\mathcal G^\sim=\mathcal N^{\rm f}\star\mathcal G^H
\subseteq\mathcal N^{\rm f}\star\mathcal G^{\bar H}
\subseteq\bar{\mathcal N}^{\rm f}\star\mathcal G^{\bar H}
\subseteq\mathcal G^\sim$ if $H\leq\bar H$.
As a result, $\mathcal G^\sim=\mathcal N^{\rm f}\star\mathcal G^{\bar H}=\bar{\mathcal N}^{\rm f}\star\mathcal G^{\bar H}$.
The two limit cases are $\bar H=H$ and $\bar H=G^\sim$.
\end{proof}

We call the claim from Theorem~\ref{thm:EquivGroupoidOfSemi-normClasses}(ii)
on the factorization $G_\theta=G^{\rm ess}_\theta N_\theta$
the \emph{theorem on factoring out point-symmetry groups within semi-normalized classes}.
The infinitesimal version of this claim
can be called the \emph{theorem on factoring out maximal Lie invariance algebras within semi-normalized classes}.
This version follows immediately from Theorem~\ref{thm:EquivGroupoidOfSemi-normClasses}
if we replace the groups by the corresponding algebras of generators of the one-parameter subgroups of these groups.

\begin{theorem}\label{thm:OnInvAlgsInSemi-normClasses}
Suppose that a class of differential equations~$\mathcal L|_{\mathcal S}$ is semi-normalized
with respect to a subgroup~$H$ of~$G^\sim$
and a family of symmetry subgroups~$\mathcal N_{\mathcal S}=\{N_\theta\mid\theta\in\mathcal S\}$,
and $\mathfrak h$ is a subalgebra of~$\mathfrak g^\sim$ that corresponds to the subgroup~$H$.
Then, for each $\theta\in\mathcal S$, the Lie algebras~$\mathfrak g^{\rm ess}_\theta$ and~$\mathfrak n_\theta$
that are associated with the groups~$G^{\rm ess}_\theta$ and~$N_\theta$
are a subalgebra and an ideal of the maximal Lie invariance algebra~$\mathfrak g_\theta$
of the system~$\mathcal L_\theta\in\mathcal L|_{\mathcal S}$, respectively.
Moreover, the algebra~$\mathfrak g_\theta$ is the sum of~$\mathfrak g^{\rm ess}_\theta$ and~$\mathfrak n_\theta$,
$\mathfrak g_\theta=\mathfrak g^{\rm ess}_\theta+\mathfrak n_\theta$,
and $\mathfrak g^{\rm ess}_\theta=\mathfrak g_\theta\cap\pi_*\mathfrak h$.
\end{theorem}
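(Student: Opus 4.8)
The plan is to obtain every assertion by infinitesimalizing the corresponding group-theoretic statement already established in Theorem~\ref{thm:EquivGroupoidOfSemi-normClasses}(ii), using the standard correspondence between (pseudo)groups of point transformations and their Lie algebras of generators of one-parameter subgroups. Fix $\theta\in\mathcal S$. Since $\mathfrak g_\theta$, $\mathfrak g^{\rm ess}_\theta$ and $\mathfrak n_\theta$ are, by definition, the Lie algebras of generators of the one-parameter subgroups of $G_\theta$, $G^{\rm ess}_\theta$ and $N_\theta$, respectively, and since $G^{\rm ess}_\theta\leq G_\theta$ and $N_\theta\unlhd G_\theta$ by Theorem~\ref{thm:EquivGroupoidOfSemi-normClasses}(ii), the first two claims are immediate: a subgroup gives a subalgebra, whence $\mathfrak g^{\rm ess}_\theta$ is a subalgebra of $\mathfrak g_\theta$; a normal subgroup gives an ideal, whence $\mathfrak n_\theta$ is an ideal of $\mathfrak g_\theta$. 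Concretely, for the ideal property I would differentiate the relation $\Phi\circ N_\theta\circ\Phi^{-1}=N_\theta$ (valid for every $\Phi\in G_\theta$, normality being part of Theorem~\ref{thm:EquivGroupoidOfSemi-normClasses}(ii)) along two one-parameter subgroups, which turns the conjugation invariance of $N_\theta$ into the $\mathrm{ad}$-invariance $[\mathfrak g_\theta,\mathfrak n_\theta]\subseteq\mathfrak n_\theta$.

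For the intersection formula $\mathfrak g^{\rm ess}_\theta=\mathfrak g_\theta\cap\pi_*\mathfrak h$, I would start from the defining equality $G^{\rm ess}_\theta=G_\theta\cap\pi_*H$ of Theorem~\ref{thm:EquivGroupoidOfSemi-normClasses}(ii). Because $\mathfrak h$ is the subalgebra of~$\mathfrak g^\sim$ corresponding to~$H$, the projection $\pi_*\mathfrak h$ is the Lie algebra of generators of one-parameter subgroups of $\pi_*H$. A vector field generates a one-parameter subgroup in $G_\theta\cap\pi_*H$ if and only if it generates one in both $G_\theta$ and $\pi_*H$; hence the Lie algebra of the intersection is the intersection of the Lie algebras, and passing to generators in $G^{\rm ess}_\theta=G_\theta\cap\pi_*H$ yields exactly $\mathfrak g^{\rm ess}_\theta=\mathfrak g_\theta\cap\pi_*\mathfrak h$.

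The only assertion requiring a genuine argument is the sum decomposition $\mathfrak g_\theta=\mathfrak g^{\rm ess}_\theta+\mathfrak n_\theta$, obtained by infinitesimalizing the Frobenius factorization $G_\theta=G^{\rm ess}_\theta N_\theta$. Given $Q\in\mathfrak g_\theta$ generating a one-parameter subgroup $\Phi_\varepsilon\in G_\theta$ with $\Phi_0=\mathrm{id}$ and $\left.\frac{{\rm d}}{{\rm d}\varepsilon}\Phi_\varepsilon\right|_{\varepsilon=0}=Q$, the factorization provides $\Phi_\varepsilon=\Phi^{\rm ess}_\varepsilon\circ\Phi^{\rm n}_\varepsilon$ with $\Phi^{\rm ess}_\varepsilon\in G^{\rm ess}_\theta$ and $\Phi^{\rm n}_\varepsilon\in N_\theta$; differentiating at $\varepsilon=0$ gives $Q=Q^{\rm ess}+Q^{\rm n}$ with $Q^{\rm ess}\in\mathfrak g^{\rm ess}_\theta$ and $Q^{\rm n}\in\mathfrak n_\theta$, whence $\mathfrak g_\theta\subseteq\mathfrak g^{\rm ess}_\theta+\mathfrak n_\theta$; the reverse inclusion is clear from $\mathfrak g^{\rm ess}_\theta,\mathfrak n_\theta\subseteq\mathfrak g_\theta$. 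The main obstacle is precisely this last differentiation: it presupposes that the two factors can be chosen to depend smoothly on~$\varepsilon$ and to reduce to $\mathrm{id}$ at $\varepsilon=0$, i.e.\ that the product map $G^{\rm ess}_\theta\times N_\theta\to G_\theta$ admits a smooth local section near the identity. For the pseudogroups arising here this is the delicate point (compare the footnote on pseudogroups in the Conventions), and I would justify it by noting that the normality of~$N_\theta$ makes $G^{\rm ess}_\theta N_\theta$ a subgroup with the product map a submersion onto a neighbourhood of $\mathrm{id}$, so that such a section exists and the differential of the product map at $(\mathrm{id},\mathrm{id})$ has image $\mathfrak g^{\rm ess}_\theta+\mathfrak n_\theta$.
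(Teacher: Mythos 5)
Your proposal is correct and follows essentially the same route as the paper, which simply asserts that the theorem ``follows immediately from Theorem~\ref{thm:EquivGroupoidOfSemi-normClasses} if we replace the groups by the corresponding algebras of generators of the one-parameter subgroups of these groups.'' You supply more detail than the paper does, and your identification of the smooth-factorization issue for the sum decomposition is exactly the pseudogroup subtlety the authors acknowledge in their footnote but do not resolve either.
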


\section{Disjoint and non-disjoint semi-normalization}\label{sec:DisjointSemi-normalization}

\begin{definition}\label{def:DisjointedlySemi-normClass}
Under the conditions of Definition~\ref{def:Semi-normClass},
we call the class~$\mathcal L|_{\mathcal S}$
\emph{disjointedly semi-normalized with respect to the subgroup $H$ of~$G^\sim$
and the family $N_{\mathcal S}$ of subgroups of the point symmetry groups},
or \emph{disjointedly $(H,N_{\mathcal S})$-semi-normalized} for short, 
if in addition ${N_\theta\cap\pi_*H=\{{\rm id}\}}$ for any $\theta\in\mathcal S$. 
\end{definition}

If the class~$\mathcal L|_{\mathcal S}$ is disjointedly $(H,N_{\mathcal S})$-semi-normalized, 
then $\mathcal N^{\rm f}\cap\mathcal G^H=\mathcal S\rightrightarrows\mathcal S$.%
\footnote{%
Indeed, let $\mathcal T\in\mathcal N^{\rm f}\cap\mathcal G^H$. 
Then $\mathcal T=(\theta,\Phi,\theta)=(\theta,\pi_*\mathscr T,\mathscr T_*\theta)$ 
for some $\theta\in\mathcal S$, $\Phi\in G_\theta$ and $\mathscr T\in H$. 
Hence $\Phi=\pi_*\mathscr T\in N_\theta\cap\pi_*H$, and thus $\Phi={\rm id}$, 
i.e., $\mathcal T\in\mathcal S\rightrightarrows\mathcal S$. 
Therefore, 
$\mathcal S\rightrightarrows\mathcal S\subseteq\mathcal N^{\rm f}\cap\mathcal G^H
\subseteq\mathcal S\rightrightarrows\mathcal S$, 
which means that $\mathcal N^{\rm f}\cap\mathcal G^H=\mathcal S\rightrightarrows\mathcal S$.
}
The inverse implication does not hold in general%
\footnote{%
Let $\mathcal E$ be the class of the (1+1)-dimensional linear evolution equations 
${\rm e}^\beta u_t-{\rm e}^\beta u_{xx}+(x+\alpha){\rm e}^{x^3+\beta}u=0$, 
where all variables and parameters are real, 
the arbitrary-element tuple is $\theta=(\alpha,\beta)$ with arbitrary constants~$\alpha$ and~$\beta$. 
The arbitrary element~$\beta$ is artificial and inessential and can be set to be equal to zero 
by gauge equivalence transformations. 
For the class $\mathcal L|_{\mathcal S}=\mathcal E$, we have $\mathcal S=\mathbb R^2$ and
$\mathcal G^\sim=\big\{(\theta,\Phi,\tilde\theta)\mid\tilde\alpha=\alpha,\,\Phi\in G_\theta\big\} 
=\mathcal G^{\rm f}\star\mathcal G^{\rm g}$ 
with the action groupoid 
\smash{$\mathcal G^{\rm g}=\big\{(\theta,{\rm id},\tilde\theta)\mid\tilde\alpha=\alpha\big\}$} 
of the gauge equivalence group~$G^{\rm g}$ of the class~$\mathcal E$. 
The equivalence group~$G^\sim$ of this class
is constituted by the point transformations
$\tilde t=t+\delta_0$, $\tilde x=x$, $\tilde u=\delta_1u$, 
$\tilde\alpha=\alpha$, $\tilde\beta=F(\alpha,\beta)$, 
where $\delta_0$ and~$\delta_1$ are arbitrary constants with $\delta_1\ne0$, 
and $F$ is an arbitrary sufficiently smooth function of $(\alpha,\beta)$ with $F_\beta\ne0$.
The group~$G^{\rm g}$ is the normal subgroup of~$G^\sim$ singled out by the constraints $\delta_0=0$ and $\delta_1=1$.
For any $\theta\in\mathcal S$, the group~$G_\theta$ consists of the point transformations
$\tilde t=t+\delta_0$, $\tilde x=x$, $\tilde u=\delta_1u+h(t,x)$, 
where $\delta_0$ and~$\delta_1$ are arbitrary constants with $\delta_1\ne0$, 
and $h$ is an arbitrary solution of~$\mathcal E_\theta$.
We take the symmetry-subgroup family~$\mathcal N_{\mathcal S}=\{N_\theta=G_\theta\mid\theta\in\mathcal S\}$ 
and the subgroup~$H$ of~$G^\sim$ constituted by the transformations 
$\tilde t=t+\delta_0$, $\tilde x=x$, $\tilde u=u$, $\tilde\alpha=\alpha$, $\tilde\beta=\beta+\delta_0$.
The class~$\mathcal L|_{\mathcal S}$ is $(H,N_{\mathcal S})$-semi-normalized, 
$\mathcal N^{\rm f}\cap\mathcal G^H=\mathcal S\rightrightarrows\mathcal S$ and 
$N_\theta\cap\pi_*H=\pi_*H\ne\{{\rm id}\}$ for any $\theta\in\mathcal S$. 
} 
but is obvious if the class~$\mathcal L|_{\mathcal S}$ does not admit gauge admissible transformations.%
\footnote{%
Fix an arbitrary $\theta\in\mathcal S$.
Let $\Phi\in N_\theta\cap\pi_*H$, i.e., $\Phi\in N_\theta$ and $\Phi=\pi_*\mathscr T$ for some $\mathscr T\in H$.
Then 
$\mathcal T_1:=(\theta,\Phi,\theta)\in\mathcal N^{\rm f}$,
$\mathcal T_2:=(\theta,\Phi,\mathscr T_*\theta)\in\mathcal G^H$,
$\mathcal T_1{}^{-1}\star\mathcal T_2=(\theta,{\rm id},\mathscr T_*\theta)\in\mathcal G^\sim$
and thus $\mathscr T_*\theta=\theta$ 
since the class~$\mathcal L|_{\mathcal S}$ does not admit gauge admissible transformations.
Therefore, $\mathcal T_1=\mathcal T_2\in\mathcal N^{\rm f}\cap\mathcal G^H=\mathcal S\rightrightarrows\mathcal S$, 
which means that $\Phi={\rm id}$. 
As a result, ${N_\theta\cap\pi_*H=\{{\rm id}\}}$ for any $\theta\in\mathcal S$. 
} 

In particular, the class~$\mathcal L|_{\mathcal S}$ that is literally normalized
is in fact disjointedly semi-normalized with respect to the group $H=G^\sim$
and the family $N_{\mathcal S}=\big\{N_\theta=\{{\rm id}\}\mid\theta\in\mathcal S\big\}$. 
It is also clear that each disjointedly semi-normalized class is literally semi-normalized.

Disjointedly semi-normalized classes have nicer properties 
than those presented in Theorem~\ref{thm:EquivGroupoidOfSemi-normClasses}
for general semi-normalized classes.

\begin{theorem}\label{thm:DisjontedlylySemi-normClasses}
If the class~$\mathcal L|_{\mathcal S}$ is disjointedly semi-normalized
with respect to a subgroup~$H$ of~$G^\sim$
and a symmetry-subgroup family~$\mathcal N_{\mathcal S}=\{N_\theta\mid\theta\in\mathcal S\}$,
then

(i)
any admissible transformation $\mathcal T\in\mathcal G^\sim$
possesses a \emph{unique} decomposition of the form $\mathcal T=\mathcal T_1\star\mathcal T_2$
with $\mathcal T_1\in\mathcal N_{{\rm s}(\mathcal T)}$
and \smash{$\mathcal T_2\in\mathcal G^H\cap\mathcal G\big({\rm s}(\mathcal T),{\rm t}(\mathcal T)\big)$}
as well as one of the form $\mathcal T=\mathcal T_2\star\tilde{\mathcal T}_1$ with the same~$\mathcal T_2$
and with \smash{$\tilde{\mathcal T}_1\in\mathcal N_{{\rm t}(\mathcal T)}$};

(ii)
for each $\theta\in\mathcal S$, the point symmetry group~$G_\theta$ of the system~$\mathcal L_\theta\in\mathcal L|_{\mathcal S}$
splits over~$N_\theta$;
more specifically, $N_\theta$ is a normal subgroup of~$G_\theta$,
$G^{\rm ess}_\theta:=G_\theta\cap\pi_*H$ is a subgroup of~$G_\theta$,
and the group~$G_\theta$ is the semidirect product of~$G^{\rm ess}_\theta$ acting on~$N_\theta$,
$G_\theta=G^{\rm ess}_\theta\ltimes N_\theta$;

(iii)
for each $\theta\in\mathcal S$, the Lie algebras~$\mathfrak g^{\rm ess}_\theta$ and~$\mathfrak n_\theta$
that are associated with the groups~$G^{\rm ess}_\theta$ and~$N_\theta$
are a subalgebra and an ideal of the maximal Lie invariance algebra~$\mathfrak g_\theta$
of the system~$\mathcal L_\theta\in\mathcal L|_{\mathcal S}$, respectively;
moreover, the algebra~$\mathfrak g_\theta$ is the semidirect sum
$\mathfrak g_\theta=\mathfrak g^{\rm ess}_\theta\lsemioplus\mathfrak n_\theta$,
and $\mathfrak g^{\rm ess}_\theta=\mathfrak g_\theta\cap\pi_*\mathfrak h$,
where $\mathfrak h$ is a subalgebra of the equivalence algebra~$\mathfrak g^\sim$ of~$\mathcal L|_{\mathcal S}$
that corresponds to the subgroup~$H$.
\end{theorem}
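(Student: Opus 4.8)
The plan is to obtain all three parts from Theorem~\ref{thm:EquivGroupoidOfSemi-normClasses} by feeding in the extra disjointness hypothesis $N_\theta\cap\pi_*H=\{{\rm id}\}$, equivalently the groupoid identity $\mathcal N^{\rm f}\cap\mathcal G^H=\mathcal S\rightrightarrows\mathcal S$ recorded just after Definition~\ref{def:DisjointedlySemi-normClass}. The \emph{existence} of the two factorizations in part~(i) is already supplied by the $(H,N_{\mathcal S})$-semi-normalization and by $\mathcal G^\sim=\mathcal G^H\star\mathcal N^{\rm f}$ from Theorem~\ref{thm:EquivGroupoidOfSemi-normClasses}(i), so for part~(i) only uniqueness remains to be shown.

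For part~(i) I would argue as follows. Suppose $\mathcal T=\mathcal T_1\star\mathcal T_2=\mathcal T_1'\star\mathcal T_2'$ with $\mathcal T_1,\mathcal T_1'\in\mathcal N_{{\rm s}(\mathcal T)}$ and $\mathcal T_2,\mathcal T_2'\in\mathcal G^H\cap\mathcal G\big({\rm s}(\mathcal T),{\rm t}(\mathcal T)\big)$. Canceling $\mathcal T_1$ on the left and $\mathcal T_2'$ on the right yields $\mathcal T_1{}^{-1}\star\mathcal T_1'=\mathcal T_2\star\mathcal T_2'{}^{-1}$. The left-hand side lies in the vertex group of $\mathcal N^{\rm f}$ over ${\rm s}(\mathcal T)$, while the right-hand side lies in the vertex group of $\mathcal G^H$ over ${\rm s}(\mathcal T)$, since $\mathcal G^H$ is a groupoid and hence closed under inverses and composition. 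Both therefore equal an element of $\mathcal N^{\rm f}\cap\mathcal G^H=\mathcal S\rightrightarrows\mathcal S$, forcing $\mathcal T_1=\mathcal T_1'$ and $\mathcal T_2=\mathcal T_2'$. Uniqueness of the dual factorization $\mathcal T=\mathcal T_2\star\tilde{\mathcal T}_1$ then follows from the uniformity relation $\mathcal T_1\star\mathcal T_2=\mathcal T_2\star\tilde{\mathcal T}_1$, which determines $\tilde{\mathcal T}_1=\mathcal T_2{}^{-1}\star\mathcal T_1\star\mathcal T_2\in\mathcal N_{{\rm t}(\mathcal T)}$ uniquely once the now-unique $\mathcal T_1,\mathcal T_2$ are fixed.

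For part~(ii), Theorem~\ref{thm:EquivGroupoidOfSemi-normClasses}(ii) already delivers $N_\theta\unlhd G_\theta$, the fact that $G^{\rm ess}_\theta=G_\theta\cap\pi_*H$ is a subgroup, and the Frobenius factorization $G_\theta=G^{\rm ess}_\theta N_\theta$. To upgrade this to an internal semidirect product it remains only to verify the trivial-intersection condition $G^{\rm ess}_\theta\cap N_\theta=\{{\rm id}\}$. But $G^{\rm ess}_\theta\subseteq\pi_*H$ by definition, so $G^{\rm ess}_\theta\cap N_\theta\subseteq\pi_*H\cap N_\theta=\{{\rm id}\}$ by disjointness, which is exactly what is needed; together with normality of $N_\theta$ and the complementarity $G_\theta=G^{\rm ess}_\theta N_\theta$ this gives $G_\theta=G^{\rm ess}_\theta\ltimes N_\theta$.

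Part~(iii) is the infinitesimal shadow of part~(ii), to be obtained by passing from each (pseudo)group to the Lie algebra of generators of its one-parameter subgroups, exactly as Theorem~\ref{thm:OnInvAlgsInSemi-normClasses} is deduced from Theorem~\ref{thm:EquivGroupoidOfSemi-normClasses}. Normality of $N_\theta$ linearizes to the ideal property $\mathfrak n_\theta\unlhd\mathfrak g_\theta$, the subgroup $G^{\rm ess}_\theta$ yields the subalgebra $\mathfrak g^{\rm ess}_\theta$, the identity $G^{\rm ess}_\theta=G_\theta\cap\pi_*H$ becomes $\mathfrak g^{\rm ess}_\theta=\mathfrak g_\theta\cap\pi_*\mathfrak h$, and the trivial group intersection forces $\mathfrak g^{\rm ess}_\theta\cap\mathfrak n_\theta=\{0\}$, so that the sum $\mathfrak g_\theta=\mathfrak g^{\rm ess}_\theta+\mathfrak n_\theta$ sharpens to the semidirect sum $\mathfrak g_\theta=\mathfrak g^{\rm ess}_\theta\lsemioplus\mathfrak n_\theta$. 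The main obstacle I anticipate is concentrated entirely in this last part and is technical rather than conceptual: one must justify the linearization in the pseudogroup setting, in particular that taking generators of one-parameter subgroups commutes with intersection (to secure $\mathfrak g^{\rm ess}_\theta=\mathfrak g_\theta\cap\pi_*\mathfrak h$) and that an internal semidirect product of possibly infinite-dimensional (pseudo)groups passes to a semidirect sum at the algebra level, precisely the pseudogroup subtlety flagged in the Conventions footnote.
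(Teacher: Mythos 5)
Your proposal is correct and follows essentially the same route as the paper: existence is inherited from Theorem~\ref{thm:EquivGroupoidOfSemi-normClasses}, uniqueness in~(i) comes from comparing two decompositions and landing in $\mathcal N^{\rm f}\cap\mathcal G^H=\mathcal S\rightrightarrows\mathcal S$, part~(ii) reduces to the computation $G^{\rm ess}_\theta\cap N_\theta=G_\theta\cap(\pi_*H\cap N_\theta)=\{{\rm id}\}$, and part~(iii) is the infinitesimal shadow of~(ii). The only cosmetic difference is that the paper proves uniqueness of the second decomposition ``in the same way'' as the first (via the trivial intersection), whereas you derive it from the conjugation formula $\tilde{\mathcal T}_1=\mathcal T_2{}^{-1}\star\mathcal T_1\star\mathcal T_2$ after fixing the first decomposition; both work, and your closing caveat about pseudogroup linearization matches the subtlety the paper itself defers in its Conventions footnote.
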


\begin{proof}
(i)
Taking into account the proof of Theorem~\ref{thm:EquivGroupoidOfSemi-normClasses},
it is only required to show the uniqueness of the first decomposition.
The uniqueness of the second decomposition is proved in the same way.
We fix an arbitrary admissible transformation $\mathcal T$ of the class~$\mathcal L|_{\mathcal S}$
and consider two decompositions of the first kind for~$\mathcal T$,
$\mathcal T=\mathcal T_1\star\mathcal T_2=\mathcal T_3\star\mathcal T_4$,
where $\mathcal T_1,\mathcal T_3\in\mathcal N^{\rm f}$ and $\mathcal T_2,\mathcal T_4\in\mathcal G^H$.
Then $\mathcal T_3^{-1}\star\mathcal T_1=\mathcal T_4\star\mathcal T_2^{-1}$.
Since $\mathcal N^{\rm f}$ and~$\mathcal G^H$ are subgroupoids of~$\mathcal G^\sim$,
the left- and right-hand sides of the last equality belong to $\mathcal N^{\rm f}$ and~$\mathcal G^H$,
respectively.
This means that they both belong to $\mathcal N^{\rm f}\cap\mathcal G^H=\mathcal S\rightrightarrows\mathcal S$.
Hence $\mathcal T_1=\mathcal T_3$ and $\mathcal T_2=\mathcal T_4$.

(ii)
In view of the claim on factoring out the groups~$G_\theta$ from Theorem~\ref{thm:EquivGroupoidOfSemi-normClasses},
it suffices to show that $G^{\rm ess}_\theta\cap N_\theta=\{{\rm id}\}$,
but this is a direct consequence of the disjointness property ${N_\theta\cap\pi_*H=\{{\rm id}\}}$
since
\[
G^{\rm ess}_\theta\cap N_\theta
=(G_\theta\cap\pi_*H)\cap N_\theta
=G_\theta\cap(\pi_*H\cap N_\theta)
=G_\theta\cap\{{\rm id}\}
=\{{\rm id}\}.
\]
Thus, $G_\theta$ splits over~$N_\theta$ for each~$\theta\in\mathcal S$.

(iii)
This item is the infinitesimal version of item (ii)
and follows immediately from (ii)
via replacing the groups with the corresponding algebras
of generators of the one-parameter subgroups of these groups.
It can also be considered as a consequence of Theorem~\ref{thm:OnInvAlgsInSemi-normClasses}.
\end{proof}

We would like to emphasize that augmenting the semi-normalization with the disjointedness 
in Theorem~\ref{thm:DisjontedlylySemi-normClasses} additionally gives
the uniqueness of the decompositions in item (i),
replacing the Frobenius product by the semidirect product in item (ii)
and replacing the sum of vector spaces by the semidirect sum of algebras in item (iii).
We call item (ii) the \emph{theorem on splitting point-symmetry groups in disjointedly semi-normalized classes}
and  item (iii) the \emph{theorem on splitting of maximal Lie invariance algebras in disjointedly semi-normalized classes}.
In contrast to Theorem~\ref{thm:DisjontedlylySemi-normClasses}(ii), 
the product of~$G^{\rm ess}_\theta$ and~$N_\theta$ in Theorem~\ref{thm:EquivGroupoidOfSemi-normClasses}(ii)
is in general a product of subgroups with nontrivial intersection as a product of group subsets
but not a semidirect product of subgroups
since the condition $G^{\rm ess}_\theta\cap N_\theta=\{{\rm id}\}$ necessarily holds for any $\theta\in\mathcal S$
only in disjointedly semi-normalized classes.

\begin{remark}\label{rem:igammax subclass}
\emph{There are classes of differential equations that are disjointedly semi-normalized
only with respect to proper subgroups of the corresponding equivalence groups.
}%
The first example of such a class was constructed in~\cite[Section~7]{kuru2018a}
in the course of group classification of (1+1)-dimensional linear Schr\"odinger equations
with complex-valued potentials depending on~$(t,x)$. 
Here we reinterpret it in the context of the above claim. 
More specifically, let $n=1$ in~\eqref{MLinSchEqs}.
Consider the subclass of the class~$\mathcal F$ that consists of the equations of the form~\eqref{MLinSchEqs},
where the potential~$V$ is purely imaginary and (homogeneously) linear with respect to~$x$.
This subclass can be reparameterized to the class~$\mathcal E$ of equations of the form
\begin{gather*}
\mathcal E_\gamma\colon\quad i\psi_t+\psi_{xx}+i\gamma(t)x\psi=0,
\end{gather*}
where $\gamma$ is an arbitrary smooth real-valued function of $t$,
which is the only arbitrary element.
\begin{subequations}\label{equgroupoidigamamx subclass}
The equivalence groupoid~$\mathcal G^\sim_{\mathcal E}$ of the class~$\mathcal E$
consists of triples of the form $(\gamma,\Phi,\tilde\gamma)$,
where $\Phi$ is a point transformation in the space of variables with components
\begin{gather}\label{equgroupoidigamamx subclass_a}
\tilde t=T:=\frac{a_1t+a_0}{a_3t+a_2}, \quad
\tilde x=\varepsilon|T_t|^{1/2}x+b_1T+b_0,
\\\label{equgroupoidigamamx subclass_b}
\tilde \psi=\frac c{|T_t|^{1/4}}\exp\left(
\frac i8\frac{T_{tt}}{|T_t|}x^2+\frac i2\varepsilon b_1|T_t|^{1/2}x-\varepsilon\int\gamma\frac{b_1T+b_0}{|T_t|^{1/2}}\,{\rm d}t+i\frac{b_1^2}4 T
\right)(\hat\psi+\hat\Theta),
\end{gather}
the relation between~$\gamma$ and~$\tilde\gamma$ is given by
\begin{gather}\label{equgroupoidigamamx subclass_c}
\tilde \gamma=\frac{\varepsilon\varepsilon'}{|T_t|^{3/2}}\gamma,
\end{gather}
\end{subequations}
$a_0$, $a_1$, $a_2$, $a_3$, $b_0$ and $b_1$ are arbitrary real constants with
$a_1a_2-a_0a_3=:\varepsilon'=\pm 1$,
$c$ is a nonzero complex constant,
$\Theta=\Theta(t,x)$ is an arbitrary solution of the initial equation~$\mathcal E_\gamma$,
and $\varepsilon=\pm 1$.
The (usual) equivalence group~$G^\sim_{\mathcal E}$ of the class~$\mathcal E$
is constituted by point transformations in the extended space of $(t,x,\psi,\psi^*,\gamma)$
with components of the form~\eqref{equgroupoidigamamx subclass},
where $b_0=b_1=0$ and $\Theta=0$.
For each $\gamma$, the equation~$\mathcal E_\gamma$ admits
the group~$G^{\rm unf}_\gamma$ of point symmetry transformations of the form~\eqref{equgroupoidigamamx subclass_a}--\eqref{equgroupoidigamamx subclass_b}
with $T=t$ and $\varepsilon=1$.
Elements of the kernel symmetry group~$G^\cap_{\mathcal E}$ of the class~$\mathcal E$
additionally satisfy the constrains $b_0=b_1=0$ and $\Theta=0$,
i.e., it consists of the `scalings' of~$\psi$ with nonzero complex numbers,
$\mathrm S_c$: $\tilde t=t$, $\tilde x=x$, $\tilde\psi=c\psi$.
The class~$\mathcal E$ is disjointedly semi-normalized
with respect to the family $\mathcal N=\{G^{\rm unf}_\gamma\}$
and the subgroup~$H$ of~$G^\sim_{\mathcal E}$
singled out from~$G^\sim_{\mathcal E}$ by the constraint $c=1$.
Admissible transformations with $T(t)=t$, $c=1$ and $\varepsilon=1$
have no counterparts among equivalence transformations
and thus, for a proper interpretation within the framework of uniform semi-normalization,
their transformational parts must and can be included in uniform symmetry groups
for the corresponding values of~$\gamma$.
Each transformation $\mathrm S_c$ with $|c|=1$
is the commutator of such transformational parts with zero $\Theta$ for any value of~$\gamma$
and thus necessarily belong to any possible uniform symmetry groups of the class~$\mathcal E$. 
Moreover, the above transformational parts
involve antiderivatives of~$\gamma$ and of~$t\gamma$,
which are defined up to constant summands,
and thus it is also natural to include the transformations $\mathrm S_c$ with positive real~$c$
into possible uniform symmetry groups of the class~$\mathcal E$.
Therefore, the family $\mathcal N=\{G^{\rm unf}_\gamma\}$ is the only natural choice
for a family of uniform symmetry groups of the class~$\mathcal E$.
Since each of the transformations $\mathrm S_c$ belongs to the kernel symmetry group
of the class~$\mathcal E$,
which can be embedded in the equivalence group~$G^\sim_{\mathcal E}$ via trivial prolongation
of its transformations to the arbitrary element~$\gamma$ \cite[Proposition~3]{card2011a},
the intersection of~$\pi_*G^\sim_{\mathcal E}$
with uniform symmetry groups cannot coincide with $\{\rm id\}$.
This means that the class~$\mathcal E$ is not disjointedly semi-normalized
with respect to the entire equivalence group~\smash{$G^\sim_{\mathcal E}$}
although it is semi-normalized with respect to~\smash{$G^\sim_{\mathcal E}$}
and the family~$\mathcal N=\{G^{\rm unf}_\gamma\}$.
For any $\gamma$ we have
$G^{\rm unf}_\gamma\cap\pi_*G^\sim_{\mathcal E}=G^\cap_{\mathcal E}\ne\{{\rm id}\}$.
\end{remark}

\begin{remark}\label{rem:OnNondejointedlySemiNormClasses}
It is also clear that each disjointedly semi-normalized class is semi-normalized.
At the same time, \emph{there are semi-normalized classes that are not disjointedly semi-normalized},
as can be seen from the example of the class~$\mathcal H$ 
of nonlinear diffusion equations~$\mathcal H_f$ of the form $u_t=(f(u)u_x)_x$ with $f_u\ne0$ 
briefly discussed in a different terminology in~\cite[Section~3]{kuru2018a}.
Let us properly present this example with extended arguments. 
The equivalence group~$G^\sim$ of the class~$\mathcal H$ consists of the transformations
\[
\tilde t=\alpha_1t+\alpha_0,\quad
\tilde x=\beta_1x+\beta_0,\quad
\tilde u=\gamma_1u+\gamma_0,\quad
\tilde f=\alpha_1^{-1}\beta_1^{\,2}f,
\]
where $\alpha_0$, $\alpha_1$, $\beta_0$, $\beta_1$, $\gamma_0$ and~$\gamma_1$ 
are arbitrary real constants with $\alpha_1\beta_1\gamma_1\ne0$, 
see, e.g., \mbox{\cite[Eq.~(3.15)]{akha1991a}}. 
For our purpose, it is more convenient to introduce $\delta_1:=\alpha_1^{-1}\beta_1^{\,2}$ 
and reparameterize the group~$G^\sim$ as  
\begin{gather}\label{eq:NonlinDiffusionEqsEquivGroup}
\tilde t=\delta_1^{-1}\beta_1^{\,2}t+\alpha_0,\quad
\tilde x=\beta_1x+\beta_0,\quad
\tilde u=\gamma_1u+\gamma_0,\quad
\tilde f=\delta_1f,
\end{gather}
where $\alpha_0$, $\beta_0$, $\beta_1$, $\gamma_0$, $\gamma_1$ and~$\alpha_1$
are arbitrary real constants with $\beta_1\gamma_1\delta_1\ne0$. 
The kernel point symmetry group~$G^\cap:=\cap_{f_u\ne0}G_f$ of the class~$\mathcal H$, 
where $G_f$ denotes the point symmetry group of the equation~$\mathcal H_f$,
is constituted by the point transformations
\[
\tilde t=\beta_1^2t+\alpha_0,\quad
\tilde x=\beta_1x+\beta_0,\quad
\tilde u=u,
\]
where $\alpha_0$, $\beta_0$ and~$\beta_1$ are arbitrary constants with $\beta_1\ne0$.
The point symmetry group~$G_{f^0}$ of the nonlinear diffusion equation~$\mathcal H_{f^0}$ with $f^0(u)=u^{-4/3}$
consists of by the point transformations
\begin{gather}\label{eq:u-4/3DiffusionEqSymGroup}
\tilde t=\lambda_1t+\lambda_0,\quad
\tilde x=\frac{\alpha x+\beta}{\gamma x+\delta},\quad
\tilde u=\lambda_1^{3/4}(\gamma x+\delta)^3u,
\end{gather}
where $\alpha$, $\beta$, $\gamma$, $\delta$, $\lambda_1$ and $\lambda_0$ are arbitrary constants 
with $\alpha\delta-\beta\gamma=\pm1$ and $\lambda_1>0$, 
and, according to \cite{kova2023a,kova2023b},
we take the natural domain for each transformation of the form~\eqref{eq:u-4/3DiffusionEqSymGroup} 
and use the modified composition of such transformations that is based on completing by continuity. 
The group $G_{f^0}$ is isomorphic to ${\rm Aff}^+(\mathbb R)\times{\rm SL}^{\pm}(2,\mathbb R)$,
where ${\rm Aff}^+(\mathbb R)$ is the subgroup of~${\rm Aff}(\mathbb R)$ 
constituted by all invertible orientation-preserving affine transformations of the affine space~$\mathbb R$, 
${\rm Aff}^+(\mathbb R)\simeq\mathbb R\rtimes {\rm GL}^+(\mathbb R)=\mathbb R\rtimes\mathbb R_{>0}$.

Suppose that the class~$\mathcal H$ is disjointedly semi-normalized 
with respect to a subgroup~$H$ of~$G^\sim$ and 
a symmetry-subgroup family~$\mathcal N_{\mathcal S}=\{N_f\lhd G_f\mid f\in\mathcal S\}$.
The subgroup~$N_{f^0}$ of~$G_{f^0}$ should contain a transformation with a nonzero value of the parameter~$\gamma$.
Since in addition $N_{f^0}$ is a normal subgroup of~$G_{f^0}$, 
it then contains the subgroup of elements of~$G_{f^0}$ with $\lambda_0=0$, $\lambda_1=1$ and $\alpha\delta-\beta\gamma=1$.
If $N_{f^0}$ contains an element of~$G_{f^0}$ with $\lambda_0\ne0$, 
then it contains the subgroup of shifts with respect to~$t$, 
$(\tilde t,\tilde x,\tilde u)=(t+\lambda_0,x,u)$ with $\lambda_0$ running through~$\mathbb R$.
Then the subgroup~$K_{f^0}$ of $N_{f^0}$ singled out be the constraint $\gamma=0$ 
consists of the transformations 
$\tilde t=\lambda_1t+\lambda_0$, 
$\tilde x=\alpha^2x+\beta$, 
\smash{$\tilde u=\lambda_1^{3/4}\alpha^{-3}u$},
where the parameters~$\alpha$ and~$\beta$ freely run through~$\mathbb R$, 
the parameter~$\lambda_0$ either freely runs through~$\mathbb R$ or is equal to zero
and the parameter~$\lambda_1$ runs through a subgroup of the multiplicative group~$\mathbb R^+$ of positive real numbers. 
Denoting \smash{$\hat\lambda_1=\lambda_1^{1/2}$} and \smash{$\hat\alpha=\lambda_1^{3/4}\alpha^{-3}$}, 
we reparameterize elements of the subgroup~$K_{f^0}$ as 
\smash{$\tilde t=\hat\lambda_1^2t+\lambda_0$}, 
\smash{$\tilde x=\hat\lambda_1\hat\alpha^{-2/3}x+\beta$},
$\tilde u=\hat\alpha u$,
where the parameter tuple~\smash{$(\hat\alpha,\beta,\lambda_0,\hat\lambda_1)$} satisfies 
the same conditions as the parameter tuple~$(\alpha,\beta,\lambda_0,\lambda_1)$ does. 
In particular, the parameter~$\hat\lambda_1$ runs through a subgroup~$\hat P$ of~$\mathbb R^+$.

In the parameterization~\eqref{eq:NonlinDiffusionEqsEquivGroup} of~$G^\sim$, 
it is clear that varying the parameters~$\gamma_0$, $\gamma_1$ and~$\delta_1$
definitely leads to changing general values of the arbitrary element~$f$, 
and varying the other parameters does not affect~$f$. 
This is why the subgroup~$H$ necessarily contain a subgroup of~$G^\sim$ 
parameterized by $(\gamma_0,\gamma_1,\delta_1)\in\mathbb R^3$, 
where the other parameters~$\alpha_0$, $\beta_0$ and~$\beta_1$ 
are assumed to be functions of $(\gamma_0,\gamma_1,\delta_1)$, 
which are respectively equal to 0, 0, 1 at $(0,1,1)$. 
Since $(f^0,\Phi^\nu,f^0)\in\mathcal G^\sim$ for any $\nu\in\mathbb R$, 
where $\Phi^\nu$ denotes the scaling transformation $(\tilde t,\tilde x,\tilde u)=(\nu^2t,\nu x,u)$, 
the elements of~$H$ can be represented as 
\[
\tilde t=\delta_1^{-1}\check\lambda_1^{\,2}\check\beta_1^{\,2}t+\alpha_0,\quad
\tilde x=\check\lambda_1\check\beta_1x+\beta_0,\quad
\tilde u=\gamma_1u+\gamma_0,\quad
\tilde f=\delta_1f,
\]
the parameter tuple~$(\gamma_0,\gamma_1,\delta_1)$ freely runs through~$\mathbb R^3$, 
the parameter~$\check\lambda_1$ runs through a subgroup~$\check P$ 
of the multiplicative group~$\mathbb R_*$ of real numbers that is a complement to~$\hat P$, 
$\mathbb R_*=\hat P\times\check P$,
$\beta_0$, $\check\beta_1$ and, if $\lambda_0$ is free in~$K_{f^0}$, $\alpha_0$ 
are functions of $(\gamma_0,\gamma_1,\delta_1,\check\lambda_1)$, which are respectively equal to 0, 0, 1 at $(0,1,1,1)$, 
and $\alpha_0$ is a free real parameter if $\lambda_0$ takes only the zero values for elements of~\smash{$K_{f^0}$}.

Thus, the ratio $\hat\lambda_1/\check\lambda_1$ runs through the entire set~$\mathbb R$ of real numbers 
if $\hat\lambda_1$ and~$\check\lambda_1$ run through~$\hat P$ and~$\check P$, respectively.  
Setting $\gamma_0=0$, $\gamma_1=\hat\alpha$, $\delta_1>0$, 
$\hat\lambda_1=\check\lambda_1\delta^{-1/2}\check\beta_1$,
$\beta=\beta_0(0,\hat\alpha,\delta_1,\check\lambda_1)$,
$\lambda_0=\alpha_0(0,\hat\alpha,\delta_1,\check\lambda_1)$ if $\lambda_0$ is a free real parameter 
and $\alpha_0=\lambda_0=0$ otherwise, 
we obtain nonidentity transformations that are common for~$K_{f^0}$ and $\pi_*H$, 
i.e., $N_{f^0}\cap\pi_*H\ne\{{\rm id}\}$.
\end{remark}

Theorem~\ref{thm:ExdendingSemi-normalization} implies the following assertions. 

\begin{corollary}
If the class~$\mathcal L|_{\mathcal S}$ is $(H,N_{\mathcal S})$-semi-normalized 
and does not possess inessential equivalence transformations~\cite[Appendix~A]{boyk2024a}
and $\bar H$ is a subgroup of~$G^\sim$ that properly contains~$H$, 
then this class is non-disjointedly $(\bar H,N_{\mathcal S})$-semi-normalized.
\end{corollary}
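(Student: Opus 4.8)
The plan is to obtain the $(\bar H,N_{\mathcal S})$-semi-normalization directly from Theorem~\ref{thm:ExdendingSemi-normalization} (the case $H\leq\bar H\leq G^\sim$ with the unchanged family $N_{\mathcal S}$), and then to verify, as the only genuinely new content, that this semi-normalization fails to be disjoint. By Definition~\ref{def:DisjointedlySemi-normClass} the semi-normalization is disjoint precisely when $N_\theta\cap\pi_*\bar H=\{{\rm id}\}$ for every $\theta\in\mathcal S$, so to establish non-disjointedness it suffices to exhibit a single nonidentity element of one such intersection.

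First I would fix an equivalence transformation $\mathscr T\in\bar H\setminus H$, which exists because $H$ is a \emph{proper} subgroup of~$\bar H$, together with an arbitrary $\theta\in\mathcal S$. The admissible transformation $\mathcal T:=(\theta,\pi_*\mathscr T,\mathscr T_*\theta)$ lies in $\mathcal G^{\bar H}\subseteq\mathcal G^\sim$, so the $(H,N_{\mathcal S})$-semi-normalization $\mathcal G^\sim=\mathcal N^{\rm f}\star\mathcal G^H$ furnishes a factorization $\mathcal T=\mathcal T_1\star\mathcal T_2$ with $\mathcal T_1=(\theta,\Phi_1,\theta)\in\mathcal N_\theta$ and $\mathcal T_2=(\theta,\pi_*\mathscr T',\mathscr T'_*\theta)\in\mathcal G^H$ for some $\mathscr T'\in H$. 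Comparing the point-transformation components gives $\pi_*\mathscr T=(\pi_*\mathscr T')\circ\Phi_1$, whence
\[
\Phi_1=(\pi_*\mathscr T')^{-1}\circ\pi_*\mathscr T=\pi_*\big((\mathscr T')^{-1}\mathscr T\big).
\]
Since $\mathscr T',\mathscr T\in\bar H$ and $\bar H$ is a group, $(\mathscr T')^{-1}\mathscr T\in\bar H$, so $\Phi_1\in\pi_*\bar H$; on the other hand $\Phi_1\in N_\theta$ by the choice of $\mathcal T_1$, and therefore $\Phi_1\in N_\theta\cap\pi_*\bar H$.

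It remains to argue that $\Phi_1\ne{\rm id}$, and this is exactly where the hypothesis of no inessential equivalence transformations enters: it says that the projection $\pi_*$ restricted to~$G^\sim$ is injective, its kernel being precisely the inessential equivalence transformations, i.e.\ those acting as the identity on $(x,u)$ while acting nontrivially on the arbitrary elements. Were $\Phi_1={\rm id}$, the displayed identity would force $\pi_*\mathscr T=\pi_*\mathscr T'$ and hence $\mathscr T=\mathscr T'$ by injectivity, contradicting $\mathscr T\notin H\ni\mathscr T'$. Thus $\Phi_1$ is a nonidentity element of $N_\theta\cap\pi_*\bar H$, so the $(\bar H,N_{\mathcal S})$-semi-normalization is non-disjoint. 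I expect the only delicate point to be the identification of ``possessing no inessential equivalence transformations'' with the faithfulness of the $\pi_*$-action on~$G^\sim$ in the sense of \cite[Appendix~A]{boyk2024a}; once that interpretation is pinned down, the remainder is a short bookkeeping of the decomposition, with the choice $\mathscr T'\in H$ being immediate from $\mathcal T_2\in\mathcal G^H$, so no serious obstacle arises beyond this.
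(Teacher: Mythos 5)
Your argument is essentially the paper's own proof, reorganized from a proof by contradiction into a direct exhibition of a nonidentity element of $N_\theta\cap\pi_*\bar H$: the appeal to Theorem~\ref{thm:ExdendingSemi-normalization} for the $(\bar H,N_{\mathcal S})$-semi-normalization, the factorization of $(\theta,\pi_*\mathscr T,\mathscr T_*\theta)$ through $\mathcal N^{\rm f}\star\mathcal G^H$, the identity $\Phi_1=\pi_*\bigl((\mathscr T')^{-1}\circ\mathscr T\bigr)$ and the final appeal to the absence of inessential equivalence transformations all coincide with the paper's steps (the paper additionally splits off the trivial case where the $(H,N_{\mathcal S})$-semi-normalization is itself non-disjoint, which your direct formulation renders unnecessary). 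The one place where you deviate is the point you yourself flag: an inessential equivalence transformation is \emph{not} simply an element of $\ker\pi_*|_{G^\sim}$ — it must in addition fix every $\theta\in\mathcal S$, since elements of $\ker\pi_*$ that move some $\theta$ are nontrivial gauge equivalence transformations and are perfectly essential. Hence ``no inessential equivalence transformations'' does not by itself yield injectivity of $\pi_*$ on $G^\sim$, and your closing inference ``$\pi_*\mathscr T=\pi_*\mathscr T'$ hence $\mathscr T=\mathscr T'$'' is, as written, using a stronger hypothesis than the corollary states. The repair is already contained in your factorization: comparing targets of $\mathcal T$ and $\mathcal T_2$ gives $\mathscr T_*\theta=\mathscr T'_*\theta$ in addition to $\pi_*\mathscr T=\pi_*\mathscr T'$, which is precisely the combination the paper invokes to conclude that $\mathscr T$ and $\mathscr T'$ generate the same admissible transformations and therefore coincide. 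With that adjustment your proof is the paper's proof.
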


\begin{proof}
If the class~$\mathcal L|_{\mathcal S}$ is non-disjointedly $(H,N_{\mathcal S})$-semi-normalized,
then the corollary statement is trivial.  

Suppose that the class~$\mathcal L|_{\mathcal S}$ 
is disjointedly $(H,N_{\mathcal S})$- and $(\bar H,N_{\mathcal S})$-semi-normalized. 
Let $\bar{\mathscr T}$ be an element of~$\bar H\setminus H$. 
For any $\theta\in\mathcal S$, consider admissible transformation 
$\bar{\mathcal T}:=(\theta,\pi_*\bar{\mathscr T},\bar{\mathscr T}\theta)$. 
Since $\mathcal G^\sim=\mathcal N^{\rm f}\star\mathcal G^H$, we have 
$\bar{\mathcal T}=\mathcal T_0\star\mathcal T$ 
for some $\mathcal T_0\in\mathcal N_\theta$ and some $\mathcal T\in\mathcal G^H$. 
Thus, $\mathcal T_0=(\theta,\Phi,\theta)$ for some $\Phi\in N_\theta$, 
$\mathcal T=(\theta,\pi_*\mathscr T,\mathscr T\theta)$ for some $\mathscr T\in H$, 
$\mathscr T\theta=\bar{\mathscr T}\theta$ and $\pi_*\bar{\mathscr T}=\pi_*\mathscr T\circ\Phi$.
Then $\Phi=\pi_*(\mathscr T^{-1}\circ\bar{\mathscr T})\in{N_\theta\cap\pi_*\bar H=\{{\rm id}\}}$, 
i.e., $\Phi={\rm id}$ and $\pi_*\bar{\mathscr T}=\pi_*\mathscr T$. 
In other words, the equivalence transformations~$\mathscr T$ and~$\bar{\mathscr T}$ 
generate the same subset of~$\mathcal G^\sim$ and, therefore, should coincide 
in view of the absence of inessential equivalence transformation for the class~$\mathcal L|_{\mathcal S}$, 
which contradicts the fact that $\mathscr T\in H$ and~$\bar{\mathscr T}\in\bar H\setminus H$. 
\end{proof}

A typical property for the intersections of~$\pi_*H$ with elements~$N_\theta$ of the symmetry-subgroup family~$\mathcal N_{\mathcal S}$
is given by $N_\theta\cap\pi_*H\supseteq G^\cap$. 
Moreover, any semi-normalized class, even a disjointedly semi-normalized one, 
can be considered as semi-normalized with respect to 
a wider subgroup of the corresponding equivalence group 
and a family of wider uniform point symmetry subgroups 
that satisfy the above property.

\begin{corollary}\label{cor:IncludingKernelInUniformSymSubgroups}
If a class~$\mathcal L|_{\mathcal S}$
is semi-normalized with respect to a subgroup~$H$ of its equivalence group~$G^\sim$
and a symmetry-subgroup family $N_{\mathcal S}=\{N_\theta\mid\theta\in\mathcal S\}$,
then it is also semi-normalized with respect to the subgroup~$\bar H:=\hat G^\cap H$ of~$G^\sim$
and the symmetry-subgroup family
$\bar N_{\mathcal S}:=\{\bar N_\theta:=G^\cap N_\theta\mid\theta\in\mathcal S\}$,
where $\bar N_\theta\cap\pi_*\bar H\supseteq G^\cap$ for any $\theta\in\mathcal S$.
\end{corollary}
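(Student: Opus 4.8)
The plan is to derive the statement almost directly from Theorem~\ref{thm:ExdendingSemi-normalization}, which already supplies $(\bar H,\bar N_{\mathcal S})$-semi-normalization for \emph{any} subgroup $\bar H$ of~$G^\sim$ containing~$H$, with $\bar N_\theta=N_\theta G^\cap$. The only genuine tasks are to check that the specific choice $\bar H:=\hat G^\cap H$ is admissible in that theorem and then to verify the additional inclusion $\bar N_\theta\cap\pi_*\bar H\supseteq G^\cap$.

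First I would confirm that $\bar H=\hat G^\cap H$ is indeed a subgroup of~$G^\sim$ with $H\leq\bar H$. Here $\hat G^\cap$ denotes the lift of the kernel group~$G^\cap$ to~$G^\sim$ obtained by trivially prolonging its transformations to the arbitrary elements, so that $\pi_*\hat G^\cap=G^\cap$; as recalled in Section~\ref{sec:BasicsNotionsOfGroupClassification}, this lift is a \emph{normal} subgroup of~$G^\sim$. Consequently the product $\hat G^\cap H$ of a normal subgroup and a subgroup is again a subgroup of~$G^\sim$, and obviously $H\leq\hat G^\cap H$. Applying Theorem~\ref{thm:ExdendingSemi-normalization} with this~$\bar H$ then yields at once that $\mathcal L|_{\mathcal S}$ is $(\bar H,\bar N_{\mathcal S})$-semi-normalized, where for each $\theta$ the uniform subgroup is $\bar N_\theta=N_\theta G^\cap$. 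Since $N_\theta\unlhd G_\theta$ by Theorem~\ref{thm:EquivGroupoidOfSemi-normClasses}(ii) and $G^\cap\leq G_\theta$, the two Frobenius products coincide, $N_\theta G^\cap=G^\cap N_\theta$, matching the family $\bar N_{\mathcal S}$ in the statement.

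It then remains to establish the inclusion $\bar N_\theta\cap\pi_*\bar H\supseteq G^\cap$ for every $\theta\in\mathcal S$, which I would split into two elementary containments. On the one hand, since $\mathrm{id}\in N_\theta$, we have $G^\cap=G^\cap\{\mathrm{id}\}\subseteq G^\cap N_\theta=\bar N_\theta$. On the other hand, from $\hat G^\cap\subseteq\hat G^\cap H=\bar H$ and monotonicity of the projection~$\pi_*$ we obtain $G^\cap=\pi_*\hat G^\cap\subseteq\pi_*\bar H$. Intersecting the two gives $G^\cap\subseteq\bar N_\theta\cap\pi_*\bar H$, as required.

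I do not anticipate any real obstacle: the result is a bookkeeping consequence of Theorem~\ref{thm:ExdendingSemi-normalization}. The single point demanding care is the correct reading of the lift~$\hat G^\cap$ and the use of its normality in~$G^\sim$ to guarantee that $\bar H$ is a group; once that is fixed, everything reduces to formal manipulation of subgroup inclusions and to the identity $N_\theta G^\cap=G^\cap N_\theta$ coming from the normality of~$N_\theta$ in~$G_\theta$.
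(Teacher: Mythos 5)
Your proposal is correct and follows essentially the same route as the paper: establish that $\bar H=\hat G^\cap H$ is a subgroup of~$G^\sim$ via the normality of the lifted kernel~$\hat G^\cap$ and then invoke Theorem~\ref{thm:ExdendingSemi-normalization}. The only difference is that you spell out the final inclusion $\bar N_\theta\cap\pi_*\bar H\supseteq G^\cap$ and the identity $N_\theta G^\cap=G^\cap N_\theta$ explicitly, which the paper leaves implicit.
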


\begin{proof}
Denote by $\hat G^\cap$ the subgroup of~$G^\sim$ associated with~$G^\cap$, 
which is obtained from the group~$G^\cap$ 
by the trivial (identity) extension of its elements to the arbitrary elements of the class~$\mathcal L|_{\mathcal S}$.
Since $\hat G^\cap$ is the normal subgroup of~$G^\sim$ \cite[Proposition~3]{card2011a}, 
the Frobenius product $\bar H:=\hat G^\cap H$ is a subgroup of~$G^\sim$, 
and then Theorem~\ref{thm:ExdendingSemi-normalization} directly implies the required claim.  
\end{proof}

In the notation of Corollary~\ref{cor:IncludingKernelInUniformSymSubgroups},
If $G^\cap\ne\{{\rm id}\}$, then the class~$\mathcal L|_{\mathcal S}$ is 
non-disjointedly $(\bar H,\bar N_{\mathcal S})$-semi-normalized.

Recall that a normalized class~$\mathcal L|_{\mathcal S}$
is semi-normalized with respect to its entire equivalence group~$G^\sim$
and the trivial symmetry-subgroup family $\mathcal N_{\mathcal S}=\big\{N_\theta=\{{\rm id}\}\mid\theta\in\mathcal S\big\}$.
Hence Corollary~\ref{cor:IncludingKernelInUniformSymSubgroups} implies,
in view of Theorem~\ref{thm:EquivGroupoidOfSemi-normClasses}(ii),
the following assertion, whose infinitesimal counterpart is Corollary~2 in~\cite{card2011a}.

\begin{corollary}
For any system~$\mathcal L_\theta$ from a normalized class~$\mathcal L|_{\mathcal S}$,
the kernel symmetry group~$G^\cap$ of the systems in~$\mathcal L|_{\mathcal S}$
is a normal subgroup of the point symmetry group~$G_\theta$ of~$\mathcal L_\theta$.
\end{corollary}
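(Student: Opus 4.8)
The plan is to obtain the statement as a direct specialization of the semi-normalization machinery already developed, exactly along the route indicated by the lead-in text, rather than to argue afresh. First I would observe that a normalized class is a particular instance of a semi-normalized class: by definition normalization means $\mathcal G^\sim=\mathcal G^{G^\sim}$, and since the trivial symmetry-subgroup family $N_{\mathcal S}=\{N_\theta=\{{\rm id}\}\mid\theta\in\mathcal S\}$ has $\mathcal N^{\rm f}=\mathcal S\rightrightarrows\mathcal S$, we get $\mathcal N^{\rm f}\star\mathcal G^{G^\sim}=\mathcal G^{G^\sim}=\mathcal G^\sim$. Thus $\mathcal L|_{\mathcal S}$ is $(G^\sim,N_{\mathcal S})$-semi-normalized with this trivial family, which is the hypothesis needed to launch Corollary~\ref{cor:IncludingKernelInUniformSymSubgroups}.

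Second, I would feed this semi-normalization into Corollary~\ref{cor:IncludingKernelInUniformSymSubgroups}, which enlarges both the distinguished subgroup of~$G^\sim$ and the symmetry-subgroup family while preserving semi-normalization. With $H=G^\sim$ the enlarged subgroup is $\bar H=\hat G^\cap H=\hat G^\cap G^\sim=G^\sim$, because $\hat G^\cap$ is a subgroup (in fact a normal subgroup) of~$G^\sim$ by \cite[Proposition~3]{card2011a}, so the Frobenius product of~$\hat G^\cap$ with the whole group~$G^\sim$ is again~$G^\sim$. With $N_\theta=\{{\rm id}\}$ the enlarged family consists of $\bar N_\theta=G^\cap N_\theta=G^\cap\{{\rm id}\}=G^\cap$ for every~$\theta$. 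Hence the corollary shows that $\mathcal L|_{\mathcal S}$ is semi-normalized with respect to $G^\sim$ and the \emph{constant} family $\{N_\theta=G^\cap\mid\theta\in\mathcal S\}$.

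Finally, I would apply Theorem~\ref{thm:EquivGroupoidOfSemi-normClasses}(ii) to this last semi-normalization: it asserts that for each $\theta\in\mathcal S$ the member~$N_\theta$ of the uniform family is a normal subgroup of~$G_\theta$. Since here $N_\theta=G^\cap$, this is precisely the assertion $G^\cap\unlhd G_\theta$, completing the proof.

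The argument is essentially bookkeeping, so there is no real obstacle; the only point warranting care is verifying that the two enlargement operations of Corollary~\ref{cor:IncludingKernelInUniformSymSubgroups} collapse as claimed—that $\hat G^\cap G^\sim$ is literally $G^\sim$ (using the normal embedding of~$\hat G^\cap$ in~$G^\sim$) and that $G^\cap\{{\rm id}\}=G^\cap$—so that Theorem~\ref{thm:EquivGroupoidOfSemi-normClasses}(ii) applies verbatim with $N_\theta=G^\cap$. As a sanity check and an alternative route independent of the corollaries, I would note that one can also argue directly: normalization gives, for any $\Phi\in G_\theta$, a representation $\Phi=\pi_*\mathscr T$ with some $\mathscr T\in G^\sim$ fixing~$\theta$ (because the vertex element $(\theta,\Phi,\theta)\in\mathcal G^\sim=\mathcal G^{G^\sim}$ comes from an equivalence transformation), whereupon the normality of~$\hat G^\cap$ in~$G^\sim$ lets one conjugate any kernel transformation $\Xi\in G^\cap$ back into~$G^\cap$ via $\pi_*$, yielding $\Phi\circ\Xi\circ\Phi^{-1}\in G^\cap$ and hence $G^\cap\unlhd G_\theta$.
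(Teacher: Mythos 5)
Your proposal is correct and follows essentially the same route as the paper: the text preceding the corollary derives it exactly by noting that a normalized class is semi-normalized with respect to $G^\sim$ and the trivial family $\{N_\theta=\{{\rm id}\}\}$, then invoking Corollary~\ref{cor:IncludingKernelInUniformSymSubgroups} to pass to the family $\{G^\cap\}$, and finally applying Theorem~\ref{thm:EquivGroupoidOfSemi-normClasses}(ii). Your bookkeeping of the collapses $\hat G^\cap G^\sim=G^\sim$ and $G^\cap\{{\rm id}\}=G^\cap$ is accurate, and your direct sanity-check argument matches the normality computation already carried out in the proof of that theorem.
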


\section{Uniform semi-normalization and\\ semi-normalization with respect to linear superposition}\label{sec:UniformSemi-norm}

In addition to the basic property 
$\mathcal N_{{\rm s}(\mathcal T)}\star\mathcal T=\mathcal T\star\mathcal N_{{\rm t}(\mathcal T)}$
for any $\mathcal T\in\mathcal G^H$ of Definition~\ref{def:Semi-normClass},
the attribute ``uniform'' for the subgroups~$N_\theta$ is also justified by the fact 
that in most of practically relevant examples of semi-normalized classes,
all the subgroups~$N_\theta$ are isomorphic or are at least of similar structure (in particular, of the same dimension).
To distinguish such examples, where in addition the subgroups~$N_\theta$ are known a priori,
we call the corresponding class~$\mathcal L|_{\mathcal S}$
\emph{uniformly $(H,N_{\mathcal S})$-semi-normalized}. 
If in addition the class~$\mathcal L|_{\mathcal S}$ is disjointedly $(H,N_{\mathcal S})$-semi-normalized, 
then we can call it \emph{uniformly disjointedly $(H,N_{\mathcal S})$-semi-normalized}.  

\begin{example}
The class~$\mathcal H$ defined in Remark~\ref{rem:OnNondejointedlySemiNormClasses} 
is semi-normalized but not uniformly semi-normalized.
Indeed, the point symmetry group~$G_{f^0}$ of the nonlinear diffusion equation with $f^0(u)=u^{-4/3}$
in particular contains the transformations of form~\eqref{eq:u-4/3DiffusionEqSymGroup}, 
which do not belong to~$\pi_*G^\sim_{\mathcal H}$, 
and hence for any choice of a family~$N_{\mathcal S}$ of uniform symmetry subgroups,
such a transformation belongs to~$N_{f^0}$. 
At the same time, the point symmetry group 
of any equation from the complement of the $G^\sim_{\mathcal H}$-orbit of~$\mathcal H_{f^0}$ in~$\mathcal H$
is contained in~$\pi_*G^\sim_{\mathcal H}$. 
Therefore, the structure of the corresponding element of~$N_{\mathcal S}$ definitely differs from that of~$N_{f^0}$. 
\end{example}

A particular but important case of uniform semi-normalized classes, 
which is also relevant to the linear Schr\"odinger equations studied in this paper,  
is given by classes of homogeneous linear systems of differential equations. 
Below, we follow the presentation in the second part of Section~3 in~\cite{kuru2018a} 
and modify it according to the framework developed 
in Sections~\ref{sec:Semi-normalizedClasses} and~\ref{sec:DisjointSemi-normalization}.

We start with a normalized superclass~$\mathcal L|_{\mathcal S}$ 
of (in general) inhomogeneous linear systems of differential equations of the form 
$\mathcal L_{\theta\zeta}$: $L(x,u_{(p)},\theta(x))=\zeta(x)$ with $(\theta,\zeta)\in\mathcal S$.%
\footnote{%
We avoid the parameterization of~$L$ by derivatives of arbitrary elements, 
considering such involved derivatives as additional arbitrary elements 
that are related to the original arbitrary elements~$\theta$ 
via additional auxiliary equations as their derivatives. 
} 
Here we consider the arbitrary-element tuple as consisting of two parts and 
change its notation from $\theta$ to $(\theta,\zeta)$. 
The subtuple $\theta$ parameterizes coefficients 
of the homogeneous linear left-hand side~$L$ and depends at most on~$x$. 
Each component of the right-hand side~$\zeta$ runs through the set of sufficiently smooth functions of~$x$.
Suppose that the class $\mathcal L|_{\mathcal S}$ also satisfies the following conditions: 
\begin{enumerate}\itemsep=0ex
\item 
Each system from this class is locally solvable. 
\item 
Elements of the pushforward~$\pi_*G^\sim$ of the equivalence group~$G^\sim$ 
of the class $\mathcal L|_{\mathcal S}$ by~$\pi$ 
are fibre-preserving point transformations of $(x,u)$
whose components for~$u$ are affine in~$u$. 
\end{enumerate}
The second condition means that elements of~$\pi_*G^\sim$ are of the form 
\begin{gather}\label{eq:EquivTransOfLinSystems}
\tilde x_j=X^j(x),\quad 
\tilde u^a=M^{ab}(x)(u^b+h^b(x)) 
\quad\mbox{with}\quad\det(X^j_{x_{j'}})\ne0,\quad \det(M^{ab})\ne0.
\end{gather}
The functions~$X^j$ and~$M^{ab}$ may satisfy additional constraints or even be quite specific
but the components of the tuple $h=(h^1,\dots,h^m)$ are arbitrary smooth functions of~$x$.
The transformations 
\[
\mathscr T_h\colon\quad
\tilde x_j=x_j,\quad 
\tilde u^a=u^a+h^a(x),\quad 
\tilde\theta=\theta,\quad 
\tilde\zeta=\zeta-L(x,h_{(p)}(x),\theta(x)),
\] 
constitute a normal subgroup~$N^\sim$ of~$G^\sim$. 
Furthermore, $G^\sim$ splits over~$N^\sim$ 
since $G^\sim=H\ltimes N^\sim$, 
where $H$ is the subgroup of the group~$G^\sim$ defined by the constraint~$h=0$. 
The normalization of the class~$\mathcal L|_{\mathcal S}$ means that 
its equivalence groupoid~$\mathcal G^\sim$ coincides with the action groupoid of~$G^\sim$. 

On the infinitesimal level, the counterpart of the second condition is that 
the pushforward~$\pi_*\mathfrak g^\sim$ of the equivalence algebra~$\mathfrak g^\sim$ 
of the class $\mathcal L|_{\mathcal S}$ by~$\pi$ 
consists of the vector fields of the form 
\begin{gather*}
Q=\xi^j(x)\p_{x_j}+\big(\eta^{ab}(x)u^b+\eta^{a0}(x)\big)\p_{u^a},
\end{gather*}
where coefficients~$\xi^j$ and~$\eta^{ab}$ may satisfy additional constraints 
but the $\eta^{a0}$ are arbitrary smooth functions of~$x$. 
The equivalence algebra~$\mathfrak g^\sim$ is the semidirect sum of the subalgebra~$\mathfrak h$
and the ideal~$\mathfrak n$, 
which are singled out from~$\mathfrak g^\sim$ by the constraints $\eta^{a0}=0$ and $\xi^j=\eta^{ab}=0$, respectively.

Any system~$\mathcal L_{\theta\zeta}$ from the class~$\mathcal L|_{\mathcal S}$ is mapped 
to the associated homogeneous system $\mathcal L_{\theta0}$ 
by the transformation~$\mathscr T_h$ from~$N^\sim$, where $h$ is a solution of $\mathcal L_{\theta\zeta}$. 
In other words, the class $\mathcal L|_{\mathcal S}$ is mapped 
onto the corresponding class~$\mathcal L^0|_{\mathcal S^0}$ of homogeneous systems
by a wide family of admissible transformations from the action groupoid of~$N^\sim$. 

We will use a double interpretation of the class~$\mathcal L^0|_{\mathcal S^0}$. 
On the one hand, 
it can be embedded in the class~$\mathcal L|_{\mathcal S}$ as the subclass 
singled out by the constraint $\zeta=0$, 
and then all the transformational structures related to~$\mathcal L^0|_{\mathcal S^0}$ 
are substructures of their counterparts for~$\mathcal L|_{\mathcal S}$. 
In particular, under this interpretation, 
the equivalence groupoid~$\mathcal G^\sim_0$ of~$\mathcal L^0|_{\mathcal S^0}$ 
is the subgroupoid of~$\mathcal G^\sim$ 
that is singled out by the constraints $\zeta=0$, $\tilde\zeta=0$ and~$L(x,h_{(p)}(x),\theta(x))=0$ 
for the source and target arbitrary elements and the parameter function tuple~$h$.
Up to factoring out the insignificant equivalence transformations related to the constraint $\zeta=0$
\cite[Appendix~A]{boyk2024a},
we identify the equivalence group~$G^\sim_0$ of the class~$\mathcal L^0|_{\mathcal S^0}$ 
with its subgroup consisting of the elements of~$G^\sim$ 
for which the corresponding values of the parameter function tuple~$h$ 
belong to the set~$\mathcal M^0$ of common solutions of the systems from~$\mathcal L^0|_{\mathcal S^0}$. 
In other words, $G^\sim_0=H_0\ltimes N^\sim_0$, where $H_0=H$ 
and the normal subgroup~$N^\sim_0$ of~$G^\sim_0$ is constituted by the transformations~$\mathscr T_h$  
with $h\in\mathcal M^0$.
Analogously, $\mathfrak g^\sim_0=\mathfrak h_0\lsemioplus\mathfrak n_0$,
where 
$\mathfrak g^\sim_0$ is the equivalence algebra of~$\mathcal L^0|_{\mathcal S^0}$,
$\mathfrak h_0=\mathfrak h$ is a subalgebra of~$\mathfrak g^\sim_0$ and
$\mathfrak n_0$ is the ideal of~$\mathfrak g^\sim_0$ 
consisting of the vector fields $\eta^{a0}(x)\p_{u^a}$ with $(\eta^{10},\dots,\eta^{m0})\in\mathcal M^0$.
On the other hand, 
we can consider~$\mathcal L^0|_{\mathcal S^0}$ as the class of systems of the form $L(x,u_{(p)},\theta(x))=0$, 
where $\theta$ is the complete tuple of its arbitrary elements that runs through the set~$\mathcal S^0=\varpi_*\mathcal S$.
By $\varpi$ we denote the natural projection from the space with the coordinates $(x,u,\theta,\zeta)$
onto the space with the coordinates $(x,u,\theta)$.
Then the corresponding transformational structures are 
the pushforwards, by~$\varpi$, of their counterparts under the former interpretation.
An advantage of the latter interpretation is that it avoids 
the insignificant equivalence transformations in $G^\sim_0$ related to the constraint $\zeta=0$.

The systems  $\mathcal L_{\theta\zeta}$ and $\mathcal L_{\tilde\theta\tilde\zeta}$ 
are $G^\sim$-equivalent if and only if 
their homogeneous counterparts $\mathcal L_{\theta0}$ and $\mathcal L_{\tilde\theta0}$
are $G^\sim_0$-equivalent.
Here the $G^\sim$- and $G^\sim_0$-equivalences 
can be replaced by the $\mathcal G^\sim$- and $\mathcal G^\sim_0$-equivalences, respectively.
Thus, the group classification of systems from the class $\mathcal L|_{\mathcal S}$ 
reduces to the group classification of systems from the class $\mathcal L^0|_{\mathcal S^0}$. 

For each~$\theta\in\mathcal S^0$ 
we denote by~$G^{\rm lin}_{\theta0}$ the subgroup 
of the point symmetry group~$G_{\theta0}$ of~$\mathcal L_{\theta0}$ 
that consists of the linear superposition transformations $\pi_*\mathscr T_h$:
$\tilde x_j=x_j$, $\tilde u^a=u^a+h^a(x)$,
where the tuple~$h$ runs through the solution set of~$\mathcal L_{\theta0}$. 
The subgroup~$H_0$ of~$G^\sim_0$ 
and the family $N^{\rm lin}=\{G^{\rm lin}_{\theta0}\mid\theta\in\mathcal S^0\}$
satisfy the conditions in Definition~\ref{def:DisjointedlySemi-normClass}. 
Therefore, the class $\mathcal L^0|_{\mathcal S^0}$ is disjointedly uniformly $(H_0,N^{\rm lin})$-semi-normalized. 
We call this kind of semi-normalization, 
which is characteristic for classes of homogeneous linear systems of differential equations, 
\emph{uniform semi-normalization with respect to the linear superposition of solutions}. 
If the systems from the class $\mathcal L^0|_{\mathcal S^0}$ have no common solutions, 
which is a regular situation, then $H_0=G^\sim_0$, and we will not indicate~$H_0$ in this case. 
By Theorem~\ref{thm:DisjontedlylySemi-normClasses}(ii),
for each~$\theta\in\mathcal S^0$ the group~$G_{\theta0}$ splits over~$G^{\rm lin}_{\theta0}$, 
and $G_{\theta0}=G^{\rm ess}_{\theta0}\ltimes G^{\rm lin}_{\theta0}$, 
where $G^{\rm ess}_{\theta0}=G_{\theta0}\cap\pi_*H_0$.
By Theorem~\ref{thm:DisjontedlylySemi-normClasses}(iii), 
the splitting of the point symmetry group induces a splitting of the corresponding maximal Lie invariance algebra, 
$\mathfrak g_{\theta0}=\mathfrak g^{\rm ess}_{\theta0}\lsemioplus\mathfrak g^{\rm lin}_{\theta0}$. 
Here $\mathfrak g^{\rm ess}_{\theta0}$ is the essential Lie invariance algebra of~$\mathcal L_{\theta0}$, 
$\mathfrak g^{\rm ess}_{\theta0}=\mathfrak g_{\theta0}\cap\pi_*\mathfrak h_0$, 
and the ideal $\mathfrak g^{\rm lin}_{\theta0}$, being the trivial part of~$\mathfrak g_{\theta0}$, 
consists of vector fields generating one-parameter symmetry groups of linear superposition of solutions. 
Thus, the group classification problem for the class $\mathcal L^0|_{\mathcal S^0}$ 
reduces to the classification of \emph{appropriate subalgebras} 
of the equivalence algebra~$\mathfrak g^\sim_0$ of this class. 
The qualification ``appropriate'' means that the pushforward of such a subalgebra by~$\pi$
is the essential Lie invariance algebra of a system from $\mathcal L^0|_{\mathcal S^0}$. 
We can classify appropriate subalgebras of~$\pi_*\mathfrak g^\sim_0$ instead of~$\mathfrak g^\sim_0$, 
see Definition~\ref{def:AppropriateSubalg} below. 

Given a class $\mathcal L^0|_{\mathcal S^0}$ 
of linear homogeneous systems of differential equations
that is uniformly semi-normalized with respect to the linear superposition of solutions, 
it is not necessary to start by considering the associated normalized superclass~$\mathcal L|_{\mathcal S}$ 
of generally inhomogeneous linear systems.
The class~$\mathcal L^0|_{\mathcal S^0}$ itself can be the starting point of the analysis. 
In order to get directly its specific semi-normalization, 
we need to suppose the following properties of~$\mathcal L^0|_{\mathcal S^0}$:
\begin{enumerate}\itemsep=0ex
\item 
The transformational part of any admissible transformation in the class~$\mathcal L^0|_{\mathcal S^0}$ 
is of the form~\eqref{eq:EquivTransOfLinSystems}. 
(Then the parameter function tuple~$h$ necessarily runs through the solution set of the source system.)
\item 
Any admissible transformation in the class~$\mathcal L^0|_{\mathcal S^0}$ 
with zero value of~$h$ belongs to the action groupoid of the equivalence group~$G^\sim_0$ of this class. 
\end{enumerate}

The normal subgroup~$N^\sim_0$ and the subgroup~$H_0$ of~$G^\sim_0$ and the family~$N^{\rm lin}$ 
are defined as above, $G^\sim_0=H_0\ltimes N^\sim_0$ and 
the class $\mathcal L^0|_{\mathcal S^0}$ is disjointedly uniformly $(H_0,N^{\rm lin})$-semi-normalized, 
i.e., it is uniformly semi-normalized with respect to the linear superposition of solutions.
The action groupoid of~$H_0$ can be called the \emph{essential equivalence groupoid} of the class $\mathcal L^0|_{\mathcal S^0}$
\cite[Remark 16]{vane2020b}.

The class~$\mathcal F$ of linear Schr\"odinger equations fits well into the developed framework,  
which we use in the present paper to study the group classification problem for this class in general 
and completely solve it in dimension 1+2 
in Sections~\ref{sec:PreliminarySymAnalisysOfMultiDSchEqs} and~\ref{sec:GroupClassificationOf(1+2)DLinSchEqs}, 
respectively.   

\begin{remark}
There exist classes of homogeneous linear systems of differential equations 
that are not uniformly semi-normalized with respect to the linear superposition of solutions
but disjointedly uniformly semi-normalized with respect to families of point-symmetry subgroups 
that are wider than the corresponding point-symmetry subgroups associated with the linear superposition of solutions, 
see Remark~\ref{rem:igammax subclass}.
\end{remark}

\section{Algebraic method of group classification\\ for semi-normalized classes}\label{sec:AlgMethodForSemi-normalizedClasses}

The property of $(H,N_{\mathcal S})$-semi-normalization of the class~$\mathcal L|_{\mathcal S}$ 
with known~$H_0$ and~$N_{\mathcal S}$
allows one to solve the group classification problem for this class using an algebraic approach.%
\footnote{%
See \cite[Section~2.2]{opan2022a} for a modern revisited statement of group classification problems
for classes of differential equations.
} 
 
The subgroups $G^{\rm ess}_\theta=G_\theta\cap\pi_*H$ of the groups~$G_\theta$
and the subalgebras $\mathfrak g^{\rm ess}_\theta=\mathfrak g_\theta\cap\pi_*\mathfrak h$ of the algebras~$\mathfrak g_\theta$, 
where $\theta$ runs through~$\mathcal S$,
are in general not known after describing the ingredients of  the semi-normalization structure of~$\mathcal L|_{\mathcal S}$.
According to Theorem~\ref{thm:OnInvAlgsInSemi-normClasses},
we have that for each $\theta\in\mathcal S$, the maximal Lie invariance algebra~$\mathfrak g_\theta$
of the system~$\mathcal L_\theta$ is the sum of~$\mathfrak g^{\rm ess}_\theta$ and~$\mathfrak n_\theta$,
$\mathfrak g_\theta=\mathfrak g^{\rm ess}_\theta+\mathfrak n_\theta$.
Note that in general, the essential Lie invariance algebra~$\mathfrak g^{\rm ess}_\theta$ nontrivially intersects
the uniform Lie invariance algebra~$\mathfrak n_\theta$,
and hence a natural basis of the maximal Lie invariance algebra~$\mathfrak g_\theta$
is constituted by a basis of the intersection $\mathfrak g^{\rm ess}_\theta\cap\mathfrak n_\theta$
and its complements to bases of~$\mathfrak g^{\rm ess}_\theta$ and of~$\mathfrak n_\theta$.
(In the infinite-dimensional case, bases are replaced by spanning sets of vector fields.)
The essential Lie invariance algebras are subalgebras of $\pi_*\mathfrak h$ 
but usually, only some of the subalgebras of $\pi_*\mathfrak h$ can serve as such algebras.  

\begin{definition}\label{def:AppropriateSubalg}
We call a subalgebra of $\pi_*\mathfrak h$ (resp.\ of~$\mathfrak h$) \emph{appropriate}
if it (resp.\ its pushforward by~$\pi_*$) coincides with the essential Lie invariance algebra 
of a system from the class~$\mathcal L|_{\mathcal S}$. 
\end{definition}

The pushforwards by transformations from~$\pi_*H$ (resp.\ from~$H$) preserve the set of appropriate subalgebras 
of $\pi_*\mathfrak h$ (resp.\ of~$\mathfrak h$)
since $\mathfrak g^{\rm ess}_{\mathscr T_*\theta}=(\pi_*\mathscr T)_*\mathfrak g^{\rm ess}_\theta$ 
for any $\mathscr T\in H$ and any $\theta\in\mathcal S$,%
\footnote{%
$(\pi_*\mathscr T)_*\mathfrak g^{\rm ess}_\theta
=(\pi_*\mathscr T)_*(\mathfrak g_\theta^{}\cap\pi_*\mathfrak h)
=\big((\pi_*\mathscr T)_*\mathfrak g_\theta^{}\big)\cap\big((\pi_*\mathscr T)_*\pi_*\mathfrak h\big)
=\mathfrak g_{\mathscr T_*\theta}^{}\cap\pi_*(\mathscr T_*\mathfrak h)
=\mathfrak g_{\mathscr T_*\theta}^{}\cap\pi_*\mathfrak h
=\mathfrak g^{\rm ess}_{\mathscr T_*\theta}$.
The second, the third and the fourth equalities hold since 
the mapping $(\pi_*\mathscr T)_*$ is a bijection, 
all elements of~$G^\sim$ and of~$\mathfrak g^\sim$ are projectable to the space with the coordinates $(x,u)$ 
and $\mathscr T_*\mathfrak h=\mathfrak h$, respectively. 
We also have that $(\pi_*\mathscr T)_*\mathfrak g_\theta^{}=\mathfrak g_{\mathscr T_*\theta}^{}$.
} 
which generates an equivalence relation within this set. 
It is natural to call it the $H$-equivalence of appropriate subalgebras.
In view of Theorem~\ref{thm:EquivGroupoidOfSemi-normClasses}(iii), 
the $H$-equivalence can be replaced by the $G^\sim$- and even the $\mathcal G^\sim$-equivalence.

Making a preliminary analysis of the determining equations for Lie-symmetry vector fields of systems from~$\mathcal L|_{\mathcal S}$, 
one can find constraints that are satisfied by appropriate subalgebras of~$\pi_*\mathfrak h$ but not by general ones. 
These constraints often include upper bounds for the dimensions of the appropriate subalgebras 
or of their specific parts. 
This crucially reduces the amount of subalgebras to be classified 
and makes the classification problem tractable even in the case of infinite-dimensional~$\pi_*\mathfrak h$. 
Instead of classifying the appropriate subalgebras of~$\pi_*\mathfrak h$, 
one can classify the appropriate subalgebras of~$\mathfrak h$ and then push them forward by~$\pi$. 

Consequently, we derive the following assertion.

\begin{proposition}\label{pro:GroupClassificationOfSemi-normClasses}
Let the class~$\mathcal L|_{\mathcal S}$ be semi-normalized
with respect to a subgroup~$H$ of~$G^\sim$ and a symmetry-subgroup family that are known,
and let $\mathfrak h$ be a subalgebra of~$\mathfrak g^\sim$ that corresponds to~$H$.
Then the solution of the group classification problem for the class~$\mathcal L|_{\mathcal S}$
reduces to the classification, up to the $H$-equivalence,
of appropriate subalgebras of~$\pi_*\mathfrak h$
or, equivalently, of the algebra~$\mathfrak h$ itself.
\end{proposition}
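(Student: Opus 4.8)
The plan is to assemble the statement directly from the structural results already proved for semi-normalized classes, treating Proposition~\ref{pro:GroupClassificationOfSemi-normClasses} as a bookkeeping corollary rather than a theorem requiring new ideas. First I would recall that solving the group classification problem means describing, up to the equivalence generated by~$\mathcal G^\sim$, all systems $\mathcal L_\theta$ from the class together with their maximal Lie invariance algebras~$\mathfrak g_\theta$. By Theorem~\ref{thm:OnInvAlgsInSemi-normClasses}, for each $\theta\in\mathcal S$ we have the decomposition $\mathfrak g_\theta=\mathfrak g^{\rm ess}_\theta+\mathfrak n_\theta$ with $\mathfrak g^{\rm ess}_\theta=\mathfrak g_\theta\cap\pi_*\mathfrak h$, where the ideal~$\mathfrak n_\theta$ is the \emph{known} uniform part coming from the prescribed family~$N_{\mathcal S}$. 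Thus the only genuinely unknown ingredient of~$\mathfrak g_\theta$ is the essential subalgebra~$\mathfrak g^{\rm ess}_\theta$, and recovering it is equivalent to recovering the whole of~$\mathfrak g_\theta$.

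Next I would reduce the classification of the essential subalgebras to a purely algebraic classification inside the single fixed algebra~$\pi_*\mathfrak h$. The key point is that every $\mathfrak g^{\rm ess}_\theta$ is by construction a subalgebra of~$\pi_*\mathfrak h$, and by Definition~\ref{def:AppropriateSubalg} the subalgebras of~$\pi_*\mathfrak h$ that actually occur in this way are precisely the \emph{appropriate} ones. So the set of essential Lie invariance algebras arising from the class is exactly the set of appropriate subalgebras of~$\pi_*\mathfrak h$. It remains to match up the equivalence relations: I would invoke the identity $\mathfrak g^{\rm ess}_{\mathscr T_*\theta}=(\pi_*\mathscr T)_*\mathfrak g^{\rm ess}_\theta$ for $\mathscr T\in H$ (established in the footnote preceding the proposition) to conclude that the $H$-action on~$\mathcal S$ corresponds, under $\theta\mapsto\mathfrak g^{\rm ess}_\theta$, to the $H$-action by pushforwards on appropriate subalgebras. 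Hence two systems are $H$-equivalent exactly when their essential subalgebras are $H$-equivalent. By Theorem~\ref{thm:EquivGroupoidOfSemi-normClasses}(iii), similarity of systems coincides with $H$-equivalence, so the $H$-equivalence of appropriate subalgebras is the correct equivalence to classify with respect to, and it may freely be upgraded to $G^\sim$- or $\mathcal G^\sim$-equivalence.

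Finally I would note the equivalence between classifying appropriate subalgebras of~$\pi_*\mathfrak h$ and classifying them inside~$\mathfrak h$ itself. Since all elements of~$\mathfrak g^\sim$ are projectable to the space of $(x,u)$, the projection $\pi_*$ is a Lie-algebra homomorphism $\mathfrak h\to\pi_*\mathfrak h$ that is compatible with the respective $H$-actions; by the ``equivalently'' clause in Definition~\ref{def:AppropriateSubalg}, a subalgebra of~$\mathfrak h$ is appropriate iff its pushforward is an appropriate subalgebra of~$\pi_*\mathfrak h$, so one may work on whichever side is more convenient and then transport the list by~$\pi_*$. Combining these three reductions yields the proposition. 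I do not expect a serious obstacle here, since every required ingredient is already in place; the one point needing care is the faithful translation of the equivalence relations, i.e.\ verifying that no appropriate subalgebra is lost or spuriously identified when passing between $\theta$ and $\mathfrak g^{\rm ess}_\theta$ and between $\mathfrak h$ and~$\pi_*\mathfrak h$, which is exactly what the cited pushforward identity and the ``equivalently'' clause of the definition guarantee.
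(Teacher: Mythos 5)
Your proposal is correct and follows essentially the same route as the paper, which does not give a formal proof but derives the proposition from exactly the ingredients you cite: the decomposition $\mathfrak g_\theta=\mathfrak g^{\rm ess}_\theta+\mathfrak n_\theta$ with known $\mathfrak n_\theta$, the definition of appropriate subalgebras, the equivariance $\mathfrak g^{\rm ess}_{\mathscr T_*\theta}=(\pi_*\mathscr T)_*\mathfrak g^{\rm ess}_\theta$, and Theorem~\ref{thm:EquivGroupoidOfSemi-normClasses}(iii). One small overstatement: $H$-equivalence of systems implies $H$-equivalence of their essential subalgebras but not conversely (inequivalent potentials can share the same essential algebra, as in Case~0), so the residual classification of systems within each subalgebra class remains a separate step --- this does not affect the validity of the reduction itself.
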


In view of Theorem~\ref{thm:EquivGroupoidOfSemi-normClasses}(iii), Theorem~\ref{thm:OnInvAlgsInSemi-normClasses} 
and the above discussion, we suggest the following procedure of solving the group classification problem 
for a semi-normalized class~$\mathcal L|_{\mathcal S}$ of systems of differential equations. 
We formulate this procedure in terms of the algebra~$\pi_*\mathfrak h$.

\begin{itemize}\itemsep=0ex

\item
When computing the equivalence groupoid~$\mathcal G^\sim$ and analyzing its structure,
construct a subgroup~$H$ of~$G^\sim$ and
a family of uniform point symmetry groups $N_{\mathcal S}=\{N_\theta\mid\theta\in\mathcal S\}$
such that the class~$\mathcal L|_{\mathcal S}$ is $(H,N_{\mathcal S})$-semi-normalized.

\item
Find the subalgebra~$\mathfrak h$ of~$\mathfrak g^\sim$ associated with~$H$ 
and the uniform Lie invariance algebras~$\mathfrak n_\theta$ associated with~$N_\theta$.

\item
Make a preliminary analysis of the system ${\rm DE}$ of determining equations 
for Lie-symmetry vector fields of systems from the class~$\mathcal L|_{\mathcal S}$. 
The system ${\rm DE}$ usually splits into two subsystems, 
the ``essential'' and the ``uniform'' ones, 
which are satisfied by the components of vector fields from the respective parts
of the maximal Lie invariance algebras of systems from the class~$\mathcal L|_{\mathcal S}$.   

\item
Derive constraints satisfied by the appropriate subalgebras of~$\pi_*\mathfrak h$. 

\item
Classify, modulo the $H$-equivalence, the subalgebras of~$\pi_*\mathfrak h$ 
that satisfy the derived constraints. 

\item
For each subalgebra~$\mathfrak s$ in the constructed set~$\mathcal A$ 
of $H$-inequivalent families of subalgebras of~$\pi_*\mathfrak h$, 
\begin{itemize}\itemsep=0ex
\item
successively substitute the components of its basis/spanning elements into~${\rm DE}$, 
\item
merge all the obtained systems into a single system~${\rm DE}_{\mathfrak s}$, 
\item
solve the system~${\rm DE}_{\mathfrak s}$ with respect to the arbitrary elements of the class~$\mathcal L|_{\mathcal S}$,
\item
for each solution~$\theta$ of~${\rm DE}_{\mathfrak s}$, 
check whether $\mathfrak g^{\rm ess}_\theta=\mathfrak s$ 
(it may happen that $\mathfrak g^{\rm ess}_\theta\varsupsetneq\mathfrak s$).
\end{itemize}

\item
The subalgebra~$\mathfrak s$ is appropriate if and only if
the system~${\rm DE}_{\mathfrak s}$ is compatible and 
$\mathfrak g^{\rm ess}_\theta=\mathfrak s$ for a solution~$\theta$ of~${\rm DE}_{\mathfrak s}$.
This allows one to finally select those subalgebras in~$\mathcal A$ 
that are indeed appropriate subalgebras of~$\pi_*\mathfrak h$. 

\item
For each subalgebra~$\mathfrak s$ in the constructed set~$\mathcal B$ 
of $H$-inequivalent families of appropriate subalgebras of~$\pi_*\mathfrak h$, 
analyze the equivalence within the set 
$\mathcal S_{\mathfrak s}:=\{\theta\in\mathcal S\mid\mathfrak g^{\rm ess}_\theta=\mathfrak s\}$ 
that is induced by the stabilizer of~$\mathfrak s$ in~$\pi_*H$.
If there are $\pi_*H$-equivalent subalgebras~$\mathfrak s$ in~$\mathcal B$, 
one can also consider the equivalence between the corresponding sets~$\mathcal S_{\mathfrak s}$. 
When possible, use these equivalences to reduce the sets~$\mathcal B$ and~$\mathcal S_{\mathfrak s}$.

\item
Assemble the set 
$\{(\theta,\mathfrak s)\mid\mathfrak s\in\mathcal B,\,\theta\in\mathcal S_{\mathfrak s}\}$. 

\end{itemize}

The last set constitutes 
the solution of the group classification problem for the class~$\mathcal L|_{\mathcal S}$ 
in terms of essential Lie invariance algebras modulo both the $G^\sim$- and the $\mathcal G^\sim$-equivalences 
\cite[Section~2.2]{opan2022a}.
We call the above procedure \emph{the algebraic method of group classification for semi-normalized classes 
of differential equations}.
To present the solution of the group classification problem for the class~$\mathcal L|_{\mathcal S}$ 
in terms of maximal Lie invariance algebras, 
one needs to indicate the algebra $\mathfrak n_\theta$ for each $\theta\in\mathcal S$ 
and modify the set of classification cases to 
$\{(\theta,\mathfrak g_\theta)\mid
\mathfrak s\in\mathcal A,\,\theta\in\mathcal S,\,\mathfrak g^{\rm ess}_\theta=\mathfrak s,\,
\mathfrak g_\theta=\mathfrak s+\mathfrak n_\theta\}$. 

In fact, the above method is effective for the case of uniform semi-normalization, 
especially disjointed uniform semi-normalization, 
including uniform semi-normalized with respect to the linear superposition of solutions.
Then all the listed steps of the classification procedure can be properly implemented, 
and the mere presentation of inequivalent essential Lie-symmetry extensions is definitely sufficient. 

It may happen that 
the same class of systems of differential equations is semi-normalized under different choices 
of subgroups of its equivalence group and of families of uniform symmetry-subgroups of its systems. 
The proper selection of these objects helps to solve the group classification problem for such a class
in a more efficient way. 

\begin{remark}
The class~$\mathcal E$ from Remark~\ref{rem:igammax subclass}
is non-disjointedly semi-normalized with respect to the entire group~\smash{$G^\sim_{\mathcal E}$}
and the family~$\mathcal N=\{G^{\rm unf}_\gamma\}$.
At the same time, the interpretation of the class~$\mathcal E$
as disjointedly semi-normalized with respect to the family $\mathcal N=\{G^{\rm unf}_\gamma\}$
and the proper subgroup~$H$ of~$G^\sim_{\mathcal E}$ 
singled out from~$G^\sim_{\mathcal E}$ by the constraint $c=1$, see Remark~\ref{rem:igammax subclass},
is more convenient and effective for group classification of this class.
Indeed, then one can apply Theorem~\ref{thm:DisjontedlylySemi-normClasses},
which is slightly stronger than Theorems~\ref{thm:EquivGroupoidOfSemi-normClasses}
and~\ref{thm:OnInvAlgsInSemi-normClasses},
and, following Proposition~\ref{pro:GroupClassificationOfSemi-normClasses},
handle the smaller group~$H$.
One can also construct the generalized extended equivalence group of the class~$\mathcal E$ 
or its efficient part \cite{opan2017a,opan2020b},  
considering the reparameterized class~$\bar{\mathcal E}$ for~$\mathcal E$, 
where additional arbitrary elements~$\hat\gamma$ and~$\check\gamma$ defined by the constraints
$\hat\gamma_t=\gamma$ and $\check\gamma_t=t\gamma$,
and prove that the reparameterized class~$\bar{\mathcal E}$ is 
uniformly semi-normalized with respect to the linear superposition of solutions in the generalized sense, 
i.e., the original class~$\mathcal E$ is 
uniformly semi-normalized with respect to the linear superposition of solutions 
in the generalized extended sense.  
Nevertheless, the group classification of~$\mathcal E$ based on the last interpretation 
is essentially less efficient computationally than those based on 
the above semi-normalizations of~$\mathcal E$ in the usual sense, 
not to mention the absence of the corresponding theoretical background and related techniques for  
the case of semi-normalization in the generalized extended sense.
\end{remark}

\begin{remark}
There are examples of the opposite pattern.
\emph{Given a class of systems of differential equations that is disjointedly uniformly semi-normalized 
with respect to a subgroup of its equivalence group~$G^\sim$ and a family of uniform symmetry-subgroups of its systems, 
nevertheless, the group classification problem for this class can be more efficiently solved
using its non-disjointed semi-normalization with respect to a narrower subgroup of~$G^\sim$
and a family of wider uniform symmetry-subgroups rather than involving the former (disjointed) semi-normalization.}
In particular, this is the case for the class~$\mathcal F_{{\rm l}x_1}$
of (1+2)-dimensional linear Schr\"odinger equations of the form~\eqref{MLinSchEqs}
with potentials $V=U(t,x_2)+i\gamma(t)x_1$,
where $U$ is a complex-valued smooth function of~$(t,x_2)$
and $\gamma$ is a real-valued smooth function of~$t$, 
see the consideration related to the equation~\eqref{EqPotentialsWith2ProportionalChis} below. 
The class~$\mathcal F_{{\rm l}x_1}$ is (disjointedly) uniformly semi-normalized with respect to
its entire equivalence group~$G^\sim_{{\rm l}x_1}$ and linear superposition of solutions.
At the same time, it is more convenient to solve the group classification problem for this class
using its non-disjointed semi-normalization with respect to a proper subgroup of~$G^\sim_{{\rm l}x_1}$
and a family of wider uniform symmetry groups than the symmetry groups of linear superposition of solutions.
This is the way of treating the case $(k_2,r_0)=(0,1)$
in the proof of Theorem~\ref{thm:GroupClassifictionOf(1+2)DLinSchEqs} below.
\end{remark}

\section{Preliminary symmetry analysis of multidimensional\\ linear Schr\"odinger equations with complex-valued potentials}
\label{sec:PreliminarySymAnalisysOfMultiDSchEqs}

To demonstrate the efficiency of the developed theory
and the proposed algebraic method of group classification for semi-normalized classes, 
we study the class~$\mathcal F$ of (1+$n$)-dimensional ($n\geqslant1$) linear Schr\"odinger equations
with complex-valued potentials, which are of the form~\eqref{MLinSchEqs}.
First, in this section we carry out a preliminary analysis of transformational properties and Lie symmetries 
of equations from this class for an arbitrary number~$n$ of space variables~$x$. 
It turns out that it is possible, without fixing a value of~$n$, to find
the equivalence groupoid~$\mathcal G^\sim$, 
the equivalence group~$G^\sim$ and 
the equivalence algebra~$\mathfrak g^\sim$ of the class~$\mathcal F$ 
as well as to preliminarily analyze the determining equations for Lie symmetries of equations from this class
and to derive basic properties of such symmetries. 
In Section~\ref{sec:GroupClassificationOf(1+2)DLinSchEqs}, 
we specify these results to the case $n=2$ and use them 
to carry out the complete group classification of the class~$\mathcal F$ with $n=2$.
Recall that the case $n=1$ was exhaustively studied in~\cite{kuru2018a}. 

Similarly to other Schr\"odinger equations but in contrast to most systems of differential equations 
arising in applications, 
for equations from the class~$\mathcal F$,
the independent variables~$t$, $x_1$,~\dots,~$x_n$ take values in the set of real numbers, 
whereas values of the dependent variable $\psi$ and of the arbitrary element~$V$ are in general complex. 
To avoid considering complex values, one may replace Schr\"odinger equations by 
equivalent systems for two real-valued unknown functions, e.g., 
the real and the imaginary parts of~$\psi$ or 
the absolute value and the argument of~$\psi$. 
Nevertheless, such a replacement complicates the study more essentially 
than the involvement of complex numbers does, 
especially when using the absolute value and the argument of~$\psi$, 
which results in nonlinear systems instead of linear ones.
This is why we work with the form~\eqref{MLinSchEqs} of equations from the class~$\mathcal F$.
In view of the complex nature of~$\psi$ and~$V$,
it is necessary to formally extend the space of variables~$(t,x,\psi)$ with~$\psi^*$ and
the tuple of the single arbitrary element~$V$ with~$V^*$. 
Here and in what follows star denotes the complex conjugation.
In particular, we introduce~$\psi^*$ (resp.\ $V^*$) as an additional argument for all functions 
depending on~$\psi$ (resp.\ $V$), 
which includes, e.g., the components of point transformations and of vector fields 
in the spaces with coordinates $(t,x,\psi)$ or $(t,x,\psi,V)$. 
We thus follow Wirtinger's approach.%
\footnote{%
The variables $z=z_1+iz_2$ and $z^*=z_1-iz_2$ with $z_1,z_2\in\mathbb R$ 
are in fact not independent of each other in the canonical sense  
since they are uniquely related by the complex conjugation. 
However, the equalities $\p_{z^*}z=\p_zz^*=0$ and $\p_zz=\p_{z^*}z^*=1$
with the Wirtinger derivatives $\p_z=\tfrac12(\p_{z_1}-{\rm i}\p_{z_2})$ and $\p_{z^*}=\tfrac12(\p_{z_1}+{\rm i}\p_{z_2})$
allow one to formally treat $z$ and $z^*$ as independent variables 
and to simultaneously indicate them as arguments of relevant functions instead of $(z_1,z_2)$.
}
Confining a differential function of~$\psi$, i.e., of $(\psi,\psi^*)$ in the above interpretation,
to the solution set of an equation from the class~$\mathcal F$, 
we should also take into account its complex conjugate $-i\psi^*_t+\psi^*_{aa}+V^*(t,x)\psi^*=0$,
thus considering the system of two equations instead of the single one, 
cf.\ \cite[Section~1]{kuru2018a}.
At the same time, there two essential simplifications due to using Wirtinger's approach. 
In view of the fact that 
the invariance (resp.\ equivalence) conditions hold or do not hold simultaneously 
for equations from the class~$\mathcal F$ and their complex conjugate counterparts,  
it suffices to check such conditions only for the former equations. 
Moreover, presenting point transformations in the spaces with coordinates $(t,x,\psi)$ or $(t,x,\psi,V)$, 
we can omit the transformation components for~$\psi^*$ and~$V^*$ 
since they are just the complex conjugates of those for~$\psi$ and~$V$, respectively.

\begin{notation*}
In this and the next sections, 
$t$ and $x=(x_1,\dots,x_n)$ are the real independent variables,
$\psi$~is the unknown complex-valued function of $t$ and $x$.
For a complex value $\beta$, its conjugate is denoted by $\beta ^*$ , and we define
\[
\hat\beta=\beta \quad\mbox{if}\quad T_t>0 \quad\mbox{and}\quad
\hat\beta=\beta^* \quad\mbox{if}\quad T_t<0,
\]
where $T$ is the $t$-component of point transformations in the space with $t$ as a coordinate.
The indices $a$ and $b$ run from $1$ to $n$, 
$\nabla\psi=(\psi_1,\dots,\psi_n)$, $\nabla\psi^*=(\psi^*_1,\dots,\psi^*_n)$, $\p_a:=\p/\p x_a$ and $|x|^2:=x_ax_a$.
$E$ is the $n\times n$ identity matrix.
The total derivative operators~$\mathrm D_t$ and~$\mathrm D_a$ are defined as
\begin{gather*}
\mathrm D_t=\p_t+\psi_t\p_\psi+\psi^*_t\p_{\psi^*}+\psi_{tt}\p_{\psi_t}
+\psi_{tb}\p_{\psi_b}+\psi^*_{tt}\p_{\psi^*_t}+\psi^*_{tb}\p_{\psi^*_b}+\cdots,\\
\mathrm D_a=\p_a+\psi_a\p_\psi+\psi^*_a\p_{\psi^*}+\psi_{ta}\p_{\psi_t}
+\psi_{ab}\p_{\psi_b}+\psi^*_{ta}\p_{\psi^*_t}+\psi^*_{ab}\p_{\psi^*_b}+\cdots.
\end{gather*}
\end{notation*}

\subsection{Equivalence groupoid}

We find the equivalence groupoid~$\mathcal G^\sim$ of the class~$\mathcal F$ using the direct method, 
see, e.g., \cite[Section~2]{opan2017a} and \cite[Section~2]{vane2020b} for a general description of this method and 
\cite{bihl2012b,king1991c,king1998a} for computational techniques involved. 
Let $\mathcal L_V$ denote the equation from~$\mathcal F$ with potential~$V$.
We seek for all point transformations of the form
\begin{equation}\label{multequivptrans}
\Phi\colon\
\tilde t=T(t,x,\psi,\psi^*),\
\tilde x_a =X^a(t,x,\psi,\psi^*),\
\tilde\psi=\Psi(t,x,\psi,\psi^*),\
\tilde\psi^*=\Psi^*(t,x,\psi,\psi^*),
\end{equation}
where $\det\p(T,X,\Psi,\Psi^*)/\p(t,x,\psi,\psi^*)\ne 0$ with $X=(X^1,\dots,X^n)$,
that map a fixed equation~$\mathcal L_V$ from the class~$\mathcal F$
to an equation~$\mathcal L_{\tilde V}$:
$i\tilde\psi_{\tilde t}+\tilde\psi_{\tilde x_a\tilde x_a}+\tilde V(\tilde t,\tilde x)\tilde\psi=0$
from the same class.

The following assertion is proved similarly to~\cite[Lemma~1]{popo2010a}.

\begin{lemma}\label{Lemmaequivalmultidim}
Any point transformation $\Phi$ connecting two equations from the class~$\mathcal F$
satisfies the conditions
\begin{gather*}
\begin{split}
&T_a=T_\psi=T_{\psi^*}=0, \quad X^a_\psi=X^a_{\psi^*}=0,\\
&\Psi_\psi=0\quad\text{if}\quad T_t<0 \quad\text{and}\quad\Psi_{\psi^*}=0\quad\text{if}\quad T_t>0.
\end{split}
\end{gather*}
\end{lemma}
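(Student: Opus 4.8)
The plan is to apply the direct method, following the proof of Lemma~1 in~\cite{popo2010a}. I substitute the transformation~\eqref{multequivptrans} into the target equation $i\tilde\psi_{\tilde t}+\tilde\psi_{\tilde x_a\tilde x_a}+\tilde V\tilde\psi=0$ and confine the result to the manifold defined by the source equation~$\mathcal L_V$ together with all its differential consequences, eliminating every derivative involving differentiation with respect to~$t$ via $\psi_t=i(\psi_{aa}+V\psi)$, $\psi^*_t=-i(\psi^*_{aa}+V^*\psi^*)$ and their consequences $\psi_{ta}=i(\psi_{bba}+V_a\psi+V\psi_a)$, etc. After this confinement the remaining jet variables $\psi$, $\psi_a$, $\psi_{ab}$ (symmetric in $a,b$), their higher-order spatial relatives and the corresponding conjugates are functionally independent, so the confined equation must hold identically in them; splitting with respect to these variables yields the determining equations.

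To compute the transformed derivatives I write the first-order ones as the solution of the linear system $\mathrm D_\mu\Psi=\tilde\psi_{\tilde t}\,\mathrm D_\mu T+\tilde\psi_{\tilde x_b}\,\mathrm D_\mu X^b$ for $\mu\in\{t,1,\dots,n\}$, whose coefficient matrix is the total-derivative Jacobian $A:=(\mathrm D_\mu\tilde z_\nu)$ with $\tilde z=(T,X^1,\dots,X^n)$, and I obtain the second spatial derivatives by applying $\mathrm D_a$ once more and inverting~$A$ again. The crucial observation is that $\mathrm D_tT=T_t+T_\psi\psi_t+T_{\psi^*}\psi^*_t$ and $\mathrm D_tX^b=X^b_t+X^b_\psi\psi_t+X^b_{\psi^*}\psi^*_t$, so after confinement the time row of~$A$ carries the second-order jets $\psi_{aa}$, $\psi^*_{aa}$ whenever any one of $T_\psi$, $T_{\psi^*}$, $X^b_\psi$, $X^b_{\psi^*}$ is nonzero. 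To turn the confined target equation into a genuine polynomial identity in the jet variables I would multiply it by a suitable power of $\det A$ and then split by differential order.

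The main obstacle is this first and most technical splitting. When the independent-variable components depend on~$\psi$ or~$\psi^*$, the expansion of $\tilde\psi_{\tilde x_a\tilde x_a}$ produces, after clearing $\det A$, terms that are quadratic in the second-order spatial jets (together with products of second- and third-order jets), whose total differential order exceeds that of every term in $i\tilde\psi_{\tilde t}+\tilde V\tilde\psi$. Requiring all such top-weight contributions to cancel forces $T_\psi=T_{\psi^*}=0$ and $X^a_\psi=X^a_{\psi^*}=0$, so that $A$ depends on $(t,x)$ only. I expect the careful bookkeeping of these highest-order terms to be the delicate part, the remaining splittings being comparatively routine. Once $A$ is free of jet variables, I argue $T_a=0$ as follows: if $T_a\neq0$ then $A^{-1}$ mixes $\mathrm D_t\Psi$ into $\tilde\psi_{\tilde x_b}$, whence $\tilde\psi_{\tilde x_a\tilde x_a}$ acquires $\mathrm D_a\mathrm D_t\Psi\ni\Psi_\psi\psi_{ta}+\Psi_{\psi^*}\psi^*_{ta}$, which on the solution manifold equals an order-three spatial expression $i\Psi_\psi\psi_{bba}+\cdots$ that no other term can balance; since $\Phi$ is invertible, $\Psi$ genuinely depends on~$\psi$ or~$\psi^*$, and this forces $T_a=0$, i.e.\ $T=T(t)$.

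Finally, with $T=T(t)$ the Jacobian is block triangular, $\tilde\psi_{\tilde x_b}=(M^{-1})_{ba}\mathrm D_a\Psi$ is purely spatial with $M=(X^b_a)$, and the second-order spatial part of $i\tilde\psi_{\tilde t}+\tilde\psi_{\tilde x_a\tilde x_a}$ equals $-T_t^{-1}\big(\Psi_\psi\psi_{cc}-\Psi_{\psi^*}\psi^*_{cc}\big)+G^{ab}\big(\Psi_\psi\psi_{ab}+\Psi_{\psi^*}\psi^*_{ab}\big)$, where $G:=(MM^{\mathrm T})^{-1}$ is symmetric and positive definite. Matching the coefficients of the independent jets $\psi_{ab}$ and $\psi^*_{ab}$ gives the matrix identities
\[
\Psi_\psi\big(G-T_t^{-1}E\big)=0,\qquad \Psi_{\psi^*}\big(G+T_t^{-1}E\big)=0.
\]
Since $G+T_t^{-1}E$ is positive definite (hence nonsingular) when $T_t>0$ and $G-T_t^{-1}E=G+|T_t|^{-1}E$ is positive definite when $T_t<0$, exactly one of the scalar factors is forced to vanish, yielding $\Psi_{\psi^*}=0$ for $T_t>0$ and $\Psi_\psi=0$ for $T_t<0$, which is the asserted holomorphy condition and completes the proof.
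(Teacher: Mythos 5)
Your proposal is correct and follows essentially the same route as the paper, which itself gives no details for this lemma beyond the remark that it "is proved similarly to [popo2010a, Lemma~1]": the direct method with confinement to the source equation and its differential consequences, elimination of $t$-derivatives, clearing of $\det A$, and splitting with respect to the functionally independent spatial jets, where the unbalanceable top-order terms force first $T_\psi=T_{\psi^*}=X^a_\psi=X^a_{\psi^*}=0$ and then $T_a=0$. Your concluding positive-definiteness argument correctly yields the dichotomy $\Psi_{\psi^*}=0$ for $T_t>0$ versus $\Psi_\psi=0$ for $T_t<0$, and it is precisely the splitting with respect to $\psi_{ab}$, $\psi^*_{ab}$ that the paper carries out afterwards in the proof of Theorem~\ref{thmeMquivptransf}, where the accompanying identity $Y^a_cY^b_c=\delta_{ab}/|T_t|$ (the first equation of~\eqref{Multderivativesubstitutionsplit}) appears as the surviving constraint.
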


\begin{theorem}\label{thmeMquivptransf}
The equivalence groupoid~$\mathcal G^\sim$ of the class~$\mathcal F$
consists of the triples of the form $(V,\Phi,\tilde V)$,
where $\Phi$ is a point transformation in the space of variables,
whose components are
\begin{subequations}\label{Mequgroupoidsub}
\begin{gather}\label{Mequgroupoidsub_a}
\tilde t=T, \quad
\tilde x_a=|T_t|^{1/2}O^{ab}x_b+\mathcal X^a,
\\ \label{Mequgroupoidsub_b}
\tilde \psi=\exp\left(
\frac i8\frac{T_{tt}}{|T_t|}\,|x|^2+
\frac i2\frac{\varepsilon'\mathcal X^b_t}{|T_t|^{1/2}}\,O^{ba}x_a+i\Sigma+\Upsilon
\right)(\hat\psi+\hat\Lambda),
\end{gather}
and the target potential~$\tilde V=\tilde V(\tilde t,\tilde x)$ is expressed via the source potential~$V=V(t,x)$ as
\begin{equation}\label{Mequgroupoidsub_c}
\tilde V=\frac{\hat V}{|T_t|}
+\frac{2T_{ttt}T_t-3T_{tt}{}^2}{16\varepsilon'T_t{}^3}|x|^2
+\frac{\varepsilon'}{2|T_t|^{1/2}}\left(\frac{\mathcal X^b_t}{T_t}\right)_{\!t}O^{ba}x_a
+\frac{\Sigma_t-i\Upsilon_t}{T_t}-\frac{\mathcal X^a_t\mathcal X^a_t+inT_{tt}}{4T_t{}^2}.
\end{equation}
\end{subequations}
Here $T$, $\mathcal X^a$, $\Sigma$ and~$\Upsilon$ are arbitrary smooth real-valued functions of $t$ with
$T_t\ne 0$, $\varepsilon'=\sgn T_t$,
$O=(O^{ab})$~is an arbitrary constant $n\times n$ orthogonal matrix,
and $\Lambda=\Lambda(t,x)$ is an arbitrary solution of the initial equation.
\end{theorem}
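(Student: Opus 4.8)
The plan is to compute $\mathcal G^\sim$ by the direct method, starting from the general ansatz~\eqref{multequivptrans} together with the constraints already secured in Lemma~\ref{Lemmaequivalmultidim}. Those constraints reduce the unknowns to $T=T(t)$, $X^a=X^a(t,x)$ and a component $\Psi$ that is holomorphic in~$\psi$ when $T_t>0$ (and anti-holomorphic when $T_t<0$). Since the equivalence conditions hold simultaneously for an equation from~$\mathcal F$ and its complex conjugate, I would treat the case $T_t>0$ in detail and recover the case $T_t<0$ at the end by composing with complex conjugation, which is exactly what the hat-notation $\hat\psi,\hat\Lambda,\hat V$ and $\varepsilon'=\sgn T_t$ encode.

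The core of the computation is to express the target equation $i\tilde\psi_{\tilde t}+\tilde\psi_{\tilde x_a\tilde x_a}+\tilde V\tilde\psi=0$ back in the source coordinates. First I would use the chain rule in the form $\mathrm D_t=T_t\,\tilde{\mathrm D}_{\tilde t}+X^a_t\,\tilde{\mathrm D}_{\tilde x_a}$ and $\mathrm D_b=X^a_b\,\tilde{\mathrm D}_{\tilde x_a}$ to solve for $\tilde\psi_{\tilde t}$ and $\tilde\psi_{\tilde x_a}$, inverting the spatial Jacobian $J:=(X^a_b)$, and then iterate to obtain the transformed Laplacian $\tilde\psi_{\tilde x_a\tilde x_a}$. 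Substituting these expressions and the ansatz for~$\Psi$ into the target equation, and then eliminating $\psi_t$ by means of the source equation $\psi_t=i(\psi_{bb}+V\psi)$, yields a single relation that must hold on the solution manifold. The determining equations then arise by splitting this relation with respect to the jet variables $\psi_{ab}$, $\psi_a$, $\psi$ and the monomials in~$x$ that appear, since these are functionally independent once $\psi_t$ has been removed.

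Solving the determining equations in order, I would proceed as follows. The coefficients of the second-order derivatives $\psi_{ab}$ force the transformed Laplacian to reproduce the source Laplacian without cross terms, i.e.\ $J^{\mathsf T}J=\lambda E$ for a scalar~$\lambda$; matching against the coefficient coming from the time term pins down $\lambda=|T_t|$, so $J=|T_t|^{1/2}O$ with $O$ orthogonal, and the vanishing of the first-order coefficients $\psi_a$ then forces $O$ to be $t$-independent, whence $X^a$ is affine in~$x$ with an arbitrary translational part $\mathcal X^a=\mathcal X^a(t)$, giving~\eqref{Mequgroupoidsub_a}. Linearity together with the holomorphy of~$\Psi$ next forces $\Psi$ to be affine in~$\psi$, $\Psi=\mu(t,x)(\psi+\Lambda(t,x))$ with $\mu=e^{\phi+i\sigma}$; the remaining first-order conditions force the real part~$\phi$ to depend only on~$t$ (this is the free function~$\Upsilon$, free precisely because~$V$ is complex) and the imaginary part~$\sigma$ to be the quadratic-plus-linear polynomial in~$x$ producing the Gaussian phase $\tfrac i8(T_{tt}/|T_t|)|x|^2$, the linear-in-$x$ phase and the free term~$\Sigma$. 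The requirement that the $\Lambda$-dependent remainder vanish identifies $\Lambda$ as a solution of the source equation~$\mathcal L_V$, reflecting the linear superposition of solutions. Finally, the surviving coefficient of~$\psi$, the zeroth-order part, is exactly the relation expressing~$\tilde V$ through~$V$, which after reorganization is~\eqref{Mequgroupoidsub_c}.

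The hard part will be the faithful bookkeeping of the second-order chain rule for $\tilde\psi_{\tilde x_a\tilde x_a}$ and the disciplined separation of the resulting determining equations by powers of the jet variables and of~$x$, where it is easy to conflate genuinely independent coefficients. A secondary subtlety is verifying that the putative $t$-dependence of~$O$ is actually excluded, which I expect to emerge from the first-order coefficients rather than the principal part, and checking \emph{sufficiency}, i.e.\ that every transformation of the form~\eqref{Mequgroupoidsub} indeed maps~$\mathcal F$ into~$\mathcal F$, so that these formulas describe $\mathcal G^\sim$ exactly and not merely a necessary set of constraints.
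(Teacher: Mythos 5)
Your proposal is correct and follows essentially the same route as the paper: the direct method starting from Lemma~\ref{Lemmaequivalmultidim}, transforming derivatives via the chain rule, substituting the source equation for $\psi_t$, and splitting with respect to the jet variables to obtain (i) $J^{\mathsf T}J=|T_t|E$ and hence $X$ affine in $x$ with $J=|T_t|^{1/2}O(t)$, (ii) $\Psi$ affine in $\hat\psi$ with the stated exponential multiplier, (iii) $O_t=0$ from the compatibility of the first-order conditions, and (iv) the zeroth-order relation for $\tilde V$ together with $\Lambda$ solving the source equation. The points you flag as delicate (constancy of $O$ emerging from the first-order coefficients, and reversibility of the derivation for sufficiency) are exactly where the paper's argument does the corresponding work.
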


\begin{proof}
Let a point transformation $\Phi$ connect equations~$\mathcal L_V$ and~$\mathcal L_{\tilde V}$ from the class~$\mathcal F$.
Lemma~\ref{Lemmaequivalmultidim} implies
that $T=T(t)$, $X^a=X^a(t,x)$ and $\Psi=\Psi(t,x,\hat\psi)$
with $T_t\det(\p X/\p x)\Psi_{\hat\psi}\ne 0$.
Acting the total derivative operators~$\mathrm D_t$ and $\mathrm D_a$ 
on the equality $\tilde \psi(\tilde t,\tilde x)=\Psi(t,x,\hat\psi)$, we derive
\begin{gather*}\label{LInSchEqsMultequiveqder}
\mathrm D_t\tilde \psi(\tilde t,\tilde x)=\tilde \psi_{\tilde t}T_t+\tilde \psi_{\tilde x_b}X^b_t= \mathrm D_t\Psi,\quad
\mathrm D_a\tilde \psi(\tilde t,\tilde x)=\tilde \psi_{\tilde x_c} X^c_a=\mathrm D_a\Psi,\\
\mathrm D_b\mathrm D_a\tilde \psi(\tilde t,\tilde x)=\tilde \psi_{\tilde x_c\tilde x_d}X^c_bX^d_a+\tilde \psi_{\tilde x_d}X^d_{ab}=\mathrm D_b\mathrm D_a\Psi.
\end{gather*}
The above equations are equivalent to
\begin{gather}\label{Multderivative}
\tilde\psi_{\tilde t}=\frac{\mathrm D_t\Psi-Y^a_bX^b_t\mathrm D_a\Psi}{T_t},\quad
\tilde\psi_{\tilde x_c}=Y^a_c\mathrm D_a\Psi,\quad
\tilde \psi_{\tilde x_c\tilde x_d}=Y^a_cY^b_d(\mathrm D_b\mathrm D_a\Psi- Y^d_cX^c_{ab}\mathrm D_d\Psi),
\end{gather}
where $X^a_cY^c_b=Y^a_cX^c_b=\delta_{ab}$, and $\delta_{ab}$ is the Kronecker delta.
In fact, the vector-function $Y=(Y^1,\dots,Y^n)$ is the inverse
of the vector-function $X=(X^1,\dots,X^n)$ with respect to $x$,
and $Y^a_c=\p Y^a/\p\tilde x_c$.
We substitute the expressions~\eqref{multequivptrans} and~\eqref{Multderivative}
for~$\tilde\psi$, $\tilde\psi_{\tilde t}$ and $\tilde \psi_{\tilde x_c\tilde x_d}$
and then the expression for $\hat\psi_t$, $\hat\psi_t=i\varepsilon'(\hat\psi_{aa}+\hat V\hat\psi)$,
into the equation~$\mathcal L_{\tilde V}$.
As a result, we derive the equation
\begin{gather*}
\frac i {T_t}\left(\Psi_t+\Psi_{\hat\psi}(i\varepsilon'\hat\psi_{aa}+i\varepsilon'\hat V\hat\psi)-Y^a_b
(\Psi_a+\Psi_{\hat\psi}\hat\psi_a) X^b_t\right)
+Y^a_cY^b_c\left(\Psi_{ab}+2\Psi_{a\hat\psi}\hat\psi_b\right)
\\
\qquad{}+Y^a_cY^b_c\left(\Psi_{\hat\psi \hat\psi}\hat\psi_b\hat\psi_a+\Psi_{\hat\psi}\hat\psi_{ab}-Y^d_c(\Psi_d+\Psi_{\hat\psi}\hat\psi_d)X^c_{ab}\right)+\tilde V\Psi=0.
\end{gather*}
Then splitting this equation with respect to various derivatives of
$\hat\psi$ and additionally arranging lead to the system
\begin{align}\label{Multderivativesubstitutionsplit}
&Y^a_cY^b_c=\frac{\delta_{ab}}{|T_t|},\quad
\Psi_{\hat\psi\hat\psi}=0,
\\[1ex]\label{Multderivativesubstitutionsplita}
&\frac 2{|T_t|}\Psi_{a\hat\psi}-\frac i{T_t}Y^a_b\Psi_{\hat\psi}X^b_t-\frac1
{|T_t|}Y^a_c\Psi_{\hat\psi}X^c_{aa}=0,
\\[1ex]\label{Multderivativesubstitutionsplitare}
&\frac i{T_t}\Psi_t-\frac 1{|T_t|}\hat V \Psi_{\hat\psi}\hat\psi-\frac i{T_t}Y^a_b\Psi_aX^b_t
+\frac1{|T_t|}\Psi_{aa}-\frac1{|T_t|}Y^d_c\Psi_d X^c_{aa}+\tilde V\Psi=0.
\end{align}

The first equation in~\eqref{Multderivativesubstitutionsplit} together with the condition $Y^a_cX^c_b=\delta_{ab}$
imply that $X^b_a= |T_t|Y^a_b$.
Therefore, $X^c_aX^c_b= |T_t|\delta_{ab}$, i.e., $X_a\cdot X_b=|T_t|\delta_{ab}$ in terms of the tuple~$X$.
We differentiate this equation for each $(a,b)$ with respect to~$x_c$ and permute the indices~$a$, $b$ and~$c$,
which gives the equations\looseness=-1
\[
X_{ac}\cdot X_b+X_a\cdot X_{bc}=0,\quad
X_{ab}\cdot X_c+X_b\cdot X_{ac}=0,\quad
X_{bc}\cdot X_a+X_c\cdot X_{ab}=0
\]
implying $X_a\cdot X_{bc}=0$ for all values of~$(a,b,c)$.
Since for each $(t,x)$ the tuples $X_1$, \dots, $X_n$ constitute an (orthogonal) basis of~$\mathbb R^n$,
this means that $X_{bc}=0$, i.e.,
the vector-function~$X$ is affine in~$x$ with coefficients depending on~$t$.
The equations $X_a\cdot X_b=|T_t|\delta_{ab}$ constrain the matrix~$(X^a_b)$,
and thus we have
$
X^a=|T_t|^{1/2}O^{ab}(t)x_b+\mathcal X^a(t),
$
where $O=(O^{ab})$ is a time-dependent orthogonal matrix and $\mathcal X^a$ is a time-dependent vector.

The general solution of the second equation in~\eqref{Multderivativesubstitutionsplit}
is $\Psi=\Psi^1(t,x)\hat\psi+\Psi^0(t,x)$,
where $\Psi^0$ and $\Psi^1$ are smooth complex-valued functions of $t$ and $x$ with $\Psi^1\ne 0$.

We substitute the above expression of $X^a$ and~$\Psi$ into~\eqref{Multderivativesubstitutionsplita} to get
\begin{gather}\label{LinSchEqsmultsimplifiednext}
\frac{\Psi^1_a}{\Psi^1}=\frac{i}{2T_t}X^b_aX^b_t.
\end{gather}
Cross-differentiating copies of the equation~\eqref{LinSchEqsmultsimplifiednext} with different values of~$a$
and subtracting the results of differentiation from each other
give $\p_c(X^b_aX^b_t)=\p_a(X^b_cX^b_t)$.
Hence $O^{ba}O^{bc}_t=O^{bc}O^{ba}_t$,
or $O^{\mathsf T}O_t-O^{\mathsf T}_tO=0$ in the matrix terms,
where $O^{\mathsf T}$ is the transpose of~$O$.
At the same time, differentiating the orthogonality condition $O^{\mathsf T}O=E$,
where $E$ is the $n\times n$ identity matrix, with respect to~$t$
results in $O^{\mathsf T}O_t+O^{\mathsf T}_tO=0$.
Therefore, $O^{\mathsf T}O_t=0$, from which we have $O_t=0$,
and thus $O$ is a constant orthogonal matrix.

The system consisting of the equations~\eqref{LinSchEqsmultsimplifiednext} when $a$ varying integrates to the following expression for $\Psi^1$:
\begin{gather}\label{LInSchEqsMultwavefunct}
\Psi^1= \exp\left(\frac i8\frac{T_{tt}}{|T_t|}\,|x|^2+
\frac i2\frac{\varepsilon' \mathcal X^b_t}{|T_t|^{1/2}}\,O^{ba}x_a+i\Sigma+\Upsilon\right),
\end{gather}
where~$\Sigma$ and~$\Upsilon$ are arbitrary smooth real-valued functions of $t$.
Finally, we consider the equation~\eqref{Multderivativesubstitutionsplitare},
which reduces, under the derived conditions, to the equation
\begin{gather*}
\frac i{T_t}\Psi_t-\frac 1{|T_t|}\hat V \Psi_{\hat\psi}\hat\psi-\frac i{T_t}Y^a_bX^b_t\Psi_a+\frac1
{|T_t|}\Psi_{aa}+\tilde V\Psi=0.
\end{gather*}
Splitting this equation with respect to $\hat\psi$ in view of the representation for $\Psi$ and rearranging,
we obtain
\begin{gather}\label{LInSchEqsMultarbitr}
\tilde V=\frac{\hat V}{|T_t|}-\frac i{T_t\Psi^1}
\left(\Psi^1_t-\frac{ X^b_aX^b_t}{|T_t|}\Psi^1_a\right)-\frac 1{|T_t|}\frac{\Psi_{aa}^1}{\Psi^1},
\\[1ex]\label{LInSchEqsMultarind}
i\varepsilon'\Psi^0_t-\frac i{T_t}X^b_aX^b_t\Psi^0_a+\Psi^0_{aa}+|T_t|\tilde V\Psi^0=0.
\end{gather}
We introduce the function $\Lambda=\hat\Psi^0/\hat\Psi^1$, i.e., $\Psi^0=\Psi^1\hat\Lambda$.
The equation~\eqref{LInSchEqsMultarind} is equivalent to
the initial linear Schr\"odinger equation in terms of $\Lambda$.
After substituting the expression~\eqref{LInSchEqsMultwavefunct} for $\Psi^1$ into~\eqref{LInSchEqsMultarbitr}
then additionally collecting coefficients of $x$, we derive the final expression for $\tilde V$.
\end{proof}

\begin{corollary}\label{free Schrodingerequation}
A (1+n)-dimensional linear Schr\"odinger equation of the form~\eqref{MLinSchEqs}
is reduced to the free linear Schr\"odinger equation by a point transformation if and only if
\[V=\varrho(t)|x|^2+\varrho^a(t)x_a+\varrho^0(t)+i\tilde \varrho^0(t)\]
for some real-valued smooth functions $\varrho$, $\varrho^a$, $\varrho^0$ and $\tilde \varrho^0$ of $t$.
\end{corollary}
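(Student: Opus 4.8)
The plan is to read the statement off directly from Theorem~\ref{thmeMquivptransf}, since the free linear Schr\"odinger equation is exactly the equation from the class~$\mathcal F$ with zero potential, $\tilde V=0$. Thus an equation $\mathcal L_V$ of the form~\eqref{MLinSchEqs} is reducible to the free equation by a point transformation if and only if the triple $(V,\Phi,0)$ is admissible for some~$\Phi$, which by Theorem~\ref{thmeMquivptransf} forces~$\Phi$ to be of the form~\eqref{Mequgroupoidsub} with the potentials related by~\eqref{Mequgroupoidsub_c}. The whole proof therefore reduces to analyzing the single relation obtained by setting $\tilde V=0$ in~\eqref{Mequgroupoidsub_c}.

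For the necessity, I would set $\tilde V=0$ in~\eqref{Mequgroupoidsub_c} and solve for~$\hat V$, obtaining
\[
\hat V=-|T_t|\left(\frac{2T_{ttt}T_t-3T_{tt}{}^2}{16\varepsilon'T_t{}^3}|x|^2+\frac{\varepsilon'}{2|T_t|^{1/2}}\left(\frac{\mathcal X^b_t}{T_t}\right)_{\!t}O^{ba}x_a+\frac{\Sigma_t-i\Upsilon_t}{T_t}-\frac{\mathcal X^a_t\mathcal X^a_t+inT_{tt}}{4T_t{}^2}\right).
\]
Since $T$, $\mathcal X^a$, $\Sigma$ and~$\Upsilon$ depend only on~$t$ and $O$ is a constant matrix, the right-hand side is a quadratic polynomial in~$x$ whose coefficient of~$|x|^2$ and coefficients of the~$x_a$ are real-valued functions of~$t$, while only the $x$-free term carries a nonzero imaginary part (coming from~$-i\Upsilon_t/T_t$ and~$-inT_{tt}/(4T_t{}^2)$). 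As $V$ equals either~$\hat V$ or its complex conjugate according to the sign of~$T_t$, and conjugation preserves this structure, $V$ has the asserted form with $\varrho$, $\varrho^a$, $\varrho^0$ and~$\tilde\varrho^0$ real.

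For the sufficiency, given $V=\varrho(t)|x|^2+\varrho^a(t)x_a+\varrho^0(t)+i\tilde\varrho^0(t)$ I would construct the required transformation by matching coefficients in the displayed relation, taking $O=E$ and $T_t>0$. The coefficient of~$|x|^2$ gives $\varrho=-\tfrac18\{T,t\}$ in terms of the Schwarzian derivative $\{T,t\}=T_{ttt}/T_t-\tfrac32(T_{tt}/T_t)^2$, so $T$ is found by solving $\{T,t\}=-8\varrho$; the coefficients of~$x_a$ then yield a linear second-order equation for each~$\mathcal X^a$, solvable by two integrations; and the real and imaginary parts of the $x$-free term give first-order equations for~$\Sigma$ and~$\Upsilon$, respectively, once $T$ and the~$\mathcal X^a$ are known. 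Solving this ODE hierarchy in order produces parameters for which $\tilde V=0$, hence a point transformation of the form~\eqref{Mequgroupoidsub} mapping~$\mathcal L_V$ onto the free equation.

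The main obstacle is the very first equation, the Schwarzian equation $\{T,t\}=-8\varrho$: unlike the subsequent equations it is nonlinear. However, it is classically integrable, its general solution being a M\"obius function of the ratio of two linearly independent solutions of the associated linear equation $w_{tt}-4\varrho\,w=0$, so a smooth local solution with $T_t\neq0$ always exists, which is all that is needed for a point transformation. The remaining steps are then routine integrations.
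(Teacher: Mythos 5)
Your proposal is correct and follows exactly the route the paper intends: the corollary is stated as an immediate consequence of Theorem~\ref{thmeMquivptransf}, obtained by setting $\tilde V=0$ in~\eqref{Mequgroupoidsub_c}, reading off the resulting quadratic-in-$x$ form of $V$ with real $|x|^2$- and $x_a$-coefficients for necessity, and solving the Schwarzian equation for $T$ followed by the linear ODEs for $\mathcal X^a$, $\Sigma$, $\Upsilon$ for sufficiency. Nothing is missing.
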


\subsection{Equivalence group and equivalence algebra}

The concept of equivalence group, 
which was suggested in~\cite{ovsy1982A} and properly formalized in~\cite[Section~2.3]{popo2010a}, 
see also \cite{vane2020b} and references therein,
plays a central role in the theory of group classification of differential equations.
The equivalence algebra and the equivalence group of a class of differential equations 
can be independently computed using the infinitesimal method and the direct or the algebraic method, respectively. 
Nevertheless, it is much easier to find them from the equivalence groupoid of the class once this groupoid is known.
The following assertions follow from Theorem~\ref{thmeMquivptransf}
in a similar way as that we used in~\cite{kuru2018a} to obtain consequences of~\cite[Theorem~6]{kuru2018a}.

\begin{corollary}
The (usual) equivalence group~$G^\sim$ of the class~$\mathcal F$ consists 
of the point transformations in the space with the coordinates~$(t,x,\psi,\psi^*,V,V^*)$
whose components are of the form~\eqref{Mequgroupoidsub} with $\Lambda=0$.%
\footnote{%
Strictly following the definition of (usual) equivalence groups, 
we should assume that elements of the group~$G^\sim$ are point transformations 
in the space with the coordinates $(t,x,\psi_{(2)},\psi^*_{(2)},V,V^*)$.
Here the subscript ``$(2)$'' of~$\psi$ denotes the collection of jet coordinates corresponding to
the derivatives of~$\psi$ up to order two, including~$\psi$ itself as its zeroth-order derivative, 
$\psi_{(2)}=(\psi,\psi_t,\psi_a,\psi_{tt},\psi_{ta},\psi_{ab})$, and similarly for $\psi^*_{(2)}$.
Nevertheless, the arbitrary element~$V$ depends only $(t,x)$ 
and the relation~\eqref{Mequgroupoidsub_c} between the source and target values of~$V$ 
for admissible transformations within the class~$\mathcal F$ 
does not involve nonzero-order derivatives of~$\psi$. 
Therefore, analogously to~\cite[footnote~1]{kuru2020a} and~\cite[Section~1]{kuru2018a},
we can assume that elements of the group~$G^\sim$ act in the space with the coordinates $(t,x,\psi,\psi^*,V,V^*)$, 
thus shrinking the space underlying this group.
It~suffices to present only the transformation components for $(t,x,\psi,V)$.
The transformation components for derivatives of~$\psi$ are constructed from the $(t,x,\psi)$-components
by the standard prolongation using the chain rule.
The transformation components for derivatives of~$\psi^*$ and for~$V^*$
are obtained by conjugating their counterparts for derivatives of~$\psi$ and for~$V$.
}
\end{corollary}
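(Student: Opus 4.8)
The plan is to read off the corollary directly from Theorem~\ref{thmeMquivptransf} by imposing the single extra requirement that distinguishes the usual equivalence group from the full equivalence groupoid: an equivalence transformation must project to one and the \emph{same} point transformation of the variables $(t,x,\psi,\psi^*)$ for \emph{all} values of the arbitrary element. As a preliminary remark (cf.\ the footnote to the corollary), since the potential~$V$ depends only on~$(t,x)$ and the relation~\eqref{Mequgroupoidsub_c} between the source and target potentials involves no nonzero-order derivatives of~$\psi$, it suffices to work in the reduced space with coordinates $(t,x,\psi,\psi^*,V,V^*)$ and to present only the components for $(t,x,\psi,V)$, the remaining ones being recovered by standard prolongation and complex conjugation.

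First I would take an arbitrary element $\mathscr T$ of~$G^\sim$. By the definition of the usual equivalence group, $\mathscr T$ is projectable to the space of $(t,x,\psi,\psi^*)$, so $\Phi:=\pi_*\mathscr T$ is a single point transformation of these variables that is independent of~$V$; and since $\mathscr T$ maps the class~$\mathcal F$ onto itself, for each $V\in\mathcal S$ the triple $(V,\Phi,\mathscr T_*V)$ is an admissible transformation, i.e.\ an element of~$\mathcal G^\sim$. Theorem~\ref{thmeMquivptransf} then forces this fixed~$\Phi$ to be of the form~\eqref{Mequgroupoidsub_a}--\eqref{Mequgroupoidsub_b} for suitable parameters $T,\mathcal X^a,\Sigma,\Upsilon$, a constant orthogonal matrix~$O$, and a function~$\Lambda$ that is a solution of the source equation~$\mathcal L_V$.

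The decisive step is the vanishing of~$\Lambda$. All of the parameters $T,\mathcal X^a,\Sigma,\Upsilon,O$ in~\eqref{Mequgroupoidsub_a}--\eqref{Mequgroupoidsub_b} are functions of~$t$ alone or constants, hence manifestly independent of the arbitrary element, and the only ingredient of~$\Phi$ that could carry a dependence on~$V$ is~$\Lambda$. Since~$\Phi$ is applied uniformly to every $V\in\mathcal S$, the function~$\Lambda$ must be the same for all~$V$ while simultaneously satisfying $i\Lambda_t+\Lambda_{aa}+V\Lambda=0$ for every admissible potential~$V$. Evaluating this identity at any point where $\Lambda\ne0$ would express~$V$ through~$\Lambda$ and its derivatives, contradicting the freedom of~$V$; hence $\Lambda\equiv0$. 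This is the heart of the argument, disentangling the genuine equivalence part of~$\Phi$ from the linear-superposition part encoded by~$\Lambda$, and it is precisely where the difference between the groupoid and the group materializes.

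For the converse I would verify that every transformation of the form~\eqref{Mequgroupoidsub} with $\Lambda=0$ is indeed an element of~$G^\sim$. With $\Lambda=0$ the components~\eqref{Mequgroupoidsub_a}--\eqref{Mequgroupoidsub_b} no longer involve~$V$, and since~$\Lambda$ does not occur in~\eqref{Mequgroupoidsub_c} at all, that relation defines a well-defined action $V\mapsto\tilde V$ on~$\mathcal S$ sending the class into itself; projectability and consistency with the contact structure hold by construction through the standard prolongation, while the components for $\psi^*,V^*$ are determined automatically by conjugation. Finally, the set of such transformations is closed under composition and inversion, because the component~\eqref{Mequgroupoidsub_b} with $\Lambda=0$ is homogeneous linear in~$\psi$ and this homogeneity, i.e.\ the absence of an additive solution term, is preserved by composing and inverting. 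Therefore these transformations constitute precisely the group~$G^\sim$, which establishes the corollary.
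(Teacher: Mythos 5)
Your proposal is correct and follows essentially the same route the paper takes: the paper derives this corollary directly from Theorem~\ref{thmeMquivptransf} (citing the analogous derivation in~\cite{kuru2018a}), i.e.\ by observing that an equivalence transformation induces admissible transformations with one and the same transformational part for all~$V$, which forces $\Lambda$ to be a common solution of all equations from the class and hence zero. Your filled-in details --- the uniqueness of the parameters read off from a fixed~$\Phi$, the contradiction obtained by solving for~$V$ where $\Lambda\ne0$, and the closure of the $\Lambda=0$ transformations under composition and inversion --- are exactly the steps the paper leaves implicit.
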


\begin{remark}
The identity component of~$G^\sim$ consists of the transformations
of the form~\eqref{Mequgroupoidsub},
where $\Lambda=0$, $\det O=1$ and $T_t>0$, i.e., $\varepsilon'=1$.
The whole equivalence group~$G^\sim$ is generated
by the transformations from its identity component
and two discrete transformations,
the space reflection
$\tilde t=t,$ $\tilde x_a=-x_a,$ $\tilde x_b=x_b,$ $b\ne a$, $\tilde \psi=\psi,$ $\tilde V=V$ for a fixed~$a$
and the Wigner time reflection
$\tilde t=-t,$ $\tilde x=x,$ $\tilde \psi=\psi^*$, $\tilde V=V^*$.
\end{remark}

\begin{theorem}\label{thm:EquivAlgOfF}
The equivalence algebra of the class~$\mathcal F$ is the algebra
\[
\mathfrak g^\sim=\big\langle\hat D(\tau),\,\hat J_{ab},\,a<b,\,\hat P(\chi),\,\hat M(\sigma),\,\hat I(\rho)\big\rangle,
\]
where $\tau$, $\chi=(\chi^1,\dots,\chi^n) $, $\sigma$ and~$\rho$ run through the set of smooth real-valued functions of~$t$. 
The vector fields spanning~$\mathfrak g^\sim$ are defined by
\begin{gather*}
\hat D(\tau)=\tau\p_t+\frac12\tau_tx_a\p_a+\frac1{8}\tau_{tt}|x|^2 \left(i\psi\p_\psi-i\psi^*\p_{\psi^*}\right)
\\ \phantom{\hat D(\tau)=}
{}-\left(\tau_tV-\frac18\tau_{ttt}|x|^2+i\frac{\tau_{tt}}{4}\right)\p_V-
\left(\tau_tV^{*}-\frac18\tau_{ttt}|x|^2-i\frac{\tau_{tt}}{4}\right)\p_{V^*},
\\
\hat J_{ab}=x_a\p_b-x_b\p_a,\quad a\ne b,
\\
\hat P(\chi)=\chi^a\p_a+\frac i2\chi^a_tx_a\left(\psi\p_\psi-\psi^*\p_{\psi^*}\right)+
\frac12\chi^a_{tt}x_a\left(\p_V+\p_{V^*}\right),
\\
\hat M(\sigma)=i\sigma(\psi\p_{\psi}-\psi^*\p_{\psi^*})+\sigma_t(\p_V+\p_V^*),
\\
\hat I(\rho)=\rho(\psi\p_{\psi}+\psi^*\p_{\psi^*})-i\rho_t(\p_V+\p_V^*).
\end{gather*}
\end{theorem}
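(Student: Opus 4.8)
The plan is to obtain $\mathfrak g^\sim$ as the Lie algebra of the equivalence group $G^\sim$, which is already fully described in the corollary preceding the theorem: its elements are exactly the transformations~\eqref{Mequgroupoidsub} with $\Lambda=0$. Since the infinitesimal equivalence transformations are precisely the generators of one-parameter subgroups of~$G^\sim$, it suffices to pass to first order in a group parameter~$s$ about the identity element, which corresponds to $T=t$, $O=E$, $\mathcal X=0$ and $\Sigma=\Upsilon=0$. Thus I would not need to solve any determining system from scratch; the computation reduces to a Taylor expansion of~\eqref{Mequgroupoidsub_a}--\eqref{Mequgroupoidsub_c} at the identity, in the spirit of the way consequences of the groupoid theorem were extracted in~\cite{kuru2018a}.

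First I would note that the functional and matrix parameters of~$G^\sim$ split into five independent directions, each yielding one family of generators. Deforming the time component as $T=t+s\tau(t)$ (with the remaining parameters frozen at their identity values) produces~$\hat D(\tau)$; choosing $O=\exp(sA)$ with $A$ antisymmetric and running over a basis of $\mathfrak{so}(n)$ produces the rotations~$\hat J_{ab}$ (here the orthogonality condition $O^{\mathsf T}O=E$ forces $A+A^{\mathsf T}=0$ to first order, so the discrete factor $\det O=\pm1$ and the reflection $\varepsilon'=\sgn T_t$ do not contribute, as only the identity component is seen infinitesimally); taking $\mathcal X=s\chi(t)$ gives~$\hat P(\chi)$; taking $\Sigma=s\sigma(t)$ gives~$\hat M(\sigma)$; and taking $\Upsilon=s\rho(t)$ gives~$\hat I(\rho)$. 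In each case one reads off the coefficients of $\p_t$, $\p_a$, $\p_\psi$, $\p_{\psi^*}$, $\p_V$ and~$\p_{V^*}$ as the $s$-derivatives at $s=0$ of the corresponding transformation components, using that the $\psi^*$- and $V^*$-components are the complex conjugates of those for~$\psi$ and~$V$.

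The only genuinely laborious direction is~$\hat D(\tau)$, where the potential transformation~\eqref{Mequgroupoidsub_c} contributes terms involving $T_t$, $T_{tt}$ and~$T_{ttt}$: the leading factor $\hat V/|T_t|$ linearizes to $-\tau_t V$, the curvature term $(2T_{ttt}T_t-3T_{tt}{}^2)/(16\varepsilon'T_t{}^3)\,|x|^2$ linearizes to $\tfrac18\tau_{ttt}|x|^2$, and the term $-inT_{tt}/(4T_t{}^2)$ linearizes to $-i\tfrac n4\tau_{tt}$, which together reproduce the stated $\p_V$-coefficient, while the $\psi$-phase $\tfrac i8(T_{tt}/|T_t|)|x|^2$ gives the $\tfrac18\tau_{tt}|x|^2(i\psi\p_\psi-i\psi^*\p_{\psi^*})$ part. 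The main obstacle is therefore the careful bookkeeping of these higher-order $t$-derivatives rather than any conceptual difficulty. To finish, I would argue \emph{exhaustiveness}: because Theorem~\ref{thmeMquivptransf} shows that every element of~$G^\sim$ is determined by the five parameters above and nothing else, the tangent space to~$G^\sim$ at the identity is spanned precisely by the five computed families, whence $\mathfrak g^\sim=\big\langle\hat D(\tau),\,\hat J_{ab},\,\hat P(\chi),\,\hat M(\sigma),\,\hat I(\rho)\big\rangle$. Closure under the Lie bracket is then automatic, being inherited from the group structure, though it could also be verified directly as a consistency check.
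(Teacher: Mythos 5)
Your proposal is correct and takes essentially the same route as the paper: the paper likewise obtains $\mathfrak g^\sim$ by letting each of the parameters $T$, $O$, $\mathcal X$, $\Sigma$, $\Upsilon$ of the already-known group $G^\sim$ depend on a continuous parameter, freezing the others at their identity values, and differentiating the components of~\eqref{Mequgroupoidsub} at the identity (the paper's only extra touch is the substitution $\Sigma=\frac14\mathcal X^a\mathcal X^a_t+\bar\Sigma$, which is irrelevant at first order). One remark: your linearization of $-inT_{tt}/(4T_t{}^2)$ correctly gives $-i\tfrac n4\tau_{tt}$, which agrees with the classifying condition~\eqref{Multclasscond} and with the formula for $\mathfrak g^\sim_{\mathbb R}$ in Section~\ref{sec:Real-ValuedPotentials}, so the factor $\tfrac{\tau_{tt}}{4}$ without $n$ in the statement of the theorem appears to be a typo rather than an error on your part.
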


\begin{proof}
The equivalence algebra~$\mathfrak g^\sim$ of the class~$\mathcal F$
is obtained using the knowledge of the identity component of the equivalence group~$G^\sim$.
Searching for infinitesimal generators of one-parameter subgroups of~$G^\sim$,
we represent the parameter function~$\Sigma$ in the form $\Sigma=\frac14\mathcal X^a\mathcal X^a_t+\bar\Sigma$,
where $\bar\Sigma$ is a function of~$t$,
for better consistency of the group parameterization with the one-parameter subgroup structure of~$G^\sim$.
Then we successively assume one of the transformation parameters~$T$, $O$, $\mathcal X^a$, $\bar\Sigma$ and~$\Upsilon$
to depend on a continuous parameter~$\delta$
and set the other transformation parameters to the trivial values
corresponding to the identity transformation,
which are $t$ for $T$, $E$ for $O$ and zeroes
for $\mathcal X^a$, $\bar\Sigma$ and~$\Upsilon$, where $E$ is the $n\times n$ identity matrix.
The components of the associated infinitesimal generator
$Q=\tau\p_t+\xi^a\p_a+\eta\p_\psi+\eta^*\p_{\psi^*}+\theta\p_V+\theta^*\p_{V^*}$
are given by
\[
\tau=\frac{\mathrm d\tilde t}{\mathrm d\delta}\Big|_{\delta=0},\quad
\xi^a=\frac{\mathrm d\tilde x}{\mathrm d\delta}\Big|_{\delta=0},\quad
\eta=\frac{\mathrm d\tilde\psi}{\mathrm d\delta}\Big|_{\delta=0},\quad
\theta=\frac{\mathrm d\tilde V}{\mathrm d\delta}\Big|_{\delta=0}.
\]
The above procedure results in the set of vector fields spanning the algebra~$\mathfrak g^\sim$.
\end{proof}

\begin{corollary}\label{col:UniformSemi-normalizationOfClassF}
The class~$\mathcal F$ is uniformly semi-normalized with respect to the linear superposition of solutions.
\end{corollary}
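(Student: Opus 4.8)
The plan is to apply the direct criterion for uniform semi-normalization with respect to the linear superposition of solutions that is established at the end of Section~\ref{sec:UniformSemi-norm}. Since each equation from~$\mathcal F$ is homogeneous and linear in the dependent variable~$\psi$ (regarded, in Wirtinger's framework, as the pair~$(\psi,\psi^*)$), the class~$\mathcal F$ fits the general setup of that criterion, so it will remain only to verify its two hypotheses on the admissible transformations of~$\mathcal F$. As the complete equivalence groupoid~$\mathcal G^\sim$ is already described by Theorem~\ref{thmeMquivptransf}, both hypotheses can be read off directly from the explicit form~\eqref{Mequgroupoidsub}. I would first record that the only solution common to all equations of~$\mathcal F$ is the trivial one: subtracting~$\mathcal L_V$ and~$\mathcal L_{V'}$ for~$V\neq V'$ forces $(V-V')\psi=0$ and hence $\psi\equiv0$. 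Thus the set~$\mathcal M^0$ of common solutions equals~$\{0\}$, which places us in the regular situation where the subgroup~$H_0$ coincides with the full equivalence group~$G^\sim$ of~$\mathcal F$. Local solvability, needed to make the family of linear-superposition groups genuinely nontrivial, is standard for linear Schr\"odinger equations with smooth potentials.

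First I would verify the first hypothesis, that the transformational part of every admissible transformation has the form~\eqref{eq:EquivTransOfLinSystems}. By~\eqref{Mequgroupoidsub_a}, the components~$\tilde t=T$ and $\tilde x_a=|T_t|^{1/2}O^{ab}x_b+\mathcal X^a$ depend only on~$(t,x)$, so each such transformation is fibre-preserving with base part independent of the dependent variable, matching the $\tilde x_j=X^j(x)$ part of~\eqref{eq:EquivTransOfLinSystems}. By~\eqref{Mequgroupoidsub_b}, the $\psi$-component equals $M(t,x)(\hat\psi+\hat\Lambda)$ with~$M(t,x)$ the nonvanishing exponential factor; this is affine in~$(\psi,\psi^*)$ through the hat notation, and the nonvanishing of~$M$ guarantees invertibility, so it has the affine shape $\tilde u^a=M^{ab}(x)(u^b+h^b(x))$ with $\det(M^{ab})\neq0$ required in~\eqref{eq:EquivTransOfLinSystems}. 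The additive parameter~$h$ is realized here by~$\Lambda$, which by Theorem~\ref{thmeMquivptransf} is an arbitrary solution of the source equation~$\mathcal L_V$; this is precisely the parenthesized assertion of the criterion that~$h$ runs through the solution set of the source system.

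Next I would verify the second hypothesis, that every admissible transformation with zero additive part lies in the action groupoid of~$G^\sim$. By the identification above, the value $h=0$ corresponds to $\Lambda=0$ in~\eqref{Mequgroupoidsub}. The corollary following Theorem~\ref{thmeMquivptransf} that describes~$G^\sim$ states exactly that the point transformations of the form~\eqref{Mequgroupoidsub} with $\Lambda=0$ are precisely the projections~$\pi_*\mathscr T$ of the equivalence transformations~$\mathscr T\in G^\sim$. Hence any admissible transformation with $\Lambda=0$ is of the shape $(V,\pi_*\mathscr T,\mathscr T_*V)$ for some $\mathscr T\in G^\sim$ and therefore belongs to~$\mathcal G^{G^\sim}$, as required.

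Once both hypotheses are established, the criterion of Section~\ref{sec:UniformSemi-norm} will yield that~$\mathcal F$ is disjointedly uniformly $(G^\sim,N^{\rm lin})$-semi-normalized, where $N^{\rm lin}=\{G^{\rm lin}_V\}$ and $G^{\rm lin}_V$ consists of the linear-superposition symmetries $\tilde t=t$, $\tilde x=x$, $\tilde\psi=\psi+\Lambda$ with~$\Lambda$ a solution of~$\mathcal L_V$; this is exactly the claimed semi-normalization with respect to the linear superposition of solutions. The only point requiring genuine care is the reading of ``affine in~$u$'' in~\eqref{eq:EquivTransOfLinSystems} as affineness in the real pair~$(\psi,\psi^*)$: the hat notation in~\eqref{Mequgroupoidsub_b} makes the $\psi$-component affine in~$\psi$ when $T_t>0$ and affine in~$\psi^*$ when $T_t<0$, and in both cases it is affine in~$(\psi,\psi^*)$, as the criterion demands. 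Beyond this observation the argument is an immediate reading-off from the already-computed groupoid, so I do not expect any real obstacle.
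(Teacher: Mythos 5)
Your argument is correct and coincides with the paper's own (largely implicit) justification of this corollary: the explicit description of the equivalence groupoid in Theorem~\ref{thmeMquivptransf} shows that every admissible transformation is fibre-preserving and affine in $(\psi,\psi^*)$ with additive part $\Lambda$ running through the solution set of the source equation, the corollary describing $G^\sim$ identifies the $\Lambda=0$ transformations with the action groupoid of $G^\sim$, and the criterion at the end of Section~\ref{sec:UniformSemi-norm} then yields the disjoint uniform $(G^\sim,N^{\rm lin})$-semi-normalization. Your side remarks (triviality of the set of common solutions, hence $H_0=G^\sim$, and the reading of affineness in the Wirtinger pair under the hat convention for $T_t<0$) are accurate and fill in exactly the details the paper leaves tacit.
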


The uniform semi-normalization of the class~$\mathcal F$ guarantees a specific factorization
of point symmetry groups of all equations from this class.
For any potential~$V$, each element~$\Phi$ of the point symmetry group~$G_V$ of an equation~$\mathcal L_V$
generates an admissible point transformation $(V,\Phi,V)$ of~$\mathcal F$.
Therefore, the transformation~$\Phi$ necessarily has
the form~\eqref{Mequgroupoidsub_a}--\eqref{Mequgroupoidsub_b},
where the transformation parameters additionally satisfy the equation~\eqref{Mequgroupoidsub_c}
with $\tilde V(\tilde t,\tilde x)=V(\tilde t,\tilde x)$.
The symmetry transformations associated with the linear superposition of solutions to the equation~$\mathcal L_V$
which are of the above form with $T=t$, $O=E$ and $\mathcal X^a=\Sigma=\Upsilon=0$,
constitute a normal subgroup of the group~$G_V$,
which is denoted by~$G^{\rm lin}_V$ and is the trivial part of~$G_V$.
In view of the discussion on the uniform semi-normalization with respect to the linear superposition of solutions
in~\cite[Section~3]{kuru2018a},
Corollary~\ref{col:UniformSemi-normalizationOfClassF} implies that
the group~$G_V$ splits over~$G^{\rm lin}_V$, $G_V=G^{\rm ess}_V\ltimes G^{\rm lin}_V$,
the subgroup~$G^{\rm ess}_V$ of~$G_V$ is singled out from~$G_V$ by the constraint $\Lambda=0$
and will be considered as the only essential part of~$G_V$.

\subsection{Analysis of determining equations for Lie symmetries}\label{sec:AnalysOfDetEqsForLieSymsOfMultiDSchEqs}

Using the infinitesimal criterion for Lie symmetries \cite{blum1989A,olve1993A,ovsy1982A}, 
for each potential $V$ we can derive the determining equations 
which are satisfied by the components of vector fields 
from the maximal Lie invariance algebra $\mathfrak g_V$ of the equation $\mathcal L_V$ from the class~$\mathcal F$.
In this section, we partially integrate the determining equations 
and preliminary analyze properties of the collection of $\mathfrak g_V$ when $V$ varies for $\mathcal L_V\in\mathcal F$.

The infinitesimal criterion states that a vector field $Q=\tau\p_t+\xi^a\p_a+\eta\p_\psi+\eta^*\p_{\psi^*}$,
where the components~$\tau$, $\xi^a$ and $\eta$ are smooth functions of~$(t,x,\psi,\psi^*)$, and $\eta^*$ is the complex conjugate of~$\eta$,
belongs to the algebra $\mathfrak g_V$ if an only if
\[Q_{(2)}\big(i\psi_t+\psi_{aa}+V(t,x)\psi\big)\big|_{\mathcal L_V}=0\]
with $Q_{(2)}$ being the second prolongation of the vector field $Q$, see Section~\ref{sec:BasicsNotionsOfGroupClassification}.
Expanding this expression, we obtain
\begin{equation}\label{Deteqmult}
i\eta^t+\eta^{aa}+(\tau V_t+\xi^a V_a)\psi+V\eta=0,
\end{equation}
\looseness=-1
where $\eta^t=\mathrm D_t\left(\eta-\tau\psi_t-\xi^a\psi_a\right)+\tau\psi_{tt}+\xi^a\psi_{ta}$ and $\eta^{ab}=\mathrm D_a\mathrm D_b\left(\eta-\tau\psi_t-\xi^c\psi_c\right)+\tau\psi_{tab}+\xi^c\psi_{abc}$.
We recall that $\mathrm D_t$ and $\mathrm D_a$ are the operators of total derivatives with respect to $t$ and $x_a$, respectively.
The substitution of $\psi_t=i\psi_{aa}+iV\psi$ and $\psi_t^*=-i\psi_{aa}^*-iV^*\psi^*$
into~\eqref{Deteqmult} and splitting with respect to the various derivatives of~$\psi$
and~$\psi^*$ lead to a linear overdetermined system of determining equations for the coefficients of~$Q$,
\begin{subequations}
\begin{gather}\label{Multoverdeterequation_a}
\tau_\psi=\tau_{\psi^*}=\tau_a=0,\quad
\xi^a_\psi=\xi^a_{\psi^*}=0,\quad
\tau_t=2\xi^1_1=\dots=2\xi^n_n,\quad \xi^a_b+\xi^b_a=0,\ a\ne b,
\\[1ex]\label{Multoverdeterequation_b}
\eta_{\psi^*}=\eta_{\psi\psi}=0,\quad
2\eta_{\psi a}=i\xi^a_t,
\\[1ex]\label{Multoverdeterequation_c}
i\eta_t+\eta_{aa}+\tau V_t\psi+\xi^a V_a\psi+V\eta-(\eta_\psi-\tau_t)V\psi=0.
\end{gather}
\end{subequations}
The subsystem consisting of the equations~\eqref{Multoverdeterequation_a} and~\eqref{Multoverdeterequation_b},
which do not contain the potential $V$, is integrated immediately.
Its general solution is
\begin{gather*}
\tau=\tau(t),\quad \xi^a=\frac12\tau_tx_a+\kappa_{ab}x_b+\chi^a,\quad
\eta=\left(\frac i8\tau_{tt}|x|^2+\frac i2\chi^a_t x_a+\rho+i\sigma\right)\psi+\eta^0(t,x),
\end{gather*}
where~$\tau$, $\chi^a$, $\rho$ and~$\sigma$ are smooth real-valued functions of~$t$,
$\eta^0$ is a complex-valued function of~$t$ and~$x$,
and $(\kappa_{ab})$ is a constant skew-symmetric matrix.
Then additionally splitting the last determining equation with respect to~$\psi$,
we derive two equations,
\begin{gather}\label{MultclasscondForEta0}
i\eta^0_t+\eta^0_{aa}+V\eta^0=0,
\\\label{Multclasscond}
\tau V_t+\left(\frac12\tau_tx_a+\kappa_{ab}x_b+\chi^a\right)V_a+\tau_tV
=\frac18\tau_{ttt}|x|^2+\frac12\chi^a_{tt}x_a+\sigma_t-i\rho_t-i\frac n4\tau_{tt}.
\end{gather}
Both the equations involve the potential~$V$.
At the same time, the first equation means
that the parameter function~$\eta^0$ is an arbitrary solution of the equation~$\mathcal L_V$.
This is why the second equation is only the truly \emph{classifying condition}
for Lie symmetry generators of equations from the class~$\mathcal F$ depending on the potential $V$.

\begin{theorem}\label{MultLinsymmetryoper}
The maximal Lie invariance algebra~$\mathfrak g_V$ of an equation~$\mathcal L_V$ from~$\mathcal F$
is constituted by the vector fields of the form $D(\tau)+\sum_{a<b}\kappa_{ab}J_{ab}+P(\chi)+\sigma M+\rho I+Z(\eta^0)$,
where
\begin{gather*}
D(\tau)=\tau\p_t+\frac12\tau_tx_a\p_a+\frac18\tau_{tt}|x|^2M,\quad
J_{ab}=x_a\p_b-x_b\p_a,\quad a\ne b,\\
P(\chi)=\chi^a\p_a+\frac12\chi^a_tx_aM,\quad
M=i\psi\p_\psi-i\psi^*\p_{\psi^*},\quad
I=\psi\p_\psi+\psi^*\p_{\psi^*},\\
Z(\eta^0)=\eta^0\p_\psi+\eta^0{^*}\p_{\psi^*},
\end{gather*}
the parameters $\tau$, $\chi=(\chi^1,\dots,\chi^n)$, $\rho$ and $\sigma$ are arbitrary real-valued smooth functions of $t$
and the matrix~$(\kappa_{ab})$ is an arbitrary constant skew-symmetric matrix
that jointly satisfy the classifying condition~\eqref{Multclasscond},
and the parameter $\eta^0=\eta^0(t,x)$ runs through the solution set of the equation~$\mathcal L_V$.\looseness=-1
\end{theorem}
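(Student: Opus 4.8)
The plan is to assemble the statement directly from the integration of the determining system already carried out above, so that the proof reduces to a bookkeeping step rather than a fresh computation. First I would recall that the infinitesimal invariance criterion applied to $\mathcal L_V$ is equivalent, after substituting $\psi_t=i\psi_{aa}+iV\psi$ and $\psi^*_t=-i\psi^*_{aa}-iV^*\psi^*$ and splitting with respect to the remaining derivatives of $\psi$ and $\psi^*$, to the determining system~\eqref{Multoverdeterequation_a}--\eqref{Multoverdeterequation_c}; since this splitting is an equivalence, a vector field $Q=\tau\p_t+\xi^a\p_a+\eta\p_\psi+\eta^*\p_{\psi^*}$ lies in $\mathfrak g_V$ if and only if its components solve this system. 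The $V$-free part~\eqref{Multoverdeterequation_a}--\eqref{Multoverdeterequation_b} has already been integrated, yielding the general form of $\tau$, $\xi^a$ and $\eta$ in terms of real-valued functions $\tau(t)$, $\chi^a(t)$, $\rho(t)$, $\sigma(t)$, a constant skew-symmetric matrix $(\kappa_{ab})$ and a complex-valued function $\eta^0(t,x)$.

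Next I would substitute this general solution into $Q$ and regroup the terms to match them against the vector fields $D(\tau)$, $J_{ab}$, $P(\chi)$, $M$, $I$ and $Z(\eta^0)$ of the statement. The coefficient of $\p_t$ is $\tau$, reproducing the $\p_t$-part of $D(\tau)$; the coefficient $\xi^a=\tfrac12\tau_tx_a+\kappa_{ab}x_b+\chi^a$ of $\p_a$ splits into the dilatation part of $D(\tau)$, the rotation part $\sum_{a<b}\kappa_{ab}J_{ab}$, and the shift part $P(\chi)$; and the coefficient $\big(\tfrac i8\tau_{tt}|x|^2+\tfrac i2\chi^a_tx_a+\rho+i\sigma\big)\psi+\eta^0$ of $\p_\psi$ decomposes into the $M$-contributions carried by $D(\tau)$ and $P(\chi)$, the terms $\sigma M$ and $\rho I$, and the inhomogeneous term $Z(\eta^0)$. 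This is a routine comparison of coefficients; the only point of care is keeping track of the pieces of $M$ distributed among $D(\tau)$, $P(\chi)$ and $\sigma M$, and noting that the identification of the entries of $(\kappa_{ab})$ with the generators $J_{ab}$ for $a<b$ is fixed only up to an overall sign, which is immaterial because $(\kappa_{ab})$ ranges over all skew-symmetric matrices.

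Finally I would interpret the two remaining constraints. Equation~\eqref{MultclasscondForEta0} is precisely $\mathcal L_V$ written for $\eta^0$, so it states that $\eta^0$ runs through the solution set of $\mathcal L_V$; this identifies $Z(\eta^0)$ as the generators of the one-parameter groups of linear superposition of solutions, i.e.\ the trivial part of $\mathfrak g_V$ whose presence is also predicted by Corollary~\ref{col:UniformSemi-normalizationOfClassF}. Equation~\eqref{Multclasscond} is the genuine classifying condition coupling $(\tau,\chi,\sigma,\rho,(\kappa_{ab}))$ to the potential $V$, and it is carried over verbatim into the statement. Collecting these observations gives the asserted description of $\mathfrak g_V$. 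I do not anticipate a serious obstacle: all the analytic work was done in deriving and partially integrating the determining equations, and what remains is organizing the general solution into the named vector-field basis and reading off the two constraints.
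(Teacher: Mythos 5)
Your proposal is correct and follows essentially the same route as the paper: Theorem~\ref{MultLinsymmetryoper} is stated there as a direct summary of the preceding derivation, namely the splitting of the invariance condition into the determining system~\eqref{Multoverdeterequation_a}--\eqref{Multoverdeterequation_c}, the integration of the $V$-free subsystem, and the separation of the residual equation into~\eqref{MultclasscondForEta0} (forcing $\eta^0$ to solve $\mathcal L_V$) and the classifying condition~\eqref{Multclasscond}. Your regrouping of the general solution into the basis $D(\tau)$, $J_{ab}$, $P(\chi)$, $M$, $I$, $Z(\eta^0)$ is exactly the bookkeeping the paper performs implicitly.
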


The kernel invariance algebra $\mathfrak g^\cap$ of the class~$\mathcal F$, $\mathfrak g^\cap:=\bigcap_{V}\mathfrak g_V$,
is obtained by splitting the conditions~\eqref{MultclasscondForEta0} and~\eqref{Multclasscond} with respect to the potential $V$
and its derivatives.
This results in $\tau=\chi^a=0$, $\kappa_{ab}=0$, $\eta^0=0$ and $\sigma_t=\rho_t=0$.

\begin{proposition}
The kernel Lie invariance algebra of the class~$\mathcal F$ is $\mathfrak g^\cap=\langle M,I\rangle$.
\end{proposition}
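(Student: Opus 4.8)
The plan is to characterize $\mathfrak g^\cap=\bigcap_V\mathfrak g_V$ by demanding that a \emph{single} vector field of the general form provided by Theorem~\ref{MultLinsymmetryoper}, with all its parameters $\tau$, $(\kappa_{ab})$, $\chi$, $\sigma$, $\rho$ and the function $\eta^0$ fixed once and for all, lie in $\mathfrak g_V$ simultaneously for every admissible potential~$V$. By Theorem~\ref{MultLinsymmetryoper} this is equivalent to requiring that both the classifying condition~\eqref{Multclasscond} and the condition~\eqref{MultclasscondForEta0} hold identically in~$V$.

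First I would justify the splitting with respect to the potential. Since the set~$\mathcal S$ underlying the class~$\mathcal F$ consists of \emph{all} smooth potentials, at any fixed point the values of~$V$, $V_t$ and~$V_a$ can be prescribed independently by a suitable choice of~$V$, so in the requirement that~\eqref{Multclasscond} hold for all~$V$ the jet coordinates $V$, $V_t$ and the~$V_a$ may be treated as mutually independent. Splitting~\eqref{Multclasscond} accordingly, the coefficient of~$V_t$ gives $\tau=0$, the coefficient of~$V$ gives $\tau_t=0$ (consistent), and the coefficients of the~$V_a$ give $\tfrac12\tau_tx_a+\kappa_{ab}x_b+\chi^a=0$. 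With $\tau_t=0$ this reduces to $\kappa_{ab}x_b+\chi^a=0$ identically in~$x$; since the $\kappa_{ab}$ are constants and the $\chi^a$ depend on~$t$ only, splitting with respect to~$x$ forces $\kappa_{ab}=0$ and $\chi^a=0$. The $V$-free remainder of~\eqref{Multclasscond} then collapses to $\sigma_t-i\rho_t=0$, and since $\sigma$ and~$\rho$ are real-valued, separating real and imaginary parts yields $\sigma_t=\rho_t=0$, i.e.\ both are constants.

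Next I would treat~\eqref{MultclasscondForEta0}, namely $i\eta^0_t+\eta^0_{aa}+V\eta^0=0$, in the same spirit. This is the only place where the fixed function~$\eta^0$ meets the arbitrary potential, and it contains~$V$ only through the product $V\eta^0$. Demanding the identity for all~$V$ and isolating the coefficient of~$V$ gives $\eta^0=0$, after which the leftover equation $i\eta^0_t+\eta^0_{aa}=0$ is satisfied trivially. Collecting everything, the only surviving parameters are the two real constants~$\sigma$ and~$\rho$, so every element of~$\mathfrak g^\cap$ has the form $\sigma M+\rho I$, whence $\mathfrak g^\cap\subseteq\langle M,I\rangle$. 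The reverse inclusion is immediate: substituting $\tau=0$, $\kappa_{ab}=0$, $\chi=0$, $\eta^0=0$ and constant $\sigma,\rho$ makes both~\eqref{Multclasscond} and~\eqref{MultclasscondForEta0} hold for every~$V$, so $M$ and~$I$ indeed belong to each~$\mathfrak g_V$, giving $\mathfrak g^\cap=\langle M,I\rangle$.

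The only genuinely delicate point is the legitimacy of the splitting: one must argue cleanly that $V$, $V_t$ and the~$V_a$ can be regarded as independent over the class, which is exactly where the assumption that~$\mathcal S$ comprises all smooth potentials is used, since it guarantees that a potential and its first-order derivatives may take arbitrary independent values at a point. Once this is granted, all remaining steps are routine linear splittings, so I expect this independence argument to be the main (though mild) obstacle.
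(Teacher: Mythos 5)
Your proposal is correct and follows exactly the paper's argument: the paper obtains $\mathfrak g^\cap$ by splitting the conditions~\eqref{MultclasscondForEta0} and~\eqref{Multclasscond} with respect to the potential~$V$ and its derivatives, arriving at precisely the same constraints $\tau=\chi^a=0$, $\kappa_{ab}=0$, $\eta^0=0$ and $\sigma_t=\rho_t=0$. Your added care about why $V$, $V_t$ and the $V_a$ may be treated as independent is a reasonable elaboration of a step the paper leaves implicit.
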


Following~\cite{kuru2018a},
denote by~$\mathfrak g_\spanindex$ the linear span of all vector fields given in Theorem~\ref{MultLinsymmetryoper}
when the potential $V$ varies, i.e.,
\[
\mathfrak g_\spanindex:=\left\{D(\tau)+\sum_{a<b}\kappa_{ab}J_{ab}+P(\chi)+\sigma M+\rho I+Z(\zeta)\right\}=\sum_V\mathfrak g_V.
\]
Here and in what follows
the parameters $\tau$, $\chi^a$, $\sigma$ and $\rho$ run through the set of real-valued
smooth functions of $t$, $\zeta$ runs through the set of complex-valued smooth functions of $(t,x)$
and $\eta^0$ runs through the solution set of the equation~$\mathcal L_V$ when the potential~$V$ is fixed.
We have \smash{$\mathfrak g_\spanindex=\textstyle\sum_V\mathfrak g_V$}
since each vector field~$Q$ from $\mathfrak g_\spanindex$
either with nonvanishing~$\tau$ or~$(\kappa_{ab})$ or~$\chi$
or with jointly vanishing~$\tau$, \smash{$\kappa_{ab}$}, $\chi$, $\sigma$ and $\rho$ necessarily
belongs to~$\mathfrak g_V$ for some~$V$.
Up to the antisymmetry of the Lie bracket of vector fields,
the nonzero commutation relations between vector fields spanning~$\mathfrak g_\spanindex$ are
\begin{gather*}\label{MultiLinSchEqs(1+1)comrel}
[D(\tau^1),D(\tau^2)]=D(\tau^1\tau^2_t-\tau^2\tau^1_t),\quad
[D(\tau),P(\chi)]=P\left(\tau\chi_t-\frac12\tau_t\chi\right),\\
[D(\tau),\sigma M]=\tau\sigma_t M,\quad
[D(\tau),\rho I]=\tau\rho_t I,\\
[D(\tau),Z(\zeta)]=Z\left(\tau\zeta_t+\frac12\tau_tx_a\zeta_a-\frac i8\tau_{tt}|x|^2\zeta\right),\\
[J_{ab},J_{bc}]=J_{ac},\quad a\ne b\ne c\ne a,\\[.7ex]
[J_{ab},P(\chi)]=P(\hat\chi) \quad\mbox{with}\quad \hat\chi^a=\chi^b,\quad\hat\chi^b=-\chi^a,\quad \chi^c=0,\quad a\ne b\ne c\ne a,\\[.7ex]
[J_{ab},Z(\zeta)]=Z(x_a\zeta_b-x_b\zeta_a),\quad a\ne b,\\
[P(\chi),P(\tilde \chi)]=\frac12\left(\chi^a\tilde\chi^a_t-\tilde\chi^a\chi^a_t\right)M,\quad
[P(\chi),Z(\zeta)]=Z\left(\chi^a\zeta_a-\frac i2\chi^a_t x_a\zeta\right),\\
[\sigma M,Z(\zeta)]=-Z(i\sigma \zeta),\quad
[\rho I,Z(\zeta)]=-Z(\rho\zeta).
\end{gather*}

The commutation relations between elements of~$\mathfrak g_\spanindex$ imply
that the span~$\mathfrak g_\spanindex$ is closed with respect to the Lie bracket of vector fields and hence it is a Lie algebra.
The algebra~$\mathfrak g_\spanindex$ can be represented as a semidirect sum
of the subalgebra $\mathfrak g^{\rm ess}_\spanindex:=\langle D(\tau),J_{ab},P(\chi),\sigma M ,\rho I\rangle$
and the ideal $\mathfrak g^{\rm lin}_\spanindex:=\langle Z(\zeta)\rangle$,
$\mathfrak g_\spanindex=\mathfrak g^{\rm ess}_\spanindex\lsemioplus\mathfrak g^{\rm lin}_\spanindex$.
The kernel invariance algebra $\mathfrak g^\cap$ is
an ideal in $\mathfrak g^{\rm ess}_\spanindex$ and in the entire $\mathfrak g_\spanindex$.
The above representation for $\mathfrak g_\spanindex$ induces the similar representation
for each~$\mathfrak g_V$,
\[
\mathfrak g_V=\mathfrak g^{\rm ess}_V\lsemioplus\mathfrak g^{\rm lin}_V,
\quad\mbox{where}\quad
\mathfrak g^{\rm ess}_V:=\mathfrak g_V\cap\mathfrak g^{\rm ess}_\spanindex
\quad\mbox{and}\quad
\mathfrak g^{\rm lin}_V:=\mathfrak g_V\cap\mathfrak g^{\rm lin}_\spanindex=\langle Z(\eta^0),\,\eta^0\in\mathcal L_V\rangle
\]
are a finite-dimensional subalgebra (see Lemma~\ref{LinMultdimgess} below) and an infinite-dimensional abelian ideal of~$\mathfrak g_V$, respectively.
We call $\mathfrak g^{\rm ess}_V$ the {\it essential Lie invariance algebra} of the corresponding equation~$\mathcal L_V$.
The ideal~$\mathfrak g^{\rm lin}_V$ consists of vector fields associated with transformations of {\it linear superposition}
on the solution set of the equation~$\mathcal L_V$ and therefore it is a trivial part of~$\mathfrak g_V$.

\begin{definition}
A subalgebra~$\mathfrak s$ of $\mathfrak g^{\rm ess}_\spanindex$ is called \emph{appropriate}
if there exists a potential $V$ such that~$\mathfrak s=\mathfrak g^{\rm ess}_V$.
\end{definition}

The relation of the equivalence algebra~$\mathfrak g^\sim $ and $\mathfrak g^{\rm ess}_\spanindex$ is given by
$\mathfrak g^{\rm ess}_\spanindex=\pi_*\mathfrak g^\sim$,
where $\pi_*$ is the mapping from $\mathfrak g^\sim$ onto $\mathfrak g^{\rm ess}_\spanindex$ that is induced by
the projection~$\pi$ of the joint space of the variables and the arbitrary element onto the space of the variables only.
Although the projection~$\pi$ is not a (local) diffeomorphism,
it properly pushes forward vector fields from~$\mathfrak g^\sim$ due to their structure
and hence the mapping~$\pi_*$ is well defined.
It maps the vector fields~$\hat D(\tau)$, $\hat J_{ab}$, $\hat P(\chi)$, $\hat M(\sigma)$, $\hat I(\rho)$ spanning~$\mathfrak g^\sim$
to the vector fields~$D(\tau)$, $J_{ab}$, $P(\chi)$, $\sigma M$, $\rho I$ spanning~$\mathfrak g^{\rm ess}_\spanindex$, respectively.
The above relation is stronger than
the inclusion $\mathfrak g^{\rm ess}_\spanindex\subseteq\pi_*\mathfrak g^\sim$
implied by the specific uniform semi-normalization of the class~$\mathcal F$.

As the algebra~$\mathfrak g^{\rm ess}_\spanindex$ coincides with
the set~$\pi_*\mathfrak g^\sim$ of infinitesimal generators of one-parameter subgroups
of the group~$\pi_*G^\sim$ which is the projection of~$G^\sim$ by~$\pi$, the structure of $\mathfrak g^{\rm ess}_\spanindex$ is agreed with
the action of~$\pi_*G^\sim$ on this algebra.
Moreover, both $\mathfrak g^{\rm ess}_\spanindex$ and $\mathfrak g^{\rm lin}_\spanindex$
are invariant with respect to the action of the group~$\pi_*G^\sim$.
This is why the action of~$G^\sim$ on equations from the class~$\mathcal F$
induces the well-defined action of~$\pi_*G^\sim$ on
the essential Lie invariance algebras of these equations,
which are subalgebras of $\mathfrak g^{\rm ess}_\spanindex$.
The kernel $\mathfrak g^\cap$ is obviously an ideal in~$\mathfrak g^{\rm ess}_V$ for any~$V$.
\noprint{
Moreover, the problem of group classification of the class~$\mathcal F$ is solved
if we classify all appropriate subalgebras of the algebra~$\mathfrak g^{\rm ess}_\spanindex$ with respect to the equivalence relation generated by~$\pi_*G^\sim$.
}

Summarizing altogether we have the following proposition:
\begin{proposition}
The problem of group classification of (1+n)-dimensional linear Schr\"odinger equations 
with complex-valued potentials of the form~\eqref{MLinSchEqs} 
reduces to the classification of appropriate subalgebras of the algebra~$\mathfrak g^{\rm ess}_\spanindex$ 
with respect to the equivalence relation generated by the action of~$\pi_*G^\sim$.
\end{proposition}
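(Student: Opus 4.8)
The plan is to derive this Proposition as a direct application of the general reduction theorem for semi-normalized classes, namely Proposition~\ref{pro:GroupClassificationOfSemi-normClasses}, to the class~$\mathcal F$, assembling the structural facts already established for it. First I would invoke Corollary~\ref{col:UniformSemi-normalizationOfClassF}, which states that~$\mathcal F$ is uniformly semi-normalized with respect to the linear superposition of solutions. Since the equations~$\mathcal L_V$ are homogeneous linear equations admitting no nonzero common solution (no nonzero function solves~$\mathcal L_V$ for every potential~$V$), the set of common solutions is trivial, and thus the subgroup~$H$ featuring in this semi-normalization coincides with the entire equivalence group~$G^\sim$ and the corresponding subalgebra~$\mathfrak h$ with the whole equivalence algebra~$\mathfrak g^\sim$ of Theorem~\ref{thm:EquivAlgOfF}.

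With these identifications, Proposition~\ref{pro:GroupClassificationOfSemi-normClasses} reduces the group classification problem for~$\mathcal F$ to classifying, up to $H$-equivalence, the appropriate subalgebras of~$\pi_*\mathfrak h$. Here I would use the relation $\pi_*\mathfrak g^\sim=\mathfrak g^{\rm ess}_\spanindex$ recorded above: the projection~$\pi_*$ sends the spanning fields $\hat D(\tau)$, $\hat J_{ab}$, $\hat P(\chi)$, $\hat M(\sigma)$, $\hat I(\rho)$ of~$\mathfrak g^\sim$ to $D(\tau)$, $J_{ab}$, $P(\chi)$, $\sigma M$, $\rho I$, so that $\pi_*\mathfrak h=\mathfrak g^{\rm ess}_\spanindex$ and the subalgebras to be classified are precisely the appropriate subalgebras of~$\mathfrak g^{\rm ess}_\spanindex$ in the sense defined just above (those equal to~$\mathfrak g^{\rm ess}_V$ for some potential~$V$).

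It remains to match the equivalence relation. By Theorem~\ref{thm:EquivGroupoidOfSemi-normClasses}(iii) the $H$-equivalence of appropriate subalgebras coincides with the $G^\sim$-equivalence, and since each $\mathscr T\in G^\sim$ acts on subalgebras of~$\mathfrak g^{\rm ess}_\spanindex\subseteq\pi_*\mathfrak g^\sim$ through its projection~$\pi_*\mathscr T$, this is exactly the equivalence generated by the action of~$\pi_*G^\sim$, as asserted. The step requiring the most care is the clean identification $H=G^\sim$ together with $\pi_*\mathfrak h=\mathfrak g^{\rm ess}_\spanindex$: one must check that there are no gauge or inessential equivalence transformations obstructing the passage from~$H$ to the full group, and that~$\pi_*$, although not induced by a diffeomorphism, is nonetheless well defined on~$\mathfrak g^\sim$ by virtue of the affine-in-$\psi$ structure of its elements.
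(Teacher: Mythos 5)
Your proposal is correct and follows essentially the same route as the paper: the paper states this proposition as a summary of the immediately preceding discussion, which rests on exactly the ingredients you assemble — the uniform semi-normalization of~$\mathcal F$ with respect to the linear superposition of solutions (Corollary~\ref{col:UniformSemi-normalizationOfClassF}), the triviality of the set of common solutions giving $H=G^\sim$, the identification $\pi_*\mathfrak g^\sim=\mathfrak g^{\rm ess}_\spanindex$, and the general reduction of Proposition~\ref{pro:GroupClassificationOfSemi-normClasses}. Your explicit attention to the well-definedness of~$\pi_*$ and to matching the $H$-equivalence with the $\pi_*G^\sim$-action mirrors the paper's own remarks.
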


\subsection{Further properties of Lie invariance algebras}\label{sec:FurtherPropertiesOfLieInvAlgs}

In order to classify all appropriate subalgebras of~$\mathfrak g^{\rm ess}_\spanindex$,
we start with describing the action of the transformations from~$\pi_*G^\sim$
on the vector fields from~$\mathfrak g^{\rm ess}_\spanindex$.
For this purpose, we compute the pushforwards of spanning vector fields of~$\mathfrak g^{\rm ess}_\spanindex$
by elementary transformations from~$\pi_*G^\sim$.

Given $\Phi\in\pi_*G^\sim $ and $Q\in\mathfrak g^{\rm ess}_\spanindex$,
the pushforward~$Q$ by~$\Phi$ is computed as
\[
\tilde Q:=\Phi_*Q=Q(T)\p_{\tilde t}+Q(X^a)\p_{\tilde a}+Q(\Psi)\p_{\tilde \psi}+Q(\Psi^*)\p_{\tilde\psi^*},
\]
where into each coefficient of~$\tilde Q$ we substitute
the expressions of the variables without tildes in terms of the variables with tildes,
$(t,x_a,\psi,\psi^*)=\Phi^{-1}(\tilde t,\tilde x_a,\tilde\psi,\tilde\psi^*)$,
and $\Phi^{-1}$ denotes the inverse of~$\Phi$.

By~$\mathcal D(T)$, $\mathcal J(O)$, $\mathcal P(\mathcal X)$, $\mathcal M(\Sigma)$ and $\mathcal I(\Upsilon)$ we denote
the transformations of the form~\eqref{Mequgroupoidsub_a}--\eqref{Mequgroupoidsub_b} with $\Lambda=0$ and $\varepsilon=1$,
where the parameters $T$, $O$, $\mathcal X$, $\Sigma$ and~$\Upsilon$,
successively excluding one of them, are set to the trivial values,
which are $t$ for $T$, $E$ for $O$ and zeroes for $\mathcal X$, $\Sigma$ and~$\Upsilon$.
These transformations, referred to as \emph{elementary transformations}, generate the entire group~$\pi_*G^\sim$.
The nonidentity pushforward actions of elementary transformations
to the vector fields spanning~$\mathfrak g^{\rm ess}_\spanindex$
are exhausted by
\begin{gather*}
\mathcal D_*(T)D(\tau)=D(\tilde \tau),\quad
\mathcal D_*(T)P(\chi)=P(\tilde \chi),\quad
\mathcal D_*(T)(\sigma M)=\tilde \sigma\tilde M,\quad
\mathcal D_*(T)(\rho I)=\tilde\rho\tilde I,\\[.5ex]
\mathcal J_*(O)P(\chi)=\tilde P(O\chi),\quad
\mathcal J_*(O)J_{ab}=\sum_{c<d}(O^{db}O^{ca}-O^{da}O^{cb})\tilde J_{cd},\\
\mathcal P_*(\mathcal X)D(\tau)=\tilde D(\tau)+\tilde P\left(\tau \mathcal X_t-\frac{\tau_t}2\mathcal X\right)
+\left(\frac{\tau_{tt}}8\mathcal X^a\mathcal X^a-\frac{\tau_t}4\mathcal X^a\mathcal X^a_t
-\frac\tau4(\mathcal X^a\mathcal X^a_{tt}-\mathcal X^a_t\mathcal X^a_t)\right)\tilde M,\\
\mathcal P_*(\mathcal X)J_{ab}=\tilde J_{ab}+P(\hat{\mathcal X})-\frac12(\mathcal X^a\mathcal X^b_t-\mathcal X^b\mathcal X^a_t)\tilde M,\\
\mathcal P_*(\mathcal X)P(\chi)=\tilde P(\chi)+\frac12(\chi^a\mathcal X^a_t-\chi^a_t\mathcal X^a)\tilde M,\\[.3ex]
\mathcal M_*(\Sigma)D(\tau)=\tilde D(\tau)+\tau\Sigma_t\tilde M,\quad
\mathcal I_*(\Upsilon)D(\tau)=\tilde D(\tau)+\tau\Upsilon_t\tilde I,\quad
\end{gather*}
where
$\tilde \tau(\tilde t)=(T_t\tau)(T^{-1}(\tilde t))$,
$\tilde \chi(\tilde t)=(|T_t|^{1/2}\chi)(T^{-1}(\tilde t))$,
$\hat{\mathcal X}^a=\mathcal X^b$,
$\hat{\mathcal X}^b=-\mathcal X^a$,
$\hat{\mathcal X}^c=0,\,c\ne a,b$,
$\tilde \sigma=\sigma(T^{-1}(\tilde t))$,
$\tilde\rho=\rho(T^{-1}(\tilde t))$
and into each pushforward by~$\mathcal D_*(T)$ we should substitute
the expression for~$t$ given by inverting the relation $\tilde t=T(t)$;
$t=\tilde t$ for the other pushforwards.
Tildes over vector fields mean that these vector fields are represented in the new variables.

\begin{lemma}\label{LinMultdimgess}
$\dim\mathfrak g^{\rm ess}_V\leqslant\dfrac{n(n+3)}2+5$ for any potential $V$,
and this upper bound is least.
\end{lemma}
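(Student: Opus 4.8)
The plan is to exploit the classifying condition~\eqref{Multclasscond}, which by Theorem~\ref{MultLinsymmetryoper} identifies $\mathfrak g^{\rm ess}_V$ with the real vector space of tuples $(\tau,(\kappa_{ab}),\chi,\sigma,\rho)$ for which $D(\tau)+\kappa_{ab}J_{ab}+P(\chi)+\sigma M+\rho I$ lies in $\mathfrak g_V$. Since \eqref{Multclasscond} is an identity in $x$ for each fixed $t$, I would differentiate it in the spatial variables and evaluate at the base point $x=0$ so as to extract, one at a time, the top $t$-derivatives of the parameter functions in terms of their lower-order jets. Concretely, applying the Laplacian $\partial_{x_a}\partial_{x_a}$ to \eqref{Multclasscond} and using the cancellation $\kappa_{ab}V_{ab}=0$ (a skew matrix contracted with the symmetric Hessian of $V$) expresses $\tau_{ttt}$ through $\tau$, $\tau_t$, $\chi$ and the values of $\Delta V$, $\partial_t\Delta V$, $\partial_a\Delta V$ at $x=0$; the single derivative $\partial_{x_b}$ evaluated at $x=0$ expresses $\chi^b_{tt}$ through $\tau$, $\tau_t$, $\kappa$, $\chi$; and \eqref{Multclasscond} itself at $x=0$, split into real and imaginary parts, expresses $\sigma_t$ and $\rho_t$ through $\tau$, $\tau_t$, $\tau_{tt}$.

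Assembling these relations, I obtain a \emph{closed, linear, first-order} system of ordinary differential equations in $t$ for the jet vector
\[
W:=\bigl(\tau,\tau_t,\tau_{tt},\{\kappa_{ab}\}_{a<b},\{\chi^a\},\{\chi^a_t\},\sigma,\rho\bigr),
\]
whose coefficients are smooth functions of $t$ built from derivatives of $V$ at $x=0$. By the uniqueness theorem for linear ODEs, each solution is determined by its value $W(t_0)$ at a single point, so the linear map $\mathfrak g^{\rm ess}_V\ni Q\mapsto W(t_0)$ is injective. Counting components gives $\dim W=3+\tfrac{n(n-1)}2+2n+2=\tfrac{n(n+3)}2+5$, whence the upper bound $\dim\mathfrak g^{\rm ess}_V\le\tfrac{n(n+3)}2+5$.

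To show this bound is least, I would exhibit a potential that attains it, the natural choice being the free equation $V=0$ (cf.\ Corollary~\ref{free Schrodingerequation}). For $V=0$ the condition~\eqref{Multclasscond} collapses to $\tfrac18\tau_{ttt}|x|^2+\tfrac12\chi^a_{tt}x_a+\sigma_t-i\rho_t-i\tfrac n4\tau_{tt}=0$, and splitting by powers of $x$ and into real and imaginary parts yields $\tau_{ttt}=0$, $\chi^a_{tt}=0$, $\sigma_t=0$ and $\rho_t=-\tfrac n4\tau_{tt}$, while $(\kappa_{ab})$ remains completely free. Hence $\tau$ spans $\langle1,t,t^2\rangle$, each $\chi^a$ spans $\langle1,t\rangle$, $\sigma$ is an arbitrary constant, $\rho$ is an arbitrary constant plus a part determined by $\tau$, and $(\kappa_{ab})$ ranges over $\mathfrak{so}(n)$; the total dimension is $3+2n+1+1+\tfrac{n(n-1)}2=\tfrac{n(n+3)}2+5$.

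The main obstacle is the second step: verifying that differentiating~\eqref{Multclasscond} and evaluating at $x=0$ genuinely closes the system, i.e.\ that each of $\tau_{ttt}$, $\chi^a_{tt}$, $\sigma_t$, $\rho_t$ is expressible through jet entries already recorded in $W$ with no residual $x$-dependence and no higher $t$-derivatives generated. The cancellation $\kappa_{ab}V_{ab}=0$ under the Laplacian is precisely what prevents the rotational parameters from spoiling the equation for $\tau_{ttt}$; once this bookkeeping is carried out, the ODE-uniqueness argument and the dimension count are routine.
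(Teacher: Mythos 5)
Your argument is correct and follows essentially the same route as the paper's proof: reduce the classifying condition~\eqref{Multclasscond} to a linear ODE system in normal form for $(\tau,\chi,\sigma,\rho)$ with the $\kappa_{ab}$ entering as parameters, count $n(n-1)/2$ rotation parameters plus $2n+5$ constants of integration, and verify sharpness on the free equation $V=0$. The only difference is that you spell out how the normal form is extracted (Laplacian, first spatial derivatives, and evaluation at $x=0$, with the cancellation $\kappa_{ab}V_{ab}=0$), whereas the paper delegates this step to the analogous Lemma~1 of~\cite{kuru2018a}.
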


\begin{proof}
Fix a potential~$V$. Then
similarly to the proof of Lemma~1 in \cite{kuru2018a}, the classifying condition~\eqref{Multclasscond} implies
a system of linear ordinary differential equations of the normal form
\begin{gather*}
\tau_{ttt}=\gamma^{00}\tau_t+\gamma^{01}\tau+\gamma^{0,a+1}\chi^a+\theta^{0ab}\kappa_{ab},\\
\chi^c_{tt}=\gamma^{c0}\tau_t+\gamma^{c1}\tau+\gamma^{c,a+1}\chi^a+\theta^{cab}\kappa_{ab},\\
\sigma_t=\gamma^{n+1,0}\tau_t+\gamma^{n+1,1}\tau+\gamma^{n+1,a+1}\chi^a+\theta^{n+1,ab}\kappa_{ab},\\
\rho_t=-\frac n4\tau_{tt}+\gamma^{n+2,0}\tau_t+\gamma^{n+2,1}\tau+\gamma^{n+2,a+1}\chi^a+\theta^{n+2,ab}\kappa_{ab},
\end{gather*}
where the coefficients~$\gamma^{pq}$ and $\theta^{pab}$, $p=0,\dots,n+2$, $q=0,\dots,n+1$, $a<b$, are functions of~$t$.
Therefore, the dimension of $\mathfrak g^{\rm ess}_V$ is bounded by the sum of the number of pairs $(a,b)$ with $a<b$
(which is associated with rotations and equals $n(n-1)/2$)
and the number of arbitrary constants in the general solution of the above system (which equals $2n+5$).
The derived upper bound is least since it coincides with $\dim\mathfrak g^{\rm ess}_V$
for the zero~$V$ corresponding to the free Schr\"odinger equation.
\end{proof}

\begin{corollary}
$\dim\mathfrak g^{\rm ess}_V\cap\langle P(\chi),\sigma M,\rho I\rangle \leqslant 2n+2$.
\end{corollary}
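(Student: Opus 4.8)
The plan is to exploit the fact that the subspace $\langle P(\chi),\sigma M,\rho I\rangle$ of $\mathfrak g^{\rm ess}_\spanindex$ is exactly the part on which the parameters~$\tau$ and~$\kappa_{ab}$ vanish, and that the map sending a tuple $(\chi,\sigma,\rho)$ to $P(\chi)+\sigma M+\rho I$ is injective: the coefficient of~$\p_a$ recovers~$\chi^a$, and the coefficient of~$\psi\p_\psi$ then recovers $\sigma$ and~$\rho$ through the combination $\tfrac i2\chi^a_tx_a+i\sigma+\rho$. Hence it suffices to bound the dimension of the space of tuples $(\chi,\sigma,\rho)$ whose vector field lies in~$\mathfrak g_V$, i.e., for which the classifying condition~\eqref{Multclasscond} holds with $\tau=0$ and $\kappa_{ab}=0$. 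Under this restriction, \eqref{Multclasscond} collapses to the single relation
\[
\chi^a V_a=\tfrac12\chi^a_{tt}x_a+\sigma_t-i\rho_t,
\]
which must hold identically in~$x$.

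First I would differentiate this relation with respect to~$x_b$. Since $\chi$, $\sigma$ and~$\rho$ depend on~$t$ only, the $x$-independent summand $\sigma_t-i\rho_t$ drops out and one obtains $\chi^aV_{ab}=\tfrac12\chi^b_{tt}$. Fixing an arbitrary point $x=x_0$ and taking the real part (the left-hand side is complex because~$V$ is complex-valued, whereas $\chi^b_{tt}$ is real) yields the linear homogeneous second-order system $\chi^b_{tt}=2\,\mathrm{Re}\,V_{ab}(t,x_0)\,\chi^a$ for the tuple~$\chi=(\chi^1,\dots,\chi^n)$. By the existence--uniqueness theorem for linear ordinary differential equations, its solution space has dimension at most~$2n$, which bounds the space of admissible~$\chi$.

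It then remains to account for~$\sigma$ and~$\rho$. Projecting the solution space onto its $\chi$-component, the image is contained in the $(\leqslant 2n)$-dimensional solution space just described, while the kernel consists of the tuples with $\chi=0$; for such tuples the reduced classifying condition gives $\sigma_t-i\rho_t=0$, so $\sigma$ and~$\rho$ are arbitrary real constants and the kernel is exactly two-dimensional. Rank--nullity then yields the bound $2n+2$. Equivalently, this is precisely the restriction to $\tau=0$, $\kappa_{ab}=0$ of the normal-form system obtained in the proof of Lemma~\ref{LinMultdimgess}, whose reduced state $(\chi,\chi_t,\sigma,\rho)$ has dimension $2n+2$.

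The only genuine subtlety — a point requiring care rather than real difficulty — is the complex-valuedness of~$V$. Here I would emphasize that we seek only an upper bound, so it is legitimate to evaluate at a single point~$x_0$ and to keep only the real part of the differentiated relation, even though the full identity (both real and imaginary parts, for all~$x$) is strictly stronger and would in general reduce the dimension further.
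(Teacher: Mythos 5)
Your proof is correct and follows essentially the same route as the paper: the paper's proof simply restricts the normal-form ODE system from the proof of Lemma~\ref{LinMultdimgess} to $\tau=0$, $\kappa_{ab}=0$ and counts the constants of integration ($2n$ for the second-order system in $\chi$ plus one each for $\sigma$ and $\rho$), which is exactly the system you re-derive explicitly from the classifying condition by differentiating in $x_b$ and evaluating at a point. Your extra care about injectivity of $(\chi,\sigma,\rho)\mapsto P(\chi)+\sigma M+\rho I$ and about only needing an upper bound when discarding the imaginary part is sound and just makes explicit what the paper leaves implicit.
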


\begin{proof}
We follow the proof of Lemma~\ref{LinMultdimgess} by omitting the first equation of the system from this proof and setting $\tau=0$ and $\kappa_{ab}=0$.
\end{proof}

\begin{corollary}\label{cor:LinMultiDSchEqsKernelAlgebra}
 $\mathfrak g^{\rm ess}_V\cap\langle \sigma M,\rho I\rangle=\mathfrak g^\cap$ for any potential $V$.
\end{corollary}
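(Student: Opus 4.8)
The plan is to exploit the explicit classifying condition~\eqref{Multclasscond} directly. First I would pin down what the left-hand intersection actually contains. An element of $\mathfrak g^{\rm ess}_V\cap\langle\sigma M,\rho I\rangle$ is, on the one hand, a member of $\mathfrak g^{\rm ess}_V$ and hence of the general form $D(\tau)+\sum_{a<b}\kappa_{ab}J_{ab}+P(\chi)+\sigma M+\rho I$ displayed in Theorem~\ref{MultLinsymmetryoper}, and, on the other hand, a pure combination $\sigma M+\rho I$. Comparing coefficients of the differential operators in these two representations settles the constraints on the parameters: $D(\tau)$ supplies the only $\p_t$-term, so $\tau=0$, while the $\p_a$-terms coming from $P(\chi)$ and the $J_{ab}$ must then cancel for all~$x$, forcing $\chi^a=0$ and $\kappa_{ab}=0$. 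Thus the intersection consists precisely of those vector fields $\sigma M+\rho I$, with $\sigma,\rho$ smooth real-valued functions of~$t$, that belong to $\mathfrak g_V$.

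Next I would impose membership in $\mathfrak g_V$ through the classifying condition~\eqref{Multclasscond}. Substituting $\tau=0$, $\chi^a=0$ and $\kappa_{ab}=0$, the entire left-hand side vanishes identically, since every summand there carries a factor of $\tau$, $\kappa_{ab}$ or $\chi^a$; on the right-hand side the terms $\tfrac18\tau_{ttt}|x|^2$, $\tfrac12\chi^a_{tt}x_a$ and $-i\tfrac n4\tau_{tt}$ disappear as well. What remains is simply $0=\sigma_t-i\rho_t$, with no dependence on the potential~$V$ whatsoever.

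The concluding step is a reality argument. Because $\sigma$ and $\rho$ are real-valued functions of~$t$, their derivatives $\sigma_t$ and $\rho_t$ are real, so the single complex equation $\sigma_t-i\rho_t=0$ decouples into $\sigma_t=0$ and $\rho_t=0$; hence $\sigma$ and $\rho$ are constants. This gives the inclusion $\mathfrak g^{\rm ess}_V\cap\langle\sigma M,\rho I\rangle\subseteq\langle M,I\rangle$, and the reverse inclusion is immediate since $M$ and $I$ already lie in $\mathfrak g^\cap\subseteq\mathfrak g^{\rm ess}_V$. Combined with the identification $\mathfrak g^\cap=\langle M,I\rangle$ established earlier, this yields the asserted equality.

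I do not expect a genuine obstacle here: the computation is forced once~\eqref{Multclasscond} is in hand, and the only point requiring a moment's care is observing that the constraints $\tau=\chi^a=\kappa_{ab}=0$ annihilate the whole left-hand side of~\eqref{Multclasscond}, so that the surviving condition $\sigma_t-i\rho_t=0$ is completely independent of~$V$. It is precisely this $V$-independence that makes the intersection coincide with the kernel~$\mathfrak g^\cap$ uniformly in the potential.
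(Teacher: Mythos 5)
Your argument is correct and is essentially the paper's own proof: the paper simply retains the last two equations of the normal-form system derived from the classifying condition~\eqref{Multclasscond} in the proof of Lemma~\ref{LinMultdimgess} and sets $\tau=\chi^a=\kappa_{ab}=0$, which yields exactly your conclusion $\sigma_t=\rho_t=0$. Your direct substitution into~\eqref{Multclasscond}, together with the reality splitting of $\sigma_t-i\rho_t=0$ and the observation that the result is independent of~$V$, is the same computation carried out without the intermediate normal form.
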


\begin{proof}
Analogously, we preserve only the last two equations of the system from the proof of Lemma~\ref{LinMultdimgess}
and set $\tau=\chi^a=0$ and $\kappa_{ab}=0$.
\end{proof}

\begin{lemma}\label{LinMultgclas}
For all $V$, $\pi^0_*\mathfrak g^{\rm ess}_V $ is a Lie algebra and $\dim\pi^0_*\mathfrak g^{\rm ess}_V\leqslant 3$.
Moreover,
\[
\pi^0_*\mathfrak g^{\rm ess}_V\in
\{0,\langle \p_t\rangle,\langle \p_t,t\p_t\rangle,\langle \p_t,t\p_t,t^2\p_t\rangle\}
\bmod\pi^0_*G^\sim,
\]
where $\pi^0$ denotes the projection on the space of the variable $t$ and
$\pi^0_*\mathfrak s\subset\pi^0_*\mathfrak g^{\rm ess}_\spanindex=\langle \tau\p_t\rangle$.
\end{lemma}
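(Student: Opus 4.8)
The plan is to extract from the essential Lie invariance algebra~$\mathfrak g^{\rm ess}_V$ only the information about the $t$-component~$\tau$ of its vector fields, since $\pi^0$ kills all dependence on $x$, $\psi$, $\psi^*$. First I would note that every spanning vector field of~$\mathfrak g^{\rm ess}_\spanindex$ listed in Theorem~\ref{MultLinsymmetryoper} has $t$-component equal to its~$\tau$, and among $D(\tau)$, $J_{ab}$, $P(\chi)$, $\sigma M$, $\rho I$ only $D(\tau)$ contributes a nonzero projection $\tau\p_t$; hence $\pi^0_*\mathfrak g^{\rm ess}_V$ is spanned by the functions~$\tau$ arising from vector fields in~$\mathfrak g^{\rm ess}_V$, giving $\pi^0_*\mathfrak g^{\rm ess}_V\subseteq\pi^0_*\mathfrak g^{\rm ess}_\spanindex=\langle\tau\p_t\rangle$ as claimed.

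Next I would establish that $\pi^0_*\mathfrak g^{\rm ess}_V$ is actually a Lie algebra of vector fields on the $t$-line. The key observation is that the set of admissible $\tau$ for a fixed~$V$ is closed under the Lie bracket induced on the projection: from the commutation relation $[D(\tau^1),D(\tau^2)]=D(\tau^1\tau^2_t-\tau^2\tau^1_t)$, and since the projections of $J_{ab}$, $P(\chi)$, $\sigma M$, $\rho I$ vanish while their brackets with $D(\tau)$ produce vector fields whose projections again vanish (they lie in $\langle J_{ab},P(\chi),\sigma M,\rho I,Z(\zeta)\rangle$), the projected bracket of two elements $\tau^1\p_t$, $\tau^2\p_t$ is exactly $(\tau^1\tau^2_t-\tau^2\tau^1_t)\p_t$, the standard vector-field bracket on $\mathbb R$. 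Thus $\pi^0_*\mathfrak g^{\rm ess}_V$ is a finite-dimensional Lie algebra of vector fields on a one-dimensional manifold.

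The main step is the dimension bound and the classification. For this I would invoke the first equation of the normal-form system in the proof of Lemma~\ref{LinMultdimgess}, namely $\tau_{ttt}=\gamma^{00}\tau_t+\gamma^{01}\tau+\gamma^{0,a+1}\chi^a+\theta^{0ab}\kappa_{ab}$; restricting to the projected algebra I would argue that the admissible~$\tau$ satisfy a linear third-order ODE, so the space of such~$\tau$ is at most three-dimensional, giving $\dim\pi^0_*\mathfrak g^{\rm ess}_V\leqslant3$. The hard part will be the final normalization modulo $\pi^0_*G^\sim$: I would use the classical fact (Lie's classification of finite-dimensional transitive and intransitive Lie algebras of vector fields on~$\mathbb R$) that any finite-dimensional Lie algebra of vector fields on the line of dimension $0,1,2,3$ is equivalent, under a local diffeomorphism of~$t$, to one of $0$, $\langle\p_t\rangle$, $\langle\p_t,t\p_t\rangle$, $\langle\p_t,t\p_t,t^2\p_t\rangle$; the point requiring care is that the equivalence transformation must be realized by an element of $\pi^0_*G^\sim$, which from Theorem~\ref{thm:EquivAlgOfF} (via $\hat D(\tau)$) acts on the $t$-line by the projection of arbitrary smooth reparameterizations $\tilde t=T(t)$, $T_t\ne0$. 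Since this projected action is exactly the full (pseudo)group of local diffeomorphisms of~$t$, the abstract Lie-theoretic normal forms can indeed be attained within $\pi^0_*G^\sim$, which closes the argument.
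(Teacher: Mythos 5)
Your proposal follows essentially the same route as the paper, whose proof is a one-line appeal to Lie's classification of finite-dimensional Lie algebras of vector fields on the real line; your added observations that $\pi^0$ is a Lie algebra homomorphism onto vector fields on the $t$-line and that $\pi^0_*G^\sim$ exhausts the local diffeomorphisms $\tilde t=T(t)$, $T_t\ne0$ (so Lie's normal forms are attainable within the admitted equivalences) are exactly the points the paper leaves implicit. One step of yours is shaky as stated: the equation $\tau_{ttt}=\gamma^{00}\tau_t+\gamma^{01}\tau+\gamma^{0,a+1}\chi^a+\theta^{0ab}\kappa_{ab}$ is coupled to $\chi^a$ and $\kappa_{ab}$, so it is not a closed third-order ODE for $\tau$ and does not by itself bound the space of admissible $\tau$'s by three. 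The clean fix is to use Lemma~\ref{LinMultdimgess} only to conclude that $\mathfrak g^{\rm ess}_V$, and hence its image $\pi^0_*\mathfrak g^{\rm ess}_V$, is finite-dimensional, and then let Lie's theorem deliver both the bound $\dim\pi^0_*\mathfrak g^{\rm ess}_V\leqslant3$ and the list of normal forms simultaneously.
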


\begin{proof}
The proof is similar to the one given in~\cite{kuru2018a}.
It is based on the famous Lie classification of finite dimensional algebras of vector fields on the real line.
\end{proof}

\section{Complete group classification of (1+2)-dimensional\\ 
linear Schr\"odinger equations with complex-valued potentials}\label{sec:GroupClassificationOf(1+2)DLinSchEqs}

We complete the group classification of linear Schr\"odinger equations 
of the general form~\eqref{MLinSchEqs} only in the case $n=2$.
Even the case $n=3$ is much more cumbersome and requires more study, not to mention the case of general~$n$.
Specifying results of Section~\ref{sec:AnalysOfDetEqsForLieSymsOfMultiDSchEqs} to the case $n=2$,
we obtain that for any $V$
\begin{gather*}
\dim\mathfrak g^{\rm ess}_V\leqslant 10,\quad
\mathfrak g^{\rm ess}_V \cap\langle\sigma M,\rho I\rangle =\langle M,I\rangle=\mathfrak g^\cap,\\
2\leqslant\dim\mathfrak g^{\rm ess}_V \cap\langle P(\chi),\sigma M,\rho I\rangle \leqslant 6,\quad
\dim\pi^0_*\mathfrak g^{\rm ess}_V\leqslant 3.
\end{gather*}
For each appropriate subalgebra~$\mathfrak s$ of $\mathfrak g^{\rm ess}_{\langle \,\rangle}$, similarly to~\cite{kuru2018a}
we introduce five~$\pi_*G^\sim$-invariant integers in order to arrange the process of the group classification:
\begin{gather*}
k_0:=\dim\mathfrak s\cap\langle \sigma M,\rho I\rangle=\dim\mathfrak g^\cap=2,\\
k_1:=\dim\mathfrak s\cap\langle P(\chi),\sigma M,\rho I\rangle-k_0\in \{0,1,2,3,4\},\\
k_2:=\dim\mathfrak s\cap\langle J,\,P(\chi),\,\sigma M,\,\rho I \rangle-k_1-k_0\in \{0,1\},\\
k_3:=\dim\mathfrak s-k_2-k_1-k_0=\dim\pi^0_*\mathfrak s\in \{0,1,2,3\},\\
r_0:=\rank\{\chi\mid \exists\,\sigma,\rho \colon P(\chi)+\sigma M+\rho I\in \mathfrak s\}\in \{0,1,2\},
\end{gather*}
where $\pi^0$ denotes the projection on the space of the variable $t$.
Note that \[\dim\mathfrak s=k_0+k_1+k_2+k_3\leqslant 10.\]
The parameter tuple $(k_2,r_0,k_3)$ is used for the labeling of possible classification cases.  
In fact, not all values of $(k_2,r_0,k_3)$ in $\{0,1\}\times\{0,1,2\}\times\{0,1,2,3\}$ are appropriate.

\begin{lemma}\label{lem:k2r0}
 $(k_2,r_0)\ne(1,1)$.
\end{lemma}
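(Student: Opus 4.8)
The plan is to derive a contradiction from the assumption $(k_2,r_0)=(1,1)$, using only that $\mathfrak s$ is closed under the Lie bracket together with the rotation relation $[J,P(\chi)]=P(\hat\chi)$, $\hat\chi=(\chi^2,-\chi^1)$, among the commutation relations of $\mathfrak g^{\rm ess}_\spanindex$. First I would translate the two integer conditions into structural statements about~$\mathfrak s$. Since the quotient $\langle J,P(\chi),\sigma M,\rho I\rangle/\langle P(\chi),\sigma M,\rho I\rangle$ is one-dimensional and spanned by the image of~$J$, the condition $k_2=1$ means precisely that $\mathfrak s$ contains an element with nonzero coefficient of~$J$; after rescaling I may take it as $R=J+P(\chi^0)+\sigma^0M+\rho^0I$ for some real-valued functions $\chi^0,\sigma^0,\rho^0$ of~$t$. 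The condition $r_0=1$ in particular forces the space $X:=\{\chi\mid\exists\,\sigma,\rho\colon P(\chi)+\sigma M+\rho I\in\mathfrak s\}$ to be nonzero, so I may fix a \emph{nonzero} $\chi\in X$ together with a corresponding element $Q:=P(\chi)+\sigma M+\rho I\in\mathfrak s$.

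The decisive step is to compute $[R,Q]\in\mathfrak s$ and read off its $P$-part. Invoking the commutation relations, the generators $M$ and $I$ commute with $J$, with every $P(\cdot)$, and with each other, so that $[J,\sigma M]=[J,\rho I]=0$ and every bracket involving $\sigma^0M$ or $\rho^0I$ vanishes, while $[P(\chi^0),P(\chi)]$ is purely $M$-valued. The only surviving summand carrying a nontrivial $P$-part is $[J,P(\chi)]=P(\hat\chi)$, whence $[R,Q]=P(\hat\chi)+\sigma'M$ for some function $\sigma'$ of~$t$. Consequently $\hat\chi\in X$ as well.

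Finally, since $\chi$ is not the zero function there is a $t_0$ with $\chi(t_0)\neq0$; at this point $\chi(t_0)$ and $\hat\chi(t_0)=(\chi^2(t_0),-\chi^1(t_0))$ are nonzero and mutually orthogonal, hence span~$\mathbb R^2$. As both $\chi$ and $\hat\chi$ lie in~$X$, this yields $r_0=2$, contradicting $r_0=1$ and establishing the lemma. I expect the only point requiring care to be the bracket bookkeeping of the previous paragraph: one must verify that the auxiliary summands $P(\chi^0),\sigma^0M,\rho^0I$ of~$R$ contribute nothing to the $P$-part of $[R,Q]$, so that the realized directions in~$X$ are genuinely invariant under the $90^\circ$ rotation $\chi\mapsto\hat\chi$; once this is confirmed, the conclusion is immediate from the definitions of $k_2$ and~$r_0$.
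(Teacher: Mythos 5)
Your proposal is correct and follows essentially the same route as the paper: take one element with a nonzero $J$-coefficient and one pure generalized-shift element $P(\chi)+\sigma M+\rho I$ with $\chi\ne0$, observe that their commutator contributes $P(\chi^2,-\chi^1)$ plus only an $M$-term, and conclude $r_0=2$. Your explicit bracket bookkeeping and the pointwise orthogonality argument for why $\chi$ and $\hat\chi$ force rank two are exactly the details the paper leaves implicit.
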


\begin{proof}
Suppose that $k_2=1$ and $r_0\geqslant1$.
Then the algebra $\mathfrak g^{\rm ess}_V$ necessarily contains vector fields of the form
$Q^0=J+P(\chi^{01},\chi^{02})+\sigma^0 M+\rho^0 I$ and
$Q^1=P(\chi^{11},\chi^{12})+\sigma^1M+\rho^1 I$, where $(\chi^{11},\chi^{12})\ne (0,0)$.
The commutator of these vector fields
\[
[Q^0,Q^1]=P(\chi^{12},-\chi^{11})+\frac12 (\chi^{01}\chi^{11}_t+\chi^{02}\chi^{12}_t-\chi^{01}_t\chi^{11}-\chi^{02}_t\chi^{12})M
\]
should belong to $\mathfrak g^{\rm ess}_V$, and hence $r_0=2$.
Therefore, the case $k_2=r_0=1$ is impossible.
\end{proof}

The other constraint for appropriate values of $(k_2,r_0,k_3)$, $k_3\ne2$ if $(k_2,r_0)\ne(0,0)$, 
is not obvious at all and can be derived only by analyzing the complete proof of group classification 
for the class~$\mathcal F$ with $n=2$.

\begin{lemma}\label{Multgclaslema}
Up to the $\pi_*G^\sim$-equivalence, a vector field $P(\chi)+\sigma M+\rho I$,
where the components of the tuple~$\chi$ are linearly independent,
reduces to $P(h\cos t,h\sin t)+\tilde \rho I$ with nonzero $h=h(t)$.
\end{lemma}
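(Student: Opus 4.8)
The plan is to reduce the given vector field in two stages using only those elementary generators of $\pi_*G^\sim$ whose pushforward action on a field of the form $P(\chi)+\sigma M+\rho I$ is nontrivial, namely the rotations/reflections $\mathcal J(O)$, the time reparametrizations $\mathcal D(T)$, and the generalized space shifts $\mathcal P(\mathcal X)$. The transformations $\mathcal M(\Sigma)$ and $\mathcal I(\Upsilon)$ act trivially on such fields and play no role. Throughout, the coefficient $\rho$ is merely carried along: it is rescaled to $\tilde\rho=\rho\circ T^{-1}$ under $\mathcal D_*(T)$ and left untouched by $\mathcal J_*(O)$ and $\mathcal P_*(\mathcal X)$, which is exactly why the target representative retains a term $\tilde\rho I$.

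First I would bring $\chi$ to the stated form, carrying $\sigma$ and $\rho$ along. Writing $\chi=r(\cos\phi,\sin\phi)$ in polar form, the angular velocity is $\phi_t=W/r^2$, where $W:=\chi^1\chi^2_t-\chi^2\chi^1_t$ is the Wronskian of the two components. The key observation is that $\mathcal D_*(T)P(\chi)=P(\tilde\chi)$ with $\tilde\chi(\tilde t)=(|T_t|^{1/2}\chi)(T^{-1}(\tilde t))$ leaves the \emph{angle} of $\chi$ invariant and only rescales its length, whereas $\mathcal J_*(O)P(\chi)=\tilde P(O\chi)$ shifts or reflects the angle. Hence the idea is to make the new time variable equal to the polar angle. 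After applying the reflection $O=\mathrm{diag}(1,-1)$ if necessary so that $\phi_t>0$, I take $T=\phi$; then $\tilde t=\phi(t)$, the direction of $\chi$ becomes $(\cos\tilde t,\sin\tilde t)$, and its length becomes
\[
h(\tilde t)=\bigl(|T_t|^{1/2}r\bigr)\big|_{t=\phi^{-1}(\tilde t)}=|W|^{1/2}\neq0 ,
\]
so $\mathcal D_*(\phi)\,\mathcal J_*(O)$ sends $P(\chi)$ to $P(h\cos t,h\sin t)$ while turning $\sigma M$ and $\rho I$ into $\tilde\sigma M$ and $\tilde\rho I$.

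It then remains to eliminate the residual coefficient of $M$. Using $\mathcal P_*(\mathcal X)\bigl(P(\chi)+\sigma M+\rho I\bigr)=P(\chi)+\bigl(\sigma+\tfrac12(\chi^a\mathcal X^a_t-\chi^a_t\mathcal X^a)\bigr)M+\rho I$, in which the $P$-part and the $\rho I$ term are unchanged, I would choose $\mathcal X$ so that $\tfrac12(\chi^a\mathcal X^a_t-\chi^a_t\mathcal X^a)=-\sigma$. Since the now-normalized tuple $\chi=(h\cos t,h\sin t)$ is nonvanishing, this is a solvable first-order linear ODE for a single component of $\mathcal X$, so $\sigma$ can be set to zero without disturbing either the standardized $\chi$ or the term $\tilde\rho I$. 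This produces the required representative $P(h\cos t,h\sin t)+\tilde\rho I$.

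The main obstacle is justifying that the polar angle $\phi$ is an admissible new time variable, i.e.\ that $\phi_t\neq0$, equivalently $W\neq0$; this is precisely where the linear independence of the components of $\chi$ is used, and it is also what guarantees the nonvanishing of $h$. Linear independence prevents $\chi$ from pointing in a fixed direction, so $W$ is not identically zero and there is an interval on which $W\neq0$ (and hence $\chi\neq0$ and $\phi_t$ is of one sign); the reduction above is carried out on such an interval, which is the standard local setting for this kind of classification. I expect the care needed here — arguing $W\neq0$ on the working domain, rather than merely assembling the two reparametrizations — to be the genuinely nontrivial point, the composition of the elementary pushforwards being otherwise routine.
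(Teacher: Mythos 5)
Your proof is correct and follows essentially the same route as the paper's: represent $\chi$ in polar form, apply the time reparametrization $\mathcal D_*(T)$ with the polar angle as the new time to normalize the direction to $(\cos t,\sin t)$, and then apply $\mathcal P_*(\mathcal X)$ (e.g.\ with $\mathcal X=\lambda\chi$) to remove the $M$-term, none of which disturbs the $I$-term. Your added discussion of the Wronskian $W$ and the identification $h=|W|^{1/2}$ merely makes explicit what the paper leaves implicit when it asserts the representation $\chi=(h\cos T,h\sin T)$ with $T_t\ne0$.
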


\begin{proof}
We can represent the tuple $\chi$ as $\chi=(h\cos T,h\sin T)$,
where $h$ and~$T$ are smooth functions of~$t$ with $T_t\ne0$.
Then we successively apply the transformations $\mathcal D_*(T)$ and $\mathcal P_*(\mathcal X)$
with an appropriate tuple $\mathcal X$ of smooth functions of $t$
to the vector field and thus set $T=t$ and $\sigma=0$, which gives the required form.
\end{proof}

\begin{lemma}\label{Multlemmatauzero}
If $P(1,0)+\rho^1I\in\mathfrak g^{\rm ess}_V$,
then the vector field  $P(t,0)+\rho^2 I$ with $\rho^2=\int t\rho^1_t \,{\rm d}t$
also belongs to $\mathfrak g^{\rm ess}_V$.
\end{lemma}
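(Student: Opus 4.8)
The plan is to verify the claim directly from the classifying condition~\eqref{Multclasscond}, exploiting the fact that, by Theorem~\ref{MultLinsymmetryoper}, a vector field carrying no linear-superposition ($Z$-)part belongs to $\mathfrak g^{\rm ess}_V=\mathfrak g_V\cap\mathfrak g^{\rm ess}_\spanindex$ precisely when its parameter tuple satisfies~\eqref{Multclasscond} for the fixed potential~$V$. Both $P(1,0)+\rho^1I$ and the candidate $P(t,0)+\rho^2I$ have $\tau=0$, $\kappa_{ab}=0$ and $\sigma=0$, so I would first specialize~\eqref{Multclasscond} to this situation (with $n=2$), where it collapses to
\[
\chi^aV_a=\tfrac12\chi^a_{tt}x_a-i\rho_t.
\]

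First I would substitute the data of the hypothesis, $\chi=(1,0)$ and $\rho=\rho^1$, into this reduced condition. Since $\chi^1=1$ and $\chi^2=0$ have vanishing second $t$-derivatives, the right-hand side is simply $-i\rho^1_t$, so the condition reduces to $V_1=-i\rho^1_t$. The key structural observation is that this \emph{forces $V_1$ to be a function of~$t$ alone}, independent of the space variables.

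Next I would check the reduced condition for the candidate $P(t,0)+\rho^2I$, i.e.\ for $\chi=(t,0)$ and $\rho=\rho^2$. Again the term $\tfrac12\chi^a_{tt}x_a$ vanishes (as $\chi^1=t$ is affine in~$t$ and $\chi^2=0$), so the condition reads $tV_1=-i\rho^2_t$. Substituting $V_1=-i\rho^1_t$ from the previous step gives $tV_1=-it\rho^1_t$, and since $\rho^2$ is defined so that $\rho^2_t=t\rho^1_t$, the required identity $tV_1=-i\rho^2_t$ holds. Thus $P(t,0)+\rho^2I$ satisfies~\eqref{Multclasscond}.

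Finally, because $P(t,0)+\rho^2I$ carries no $Z(\eta^0)$ component it lies in $\mathfrak g^{\rm ess}_\spanindex$, so satisfying the classifying condition places it in $\mathfrak g^{\rm ess}_V$, as claimed. I would remark in passing that the antiderivative $\rho^2=\int t\rho^1_t\,{\rm d}t$ is determined only up to an additive constant, but this ambiguity is harmless: changing the constant alters the vector field by a multiple of $I\in\mathfrak g^\cap\subseteq\mathfrak g^{\rm ess}_V$, so the conclusion holds for any choice of the constant of integration. There is no genuine obstacle here—the argument is a short verification—and the only point requiring care is recognizing that the hypothesis forces $V_1$ to depend on~$t$ only, which is exactly what renders the two classifying conditions compatible through the relation $\rho^2_t=t\rho^1_t$.
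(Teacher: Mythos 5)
Your proof is correct and follows essentially the same route as the paper's: substitute $\chi=(1,0)$, $\rho=\rho^1$ into the classifying condition~\eqref{Multclasscond} to get $V_1=-i\rho^1_t$, and observe that this identity also yields the condition $tV_1=-i\rho^2_t$ for the candidate field with $\rho^2_t=t\rho^1_t$. Your added remarks on the harmlessness of the integration constant (absorbed into $I\in\mathfrak g^\cap$) and on why satisfying~\eqref{Multclasscond} suffices for membership in $\mathfrak g^{\rm ess}_V$ are correct and merely make explicit what the paper leaves implicit.
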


\begin{proof}
Suppose that $P(1,0)+\rho^1I \in \mathfrak g^{\rm ess}_V$.
We substitute the components of this vector field into the classifying condition~\eqref{Multclasscond},
which gives $V_1=-i\rho^1_t$.
This equation coincides with the one obtained by evaluating the classifying condition~\eqref{Multclasscond}
at $\tau=\sigma=0$, $\chi=t$ and $\rho^2=\int t\rho^1_t \,{\rm d}t$.
\end{proof}

Putting all the above ingredients together, we have the following:
any appropriate subalgebra of $\mathfrak g^{\rm ess}_\spanindex$ is spanned by
\begin{itemize}\itemsep=0ex
\item the basis vector fields $M$ and $I$ of the kernel $\mathfrak g^\cap$,
\item $k_1$ vector fields $P(\chi^{p1},\chi^{p2})+\sigma^pM+\rho^pI$, $p=1,\dots,k_1$,
with linearly independent tuples $\chi^1$, \dots, $\chi^{k_1}$,
\item $k_2$ vector fields $J+P(\chi^{01},\chi^{02})+\sigma^0M+\rho^0I$,
\item $k_3$ vector fields $D(\tau^q)+\kappa_q J+P(\chi^{q1},\chi^{q2})+\sigma^qM+\rho^qI$, $q=k_1+1,\dots,k_1+k_3$,
with linearly independent $\tau^{k_1+1}$, \dots, $\tau^{k_1+k_3}$.
\end{itemize}

In what follows we use the notation
\[
|x|=\sqrt{x_1^2+x_2^2},\quad \phi=\arctan x_2/x_1,\quad
\omega_1=x_1\cos t+x_2\sin t,\quad \omega_2=-x_1\sin t+x_2\cos t.
\]

\begin{theorem}\label{thm:GroupClassifictionOf(1+2)DLinSchEqs}
\looseness=-1
A complete list of inequivalent essential Lie-symmetry extensions 
in the class~$\mathcal F$ of (1+2)-dimensional linear Schr\"odinger equations
with complex-valued potentials is exhausted by the cases presented below.
Here $U$ is an arbitrary complex-valued smooth function of its arguments or an arbitrary complex constant
that satisfies constraints indicated in the corresponding cases, and the other functions and constants take real values.
\end{theorem}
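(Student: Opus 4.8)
The plan is to convert the group classification problem into the purely algebraic problem of classifying, up to the $\pi_*G^\sim$-equivalence, the appropriate subalgebras of $\mathfrak g^{\rm ess}_\spanindex$. This reduction is legitimate by the uniform semi-normalization of~$\mathcal F$ with respect to linear superposition (Corollary~\ref{col:UniformSemi-normalizationOfClassF}) together with Proposition~\ref{pro:GroupClassificationOfSemi-normClasses}. Each appropriate subalgebra~$\mathfrak s$ carries the five $\pi_*G^\sim$-invariant integers $k_0=2$, $k_1$, $k_2$, $k_3$ and~$r_0$ introduced above, and since $\dim\mathfrak s=k_0+k_1+k_2+k_3\leqslant10$ with finite ranges for all parameters, the analysis splits into finitely many cases labeled by the triple $(k_2,r_0,k_3)$. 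I would organize the whole proof around this triple, discarding beforehand the combinations ruled out by Lemma~\ref{lem:k2r0} (so $(k_2,r_0)\ne(1,1)$) and, once it is established, by the constraint $k_3\ne2$ whenever $(k_2,r_0)\ne(0,0)$.

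Within each case I would build a canonical basis of~$\mathfrak s$ from the ``top'' down. First, by Lemma~\ref{LinMultgclas} the time projection $\pi^0_*\mathfrak s$ is, modulo $\pi^0_*G^\sim$, one of $0$, $\langle\partial_t\rangle$, $\langle\partial_t,t\partial_t\rangle$, $\langle\partial_t,t\partial_t,t^2\partial_t\rangle$, which fixes~$k_3$ and normalizes the $D(\tau^q)$ parts. Next, the single rotation generator is governed by $k_2\in\{0,1\}$, while the translation--superposition part $\mathfrak s\cap\langle P(\chi),\sigma M,\rho I\rangle$ is controlled by~$k_1$ and~$r_0$; Lemma~\ref{Multgclaslema} lets me bring any $P(\chi)$ with linearly independent components of~$\chi$ to the form $P(h\cos t,h\sin t)+\tilde\rho I$. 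The elementary pushforwards $\mathcal D_*(T)$, $\mathcal J_*(O)$, $\mathcal P_*(\mathcal X)$, $\mathcal M_*(\Sigma)$, $\mathcal I_*(\Upsilon)$ listed above, together with the residual stabilizer of the already-normalized generators, are used to gauge away the remaining free functions and constants in $\sigma^p$, $\rho^p$, $\chi^q$, whereas closure under the Lie bracket, via the commutation relations of~$\mathfrak g_\spanindex$ and Lemma~\ref{Multlemmatauzero}, forces the admissible combinations of spanning vector fields.

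Having fixed a canonical~$\mathfrak s$, I would substitute the components of each basis vector field into the classifying condition~\eqref{Multclasscond} and merge the resulting linear equations for~$V$ into a single overdetermined system, whose general solution is the family of potentials realizing~$\mathfrak s$. For each such~$V$ I would recompute $\mathfrak g^{\rm ess}_V$ from Theorem~\ref{MultLinsymmetryoper} and verify that $\mathfrak g^{\rm ess}_V=\mathfrak s$ rather than a strict extension; this is precisely the appropriateness test, and it simultaneously discards spurious candidates and records the constraints on the arbitrary function or constant~$U$ that single out the genuine case. Finally, inequivalence of the listed cases follows because $(k_2,r_0,k_3)$ and the remaining discrete data are $\pi_*G^\sim$-invariant, and within a fixed invariant type one analyzes the action of the stabilizer of~$\mathfrak s$ in~$\pi_*G^\sim$ on the parameters of~$U$ to finish the reduction.

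The hard part will be the bookkeeping of the large case tree and, above all, establishing the nonobvious constraint $k_3\ne2$ for $(k_2,r_0)\ne(0,0)$: one must show that a two-dimensional time projection $\langle\partial_t,t\partial_t\rangle$ cannot coexist with a rotation or a genuine translation without forcing, through commutator closure and the classifying condition, either the appearance of $t^2\partial_t$ (hence $k_3=3$) or the collapse $(k_2,r_0)=(0,0)$. A further delicate point is guaranteeing exactness of $\mathfrak g^{\rm ess}_V=\mathfrak s$ in the borderline cases where the solution set for~$V$ is small, so that no unintended symmetry enhancement slips in; and special care is needed in the case $(k_2,r_0)=(0,1)$, where, as remarked above, it is more efficient to exploit a non-disjoint semi-normalization of the subclass~$\mathcal F_{{\rm l}x_1}$ with a uniform symmetry group wider than the one coming from linear superposition alone.
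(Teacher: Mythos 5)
Your proposal follows essentially the same route as the paper's proof: reduction to classifying appropriate subalgebras via the uniform semi-normalization, case splitting by the invariant triple $(k_2,r_0,k_3)$ using Lemmas~\ref{lem:k2r0}, \ref{LinMultgclas}, \ref{Multgclaslema} and~\ref{Multlemmatauzero}, normalization by the elementary pushforwards, substitution into the classifying condition~\eqref{Multclasscond}, and the maximality check, including the special non-disjoint semi-normalization treatment of~$\mathcal F_{{\rm l}x_1}$ in the case $(k_2,r_0)=(0,1)$. You also correctly identify that the constraint $k_3\ne2$ for $(k_2,r_0)\ne(0,0)$ cannot be imposed a priori but emerges from showing in each such case that a two-dimensional extension in~$\tau$ forces the third generator, which is exactly how the paper handles it.
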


\begin{enumerate}\setcounter{enumi}{-1}
\newcommand{\cc}{\item}\itemsep=.5ex

\cc
$V=V(t,x)$:\quad $\mathfrak g^{\rm ess}_V=\mathfrak g^\cap=\big\langle M,\,I\big\rangle$.

\cc
$V=U(x_1,x_2)$:\quad $\mathfrak g^{\rm ess}_V=\big\langle M,\,I,\, D(1)\big\rangle$.

\cc
$V=U(\omega_1,\omega_2)$:\quad $\mathfrak g^{\rm ess}_V=\big\langle M,\, I,\, D(1)+J\big\rangle$.

\cc
$V=|x|^{-2}U(\zeta)$, $\zeta=\phi-2\beta\ln|x|$, $\beta>0$, $U_\zeta\ne0$:\quad
$\mathfrak g^{\rm ess}_V=\big\langle M,\, I,\, D(1),\, D(t)+\beta J\big\rangle.$

\cc
$V=|x|^{-2}U(\phi)$, $U_\phi\ne0$:\quad
$\mathfrak g^{\rm ess}_V=\big\langle M,\, I,\, D(1),\, D(t),\, D(t^2)-tI\big\rangle$.

\cc
$V=U(t,|x|)+(\sigma_t-i\rho_t)\phi$, $\sigma\in \{0,t\}\bmod G^\sim$
and, if $\sigma=0$, $\rho \in\{0,t\}\bmod G^\sim $:\\
$\mathfrak g^{\rm ess}_V=\big\langle M,\, I,\, J+\sigma M+\rho I\big\rangle$.

\cc
$V=U(|x|)+(\alpha_1-i\beta_1)\phi$:\quad $\mathfrak g^{\rm ess}_V=\big\langle M,\, I,\, J+\alpha_1 tM+\beta_1 tI,\, D(1)\big\rangle$.

\cc
$V=|x|^{-2}U,\ U\ne0$:\quad $\mathfrak g^{\rm ess}_V=\big\langle M,\, I,\, J,\, D(1),\, D(t),\, D(t^2)-tI\big\rangle$.

\cc
$V=U(t,x_2)+i\gamma (t)x_1$:\\
$\mathfrak g^{\rm ess}_V=\big\langle M,\,I,\,P(1,0)-\left(\int \gamma(t)\,{\rm d}t\right)I,\,P(t,0)-\left(\int t\gamma(t)\,{\rm d}t\right)I\big\rangle$.

\cc
$V=U(\zeta)+i\beta x_1$, $\zeta=x_2$:\\
$\mathfrak g^{\rm ess}_V=\big\langle M,\, I,\, P(1,0)-\beta tI,\, P(t,0)-\frac{\beta}2 t^2I,\, D(1)\big\rangle$.

\cc
$V=t^{-1}U(\zeta)+i\beta |t|^{-3/2}x_1$, $\zeta=|t|^{-1/2}x_2$:\\
$\mathfrak g^{\rm ess}_V=\big\langle M,\, I,\, P(1,0)-2\beta t|t|^{-3/2} I,\, P(t,0)-2\beta t|t|^{-1/2} I,\, D(t)\big\rangle$.

\cc
$V=(t^2+1)^{-1}U(\zeta)+i\beta (t^2+1)^{-3/2}x_1$, $\zeta=(t^2+1)^{-1/2}x_2$:\\
$\mathfrak g^{\rm ess}_V=\big\langle M,\, I,\, P(1,0)-\beta t(t^2+1)^{-1/2} I,\, P(t,0)+\beta (t^2+1)^{-1/2}I,\, D(t^2+1)-tI\big\rangle$.

\cc
$V=U x_2^{-2}$, $U\ne0$:
$\mathfrak g^{\rm ess}_V=\big\langle M,\, I,\, P(1,0),\, P(t,0),\, D(1),\, D(t),\, D(t^2)-tI\big\rangle$.

\cc
$V=U(t,\omega_2)+\frac14(h^{-1}h_{tt}-1)\omega^2_1+h_th^{-1}\omega_1\omega_2-i\rho_th^{-1}\omega_1$, $h=h(t)\ne0$:\\
$\mathfrak g^{\rm ess}_V=\big\langle M,\, I,\, P(h\cos t,h\sin t)+\rho I\big\rangle$.

\cc
$V=U(\omega_2)+\frac14(\beta-1)\omega_1^2-\beta\omega_1\omega_2-i\alpha\beta\omega_1$, $\beta\ne0$:\\
$\mathfrak g^{\rm ess}_V=\big\langle M,\, I,\, P(e^{\beta t}\cos t,e^{\beta t}\sin t)+\alpha e^{\beta t}I,\, D(1)+J\big\rangle$.

\cc
$V=U(\omega_2)-\frac14\omega_1^2+(\tilde\alpha-i\alpha)\omega_1$:\quad
$\mathfrak g^{\rm ess}_V=\big\langle M,\, I,\, P(\cos t,\sin t)+\tilde\alpha tM+\alpha tI,\, D(1)+J\big\rangle$.

\cc
$V=\frac14h^{ab}(t)x_ax_b+ih^{0a}(t)x_a$, $h^{12}=h^{21} $:\\
$\mathfrak g^{\rm ess}_V=\big\langle M,\, I,\, P(\chi^{p1},\chi^{p2})+\rho^pI,\, p=1,\dots,4\big\rangle,$
where $\{(\chi^{p1}(t),\chi^{p2}(t))\}$ is a fundamental set of solutions
of the system $\chi^a_{tt}=h^{ab}\chi^b$, and $\rho^p =-\int h^{0a}\chi^{pa} {\rm d}t$.

\cc
$V=\frac14\alpha x_1^2+\frac14\beta x_2^2+i\nu_ax_a$, $\alpha\ne\beta $ or $(\nu_1,\nu_2)\ne (0,0)$:\\
$\mathfrak g^{\rm ess}_V=\big\langle M,\, I,\, P(\chi^{11},0)+\rho^1I,\, P(\chi^{21},0)+\rho^2I,\, P(0,\chi^{32})+\rho^3I,\, P(0,\chi^{42})+\rho^4I,\, D(1)\big\rangle$,
where $\{\chi^{11}(t),\chi^{21}(t)\}$ is a fundamental set of solutions of the equation $\chi^1_{tt}=\alpha\chi^1$,
$\rho^p=-\nu_1\int\chi^{p1}{\rm d}t$, $p=1,2$, and
$\{\chi^{32}(t),\chi^{42}(t)\}$ is a fundamental set of solutions of the equation
$\chi^2_{tt}=\beta\chi^2$, $\rho^p=-\nu_2\int\chi^{p2}{\rm d}t$, $p=3,4$.

\cc
$V=\frac14\alpha\omega_1^2+\frac14\beta\omega_2^2+i\nu_a\omega_a$, $\alpha\ne\beta$ or $(\nu_1,\nu_2)\ne(0,0)$:\\
$\mathfrak g^{\rm ess}_V=\big\langle M,\, I,\, P(\theta^{p1}\cos t-\theta^{p2}\sin t,\theta^{p1}\sin t+\theta^{p2}\cos t)+\rho^pI,\, D(1)+J,\, p=1,\dots,4\big\rangle$,
where $(\theta^{p1}(t),\theta^{p2}(t))$ are linearly independent solutions of the system $\theta^1_{tt}-2\theta^2_t=(1+\alpha)\theta^1$, $\theta^2_{tt}+2\theta^2_t=(1+\beta)\theta^2$, and $\rho^p=-\int\chi^{pa}h^{0a}{\rm d}t$, $p=1,\dots,4$.

\cc
$V=0$:\quad $\mathfrak g^{\rm ess}_V= \big\langle M,\, I,\, P(1,0),\, P(t,0),\, P(0,1),\, P(0,t),\, J,\, D(1),\, D(t),\, D(t^2)-tI\big\rangle$.

\end{enumerate}

\begin{remark}
The essential Lie invariance algebras listed in Theorem~\ref{thm:GroupClassifictionOf(1+2)DLinSchEqs}
are really maximal for the corresponding potentials
if these potentials are $G^\sim$-inequivalent to listed potentials with larger essential Lie invariance algebras.
In some cases, we have presented simple necessary and sufficient conditions that provide such inequivalence.
In other cases, these conditions are not so obvious.
For example, in Cases~9--11 the condition of maximal Lie symmetry extension is
($\beta\ne0$ or $(\zeta^2U)_\zeta\ne0$) and ($U_{\zeta\zeta\zeta}\ne0$ or $\mathop{\rm Im}U_{\zeta\zeta}\ne0$),
which excludes the values of~$V$ that are $G^\sim$-equivalent to those in Cases~12 and~16--19.
Case~8 is already more complicated since for it the similar condition
($\gamma\ne0$ or $(x_2^2U)_2\ne0$) and ($U_{222}\ne0$ or $\mathop{\rm Im}U_{22}\ne0$)
should be extended with the condition excluding potentials $G^\sim$-equivalent to those in Cases~9--11.
Analogously, potentials in Cases~13--15 are $G^\sim$-inequivalent to those in Cases~16--19
if and only if $U_{\omega_2\omega_2\omega_2}\ne0$ or $\mathop{\rm Im}U_{\omega_2\omega_2}\ne0$.
This condition is necessary and sufficient for the maximality of Lie symmetry extensions given in Cases~14 and~15,
and for Case~13 it should be extended to guarantee the exclusion of potentials related to Cases~14 and~15.
Schr\"odinger equations related to Case~16 are not similar to the free Schr\"odinger equation presented in Case~19
if and only if the parameter functions~$h$ satisfy at least one of the conditions
$h^{12}\ne h^{21}$, $h^{12}=h^{21}\ne0$, $h^{11}\ne h^{22}$, $h^{01}\ne0$ and $h^{02}\ne0$.
We need additional constraints on these functions to exclude, up to the $G^\sim$-equivalence, potentials of Cases~17 and~18.
\end{remark}

\begin{remark}
In Theorem~\ref{thm:GroupClassifictionOf(1+2)DLinSchEqs}, we in general neglect
possible gauges of constant parameters in~$V$ by discrete and scaling equivalence transformations. 
For example, 
a nonzero parameter among~$\alpha_1$ and~$\beta_1$ in Case~6, 
the parameter~$\beta$ in Case~9
and the parameter~$\nu_1$ in Case~17
can be set to one. 
The parameter~$\beta$ in Cases~10 and~11 and the imaginary part of~$U$ in Cases~7 and~12 can be made nonnegative. 
In Case~18 and, if $\nu_1=\nu_2$, in Case~17, we can also assume $\alpha<\beta$.
\end{remark}

\begin{remark}
More generally, the cases that are listed in Theorem~\ref{thm:GroupClassifictionOf(1+2)DLinSchEqs} are not equivalent to each other. 
At the same time, each of Cases~0--18 is a family of cases of essential Lie symmetry extensions 
within the class~$\mathcal F$ rather than a single case, 
and there is an equivalence within each of these case families 
that is induced by the action of a subgroup of~$G^\sim$. 
For example, see below the consideration of the subclass~$\mathcal F_{{\rm l}x_1}$ 
of (1+2)-dimensional linear Schr\"odinger equations
with potentials of the form~\eqref{EqPotentialsWith2ProportionalChis}, 
in particular the formula~\eqref{EqRelationBetweenPotentialsWith2ProportionalChis}, 
for the equivalence of potentials in Case~8.
Nevertheless, it is usually not convenient or even not feasible to take such an equivalence into account. 
\end{remark}

\begin{proof}
We single out different cases depending on values of the integers~$k_3$, $k_2$ and~$r_0$.
Basis vector fields of a Lie symmetry extension is denoted by
\[
Q^s=D(\tau^s)+\kappa_s J+P(\chi^{s1},\chi^{s2})+\sigma^sM+\rho^sI,
\]
where the range of~$s$ is equal to $\dim\mathfrak g^{\rm ess}_V-2$,
$\kappa_s$ are real constants,
and all the other parameters are real-valued functions of~$t$.

\medskip\par\noindent
$\boldsymbol{k_2=r_0=0}$.
The corresponding Lie symmetry extensions are spanned by~$\{Q^s\}$ with linearly independent~$\tau^s$.
By Lemma~\ref{LinMultgclas}, $\langle\pi^0_*Q^s\rangle$
is an algebra isomorphic to a subalgebra of the algebra ${\rm sl}(2,\mathbb R)$.
Hence the group classification in this case reduces to
the classification of subalgebras of the algebra ${\rm sl}(2,\mathbb R)$.
Varying $k_3$ we single out the following four subcases.

\medskip\par\noindent
$k_3=0.$ This is the general case with no extension,
i.e., $\mathfrak g^{\rm ess}_V=\mathfrak g^\cap$ (Case~0).

\medskip\par\noindent
$k_3=1$.
The algebra $\mathfrak g^{\rm ess}_V$ contains
a single linearly independent vector field $Q^1$ with $\tau^1\ne0$.
Pushing $Q^1$ forward by a transformation from $\pi_*G^\sim$,
we can set $\tau^1=1$, $\kappa_1\in\{0,1\}$ and $\chi^{1a}=\sigma^1=\rho^1=0$,
i.e., the vector field $Q^1$ reduces to the form $Q^1=D(1)+\kappa_1 J$.
Evaluating the classifying condition~\eqref{Multclasscond} at~$Q^1$ with $\kappa_1=0$ and $\kappa_1=1$
gives differential equations in~$V$
whose general solutions are presented in Cases~1 and~2, respectively.

\medskip\par\noindent
$k_3=2.$
The Lie symmetry extension is given by vector fields~$Q^1$ and~$Q^2$ with linearly independent~$\tau^1$ and~$\tau^2$.
Modulo the $\pi_*G^\sim$-equivalence, we can set $\tau^1=1$ and $\tau^2=t$ (cf.\ Lemma~\ref{LinMultgclas})
and, as in the previous case, $\chi^{1a}=\sigma^1=\rho^1=0$.
Then, the condition $[Q^1,Q^2]\in \mathfrak g^{\rm ess}_V$ implies that
\[D(1)+P(\chi^{21}_t,\chi^{22}_t)+\sigma^2_tM+\rho^2_tI=Q^1+\alpha_1M+\beta_1I,\]
where $\alpha_1$ and $\beta_1$ are real constants.
This equality splits into the constraints $\kappa_1=0$, $\chi^{2a}_t=0$, $\sigma^2_t=\alpha_1$, $\rho^2_t=\beta_1$.
Taking these constraints into account,
we push~$Q^1$ and $Q^2$ forward by transformations from $\pi_*G^\sim$
and linearly combine them with $M$ and $I$.
This allows us to set $\chi^{2a}=\sigma^2=\rho^2=0$ and $\beta:=\kappa_2\geqslant0$ when preserving $Q^1$.
The integration of the system in $V$  obtained
by evaluating the classifying condition~\eqref{Multclasscond} at~$Q^1$ and at~$Q^2$
leads to~Case~3.
For the extension to be maximal, we need
$\beta\ne0$ (otherwise we get Case~4 with $k_3=3$) and
$U_\zeta\ne0$ (otherwise we get Case~7).

\medskip\par\noindent
$k_3=3.$
The algebra~$\mathfrak g^{\rm ess}_V$ is spanned by the basis elements of the kernel algebra
and vector fields $Q^q$ with linearly independent coefficients~$\tau^q$, $q=1,2,3$.
In view of Lemma~\ref{LinMultgclas}, we can assume that $\tau^1=1$, $\tau^2=t$ and $\tau^3=t^2$.
The commutation relations of~$\mathfrak g^{\rm ess}_V$ imply $\kappa_q=0$.
Similarly to the previous case, we can also set $\chi^{qa}=\sigma^q=\rho^q=0$, $q=1,2$.
The commutation relations of~$Q^3$ with~$Q^1$ and~$Q^2$ implies $\chi^{3a}=0$,
$\sigma^3_t=\alpha_1$ and $\rho^3_t=\alpha_2$, where $\alpha_1$ and $\alpha_2$ are real constants.
Up to linearly combining with~$M$ and~$I$, the vector field~$Q^3$ reduces to
$Q^3=D(t^2)+\alpha_1t M+\alpha_2tI$.
The classifying condition~\eqref{Multclasscond} evaluated successively at~$Q^1$, $Q^2$ and~$Q^3$ 
gives the system
\[
V_t=0,\quad
tV_t+\frac 12 x_aV_a+V=0,\quad
t^2V_t+tx_aV_a+2tV=\alpha_1-i\alpha_2-i,
\]
whose compatibility with respect to~$V$ implies $\alpha_1=0$ and $\alpha_2=-1$.
Thus, we have~Case~4,
where the extension is maximal if and only if $U_\phi\ne0$
since otherwise it becomes Case~7.

\medskip\par\noindent
$\boldsymbol{k_2=1,r_0=0}.$
The algebra $\mathfrak g^{\rm ess}_V$
contains a vector field $Q^0$ with $\tau^0=0$ and $\kappa_0=1$,
and additional extensions are realized by vector fields $Q^q$ with linearly independent coefficients~$\tau^q$.
The commutation relation for~$Q^0$ and~$Q^q$ implies that $\chi^{qa}=0$.
Acting by $\mathcal P_*(\mathcal X)$ with $\mathcal X^1=\chi^{02}$ and $\mathcal X^2=-\chi^{01}$ on $\mathfrak g^{\rm ess}_V$,
we set $\chi^{0a}=0$, and thus $Q^0=J+\sigma^0 M+\rho^0 I$.
Similarly to the case $k_2=r_0=0$, we partition Lie symmetry extensions into three subcases depending on values of~$k_3$.
Moreover, the simplification of the parameter functions~$\tau^q$, $\sigma^q$ and~$\rho^q$ up to the $\pi_*G^\sim$-equivalence
jointly with linearly combining completely coincides with that in the case $k_2=r_0=0$.

\medskip\par\noindent
$k_3=0$.
If the function $\sigma^0$ (resp.\ $\rho^0$) is constant, then it can be made zero by linearly combining~$Q^0$ with~$M$ (resp.\ $I$).
If one of these functions is not constant, up to the $\pi_*G^\sim$-equivalence it can be set to be equal to $t$.
Then the classifying condition~\eqref{Multclasscond} implies an equation in~$V$
whose general solution is presented in~Case~5.

\medskip\par\noindent
$k_3=1.$
The vector field $Q^1$ is reduced to $Q^1=D(1)$ up to the $\pi_*G^\sim$-equivalence.
As the Lie algebra $\mathfrak g^{\rm ess}_V$ is closed with respect to Lie bracket of vector fields, we obtain
\[
[Q^1,Q^0]=\sigma^0_t M+\rho^0_t I=\alpha_1 M+\beta_1I,
\]
where $\alpha_1$ and $\beta_1$ are real constants.
Equating the components of the vector fields on both the sides of the last equality
and then solving the obtained equations,
we derive $\sigma^0=\alpha_1 t+\alpha_0$ and $\rho^0=\beta_1t+\beta_0$,
where $\alpha_0$ and $\beta_0$ are real constants, which can be set to zero due to linearly combining with~$M$ and~$I$.
The vector field~$Q^0$ is reduced to $Q^0=J+\alpha_1 t M+\beta_1t$.
Successively substituting the components of~$Q^0$ and~$Q^1$
into the classifying condition~\eqref{Multclasscond} yields two independent equations in $V$:
$V_t=0$, $x_1V_2-x_2V_1=\alpha_1-i\beta_1$, whose general solution is presented in Case~6.

\medskip\par\noindent
$k_3\geqslant 2.$
The algebra $\mathfrak g^{\rm ess}_V$ necessarily contains
the vector fields~$M$ and~$I$ from~$\mathfrak g^\cap$
and vector fields $Q^0$ and $Q^q$, $q=1,2$, with linearly independent $\tau^1$ and $\tau^2$.
Up to the $\pi_*G^\sim$-equivalence the vector fields $Q^1$ and $Q^2$
are reduced to the form $Q^1=D(1)$ and $Q^2=D(t)$.
As the commutators $[Q^q,Q^0]$ belong to $ \mathfrak g^{\rm ess}_V$, we have
\begin{gather*}
[Q^1,Q^0]=\sigma^0_tM+\rho^0_tI=\alpha_1M+\beta_1I,\\
[Q^2,Q^0]=t\sigma^0_tM+t\rho^0_tI=\alpha_2M+\beta_2I,
\end{gather*}
where $\alpha_j$ and $\beta_j$, $j=1,2$, are real constants.
The above commutation relations give the system
$\sigma^0_t=\alpha_1$, $t\sigma^0_t=\alpha_2$,
$\rho^0_t=\beta_1$, $t\rho^0_t=\beta_2$,
which implies that $\sigma^0$ and $\rho^0$ are constants.
Therefore, by linearly combining with~$M$ and~$I$,
they can be set to be equal to zero.
We successively evaluate the classifying condition~\eqref{Multclasscond}
at~$Q^0$, $Q^1$ and~$Q^2$ and obtain the system
$V_t=0$, $x_1V_2-x_2V_1=0$, $x_aV_a+2V=0$,
whose general solution is $V=|x|^{-2}U$.
At the same time, any such potential admits the Lie symmetry vector field $Q^3=D(t^2)-tI$.
This means that in fact $k_3=3$.
For the Lie symmetry extension to be maximal, we require $U\ne 0$.
Therefore, we obtain~Case~7.

\medskip\par\noindent
$\boldsymbol{k_2=0,\,r_0=1}$.
We definitely have a vector field $Q^1=P(\chi^{11},\chi^{12})+\sigma^1 M+\rho^1 I$ with a nonzero tuple $(\chi^{11},\chi^{12})$
in $\mathfrak g^{\rm ess}_V$.
An additional Lie symmetry extension may be provided
by one more vector field of the same form
or vector fields $Q^q$, $q=k_1+1,\dots,k_1+k_3$, with linearly independent coefficients~$\tau^q$.
We should suppose separately whether the tuple $(\chi^{11},\chi^{12})$ is either proportional
or not proportional to a constant tuple.

\medskip\par\noindent{\bf 1.}
Let the parameter functions~$\chi^{11}$ and~$\chi^{12}$ be linearly dependent.
Up to the $\pi_*G^\sim$-equivalence, we can set $(\chi^{11},\chi^{12})=(1,0)$, $\sigma^1=0$, 
and thus the vector field~$Q^1$ reduces to $P(1,0)+\rho^1I$.
Lemma~\ref{Multlemmatauzero} implies
that then the algebra $\mathfrak g^{\rm ess}_V$ also contains
the vector field $P(t,0)+\rho^2 I$ with \mbox{$\rho^2=\int t\rho^1_t\,{\rm d}t$}.
Then substituting the components of the vector fields into the classifying condition~\eqref{Multclasscond}
yields two dependent equations for~$V$, $V_1=-i\rho^1_t$ and $tV_1=-it\rho^1_t$,
whose general solution is
\begin{equation}\label{EqPotentialsWith2ProportionalChis}
V=U(t,x_2)+i\gamma(t)x_1,
\end{equation}
where $U$ is a complex-valued smooth function of~$(t,x_2)$
and $\gamma=-\rho^1_t$ is a real-valued smooth function of~$t$.
Thus, any equation from the class~$\mathcal F$ with potential of the form~\eqref{EqPotentialsWith2ProportionalChis}
is invariant with respect to the vector fields
$Q^1_\gamma=P(1,0)-\left(\int\gamma(t)\,{\rm d}t\right)I$ and
$Q^2_\gamma=P(t,0)-\left(\int t\gamma(t)\,{\rm d}t\right)I$.
The function~$U$ satisfies the conditions $U_{222}\ne 0$ or $\mathop{\rm Im}U_{22}\ne 0$
since otherwise $r_0=2$ (see the case $r_0=2$ below).

Consider the subclass~$\mathcal F_{{\rm l}x_1}$ 
of (1+2)-dimensional linear Schr\"odinger equations
with potentials of the form~\eqref{EqPotentialsWith2ProportionalChis}
constrained by the above condition for~$U$.
We can reparameterize the class~$\mathcal F_{{\rm l}x_1}$,
assuming the parameter functions~$U$ and~$\gamma$ as arbitrary elements instead of~$V$.
The equivalence groupoid~$\mathcal G^\sim_{{\rm l}x_1}$  of~$\mathcal F_{{\rm l}x_1}$
can be singled out from the equivalence groupoid~$\mathcal G^\sim$ of the entire class~$\mathcal F$,
which is described in Theorem~\ref{thmeMquivptransf}.
A~point transformation connects two equations~$\mathcal L_V$ and~$\mathcal L_{\tilde V}$
from the class~$\mathcal F_{{\rm l}x_1}$ if and only if
it is of the form~\eqref{Mequgroupoidsub_a}--\eqref{Mequgroupoidsub_b},
where the function~$T$ is linear fractional in $t$,
the matrix~$O$ is diagonal with $O^{11},O^{22}=\pm1$,
and $\mathcal X^1=\nu_1 T+\nu_0$ with arbitrary real constants $\nu_1$ and $\nu_0$.
The corresponding arbitrary-element tuples are related in the following way:
\begin{gather}\label{EqRelationBetweenPotentialsWith2ProportionalChis}
\begin{split}
&\tilde U=\frac{\hat U}{|T_t|}
+\frac{\varepsilon'O^{22}}{2|T_t|^{1/2}}\left(\frac{\mathcal X^2_t}{T_t}\right)_{\!t}x_2
+\frac{\Sigma_t-i\Upsilon_t}{T_t}-\frac{(\mathcal X^2_t)^2+2iT_{tt}}{4T_t{}^2}-\frac{\nu_1{}^{\!2}}4-i\tilde \gamma\mathcal X^1,
\\
&\tilde \gamma=\frac{\varepsilon'O^{11}}{|T_t|^{3/2}}\gamma.
\end{split}
\end{gather}
The equivalence group~$G^\sim_{{\rm l}x_1}$ of the class~$\mathcal F_{{\rm l}x_1}$
consists of the point transformations of the form \eqref{Mequgroupoidsub_a}--\eqref{Mequgroupoidsub_b}
prolonged to the arbitrary elements according to~\eqref{EqRelationBetweenPotentialsWith2ProportionalChis},
where the parameters satisfy, in addition to all the above conditions for elements of~$\mathcal G^\sim_{{\rm l}x_1}$,
the constraint $\Lambda=0$.
The class~$\mathcal F_{{\rm l}x_1}$ is (disjointedly) uniformly semi-normalized with respect to
the entire equivalence group~$G^\sim_{{\rm l}x_1}$ and linear superposition of solutions.
At the same time, it is already known that each equation~$\mathcal L_V$,
where the potential~$V$ is of the form~\eqref{EqPotentialsWith2ProportionalChis},
admits the one-parameter groups of point symmetry transformations 
\begin{gather*}
\mathcal Q^1_\gamma(\nu_0)\colon\ \  
\tilde t=t,\ \  
\tilde x_1=x_1+\nu_0,\ \ 
\tilde x_2=x_2,\ \ 
\tilde\psi=\exp\left(-\nu_0\int\gamma(t)\,{\rm d}t\right)\psi,
\\
\mathcal Q^2_\gamma(\nu_1)\colon\ \  
\tilde t=t,\ \  
\tilde x_1=x_1+\nu_0 t,\ \ 
\tilde x_2=x_2,\ \ 
\tilde\psi=\exp\left(\frac i2\nu_1x_1+\frac i4\nu_1^2t-\nu_1\int t\gamma(t)\,{\rm d}t\right)\psi
\end{gather*}
generated by the vector fields~$Q^1_\gamma$ and~$Q^2_\gamma$.
This is why it is convenient to include these symmetry transformations in the corresponding uniform symmetry group~\smash{$G^{\rm unf}_V$}
and to consider the class~$\mathcal F_{{\rm l}x_1}$
as (non-disjointedly and uniformly) semi-normalized with respect to a proper subgroup~$H$ of~$G^\sim_{{\rm l}x_1}$
and a family of wider uniform symmetry groups than the symmetry groups of linear superposition of solutions.

For the group~\smash{$G^{\rm unf}_V$} to be closed with respect to the transformation composition,
we also need to include the elementary transformations~$\mathcal M(\Sigma)$
(cf.\ Section~\ref{sec:FurtherPropertiesOfLieInvAlgs}) with constant values of~$\Sigma$ in~\smash{$G^{\rm unf}_V$} 
since they are the commutators of transformations~$\mathcal Q^1_\gamma(\nu_0)$ and~$\mathcal Q^2_\gamma(\nu_1)$.
The elementary transformations~$\mathcal I(\Upsilon)$ with constant values of~$\Upsilon$ should be in~\smash{$G^{\rm unf}_V$}
since the expressions for the $\psi$-components of the transformations~$\mathcal Q^1_\gamma(\nu_0)$ and~$\mathcal Q^2_\gamma(\nu_1)$
contain antiderivatives of~$\gamma$ and~$t\gamma$, which are defined up to constant summands, 
and there is no canonical choice of these antiderivatives for general values of~$\gamma$.
Thus, for each equation~$\mathcal L_V$ from the class~$\mathcal F_{{\rm l}x_1}$,
we consider the group~\smash{$G^{\rm unf}_V$} of its point symmetry transformations
of the form~\eqref{Mequgroupoidsub_a}--\eqref{Mequgroupoidsub_b},
where $T=t$, $O$ coincides with the identity matrix, $\mathcal X^1=\nu_1 t+\nu_0$, $\mathcal X^2=0$,
$\Sigma=\frac14\nu_1{}^{\!2}t+\mu_0$, $\Upsilon_t=\gamma\mathcal X^1$,
$\mu_0$, $\nu_1$ and~$\nu_0$ are arbitrary constants,
and $\Lambda=\Lambda(t,x)$ is an arbitrary solution of~$\mathcal L_V$.
The corresponding Lie algebra~$\mathfrak g^{\rm unf}_V$
is spanned by the vector fields~$M$, $I$, $Q^1_\gamma$, $Q^2_\gamma$ and~$Z(\eta^0)$
with~$\eta^0$ running through the solution set of~$\mathcal L_V$.
The suitable subgroup~$H$ of the group~\smash{$G^\sim_{{\rm l}x_1}$} is singled out from this group
by the constraint~$\mathcal X^1=0$,
and the Lie algebra associated with~$H$ is
$\mathfrak h=\langle\hat D(1),\hat D(t),\hat D(t^2),\hat P(0,\chi^2),\hat M(\sigma),\hat I(\rho)\rangle$.

Since all the conditions of Definition~\ref{def:Semi-normClass} are satisfied 
and the subgroups $G^{\rm unf}_V$ with $V$ 
running through the set of potentials of the form~\eqref{EqPotentialsWith2ProportionalChis} 
are known and of the same structure,
the class~$\mathcal F_{{\rm l}x_1}$ is (uniformly but not disjointedly) semi-normalized
with respect to the subgroup~$H$ of~$\mathcal G^\sim_{{\rm l}x_1}$
and the family $\mathcal N=\{G^{\rm unf}_V\}$, where
\smash{$\pi_*H\cap G^{\rm unf}_V=G^\cap_{{\rm l}x_1}=G^\cap\ne\{{\rm id}\}$}.
In view of Proposition~\ref{pro:GroupClassificationOfSemi-normClasses},
the group classification of the class~$\mathcal F_{{\rm l}x_1}$
reduces to the classification of $\pi_*H$-inequivalent appropriate subalgebras
of the algebra \[\pi_*\mathfrak h=\langle D(1),D(t),D(t^2),P(0,\chi^2),\sigma M,\rho I\rangle.\]
Denote $\mathfrak g^{\rm ext}_V:=\mathfrak g^{\rm ess}_V\cap\pi_*\mathfrak h$.
The condition $r_0=1$ and Corollary~\ref{cor:LinMultiDSchEqsKernelAlgebra} imply that
any Lie symmetry extension in the subclass~$\mathcal F_{{\rm l}x_1}$
is spanned by vector fields of the form $D(\tau)+P(0,\chi^2)+\sigma M+\rho I$,
where $\tau$ runs through a set of linearly independent quadratic polynomial in~$t$.
This shows that the computation of inequivalent Lie symmetry extensions reduces to
the classification of subalgebras of the algebra ${\rm sl}(2,\mathbb R)$.
Depending on the values of $k_3$ we obtain the following subcases.

\medskip\par\noindent
$k_3=0$. There is no additional extension, i.e., we have~Case~8.

\medskip\par\noindent
$k_3=1.$
The algebra $\mathfrak g^{\rm ext}_V$ necessarily contains a vector field~$Q^3$ with nonzero $\tau^3$.
Up to the $\pi_*H$-equivalence, the vector field~$Q^3$ takes one of the following values: $D(1)$, $D(t)$ and $D(t^2+1)-tI$.
Taking into account the form~\eqref{EqPotentialsWith2ProportionalChis} of $V$,
we successively evaluate the classifying condition~\eqref{Multclasscond} at each value of~$Q^3$
and obtain equations for~$U$ and~$\gamma$.
The solution of these equations results in Cases~9, 10 and~11, respectively,
where we change the notation of~$U$.
Note that we choose the special form of the last value of~$Q^3$
for simplifying the representation of the corresponding potentials.

\medskip\par\noindent
$k_3\geqslant 2.$
Apart from the vector fields~$M$ and~$I$,
the algebra $\mathfrak g^{\rm ext}_V$ additionally contains
at least vector fields $Q^q$, $q=3,4$,
with linearly independent $\tau^3$ and $\tau^4$.
Up to the $\pi_*H$-equivalence and linearly combining with~$M$ and~$I$,
the vector fields $Q^q$ reduce to $Q^3=D(1)$ and $Q^4=D(t)$.
Evaluating the classifying condition~\eqref{Multclasscond} simultaneously at these values of~$Q^3$ and~$Q^4$
for the expression~\eqref{EqPotentialsWith2ProportionalChis} of~$V$,
we obtain $U_t=0$, $x_2U_2+2U=0$ and $\gamma=0$, and thus $V=\tilde U x_2^{-2}$ with a nonzero complex constant $\tilde U$.
At the same time, the equation~$\mathcal L_V$ with this~$V$ possesses one more Lie symmetry vector field $Q^5=D(t^2)-tI$.
This shows that the Lie symmetry extension with $k_3=2$ is not maximal.
Therefore, we obtain~Case~12.

\medskip\par\noindent {\bf 2.}
Suppose now that the parameter functions~$\chi^{11}$ and~$\chi^{12}$ are linearly independent.

If we have no additional extension,
then the algebra $\mathfrak g^{\rm ess}_V$ is spanned
by the vector fields $M$, $I$ and $Q^1=P(\chi^{11},\chi^{12})+\sigma^1 M+\rho^1 I$.%
\footnote{%
In contrast to the previous case,
where the parameter functions~$\chi^{11}$ and~$\chi^{12}$ are linearly dependent,
in the case under consideration the algebra~$\mathfrak g^{\rm ess}_V$ contains,
up to linear dependence, only one vector field of the form as~$Q^1$.
Indeed, suppose that this algebra contains one more vector field
$Q^2=P(\chi^{21},\chi^{22})+\sigma^2 M+\rho^2 I$,
where the tuple $(\chi^{21},\chi^{22})$ is not linearly dependent with $(\chi^{11},\chi^{12})$.
Then the case condition $r_0=1$ implies that $\chi^{2a}=\lambda\chi^{1a}$
where $\lambda$ is a nonconstant function of $t$.
Successively evaluating the classifying condition~\eqref{Multclasscond} at the vector fields $Q^1$ and $Q^2$,
we derive two equations for~$V$,
for which the difference of the second equation and the first equation multiplied by $\lambda$
leads, in view of the proportionality of the tuples, to the condition
$(\lambda_t\chi^{1a}_t+\frac12\lambda_{tt}\chi^{1a})x_a+\sigma^2_t
-\lambda\sigma^1_t-i(\rho^2_t-\lambda\rho^1_t)=0$.
Splitting it with respect to~$x_a$ and integrating the obtained equations,
we derive $\chi^{1a}=c_a|\lambda_t|^{-1/2}$, where $c_a$ are real constants, 
which contradicts the linear independence of~$\chi^{11}$ and~$\chi^{12}$.
}
Lemma~\ref{Multgclaslema} implies that the vector field~$Q^1$ reduces to
the form $Q^1=P(h \cos t,h \sin t)+\rho I$,
where $h$ and $\rho$ are smooth functions of $t$ with $h\ne 0$.
The substitution of the components of~$Q^1$ into the classifying condition~\eqref{Multclasscond}
yields the equation
\[
 V_1\cos t+V_2\sin t =\frac 12 (h^{-1}h_{tt}-1)\omega_1+h^{-1}h_t\omega_2-ih^{-1}\rho_t,
\]
whose general solution is presented in~Case~13.

Otherwise, the algebra $\mathfrak g^{\rm ess}_V$ also contains
a vector field $Q^2$ with nonzero $\tau^2$, which takes,
up to the $\pi_*G^\sim$-equivalence, the form $Q^2=D(1)+\kappa_2 J$, $\kappa_2\in\{0,1\}$.
The condition $[Q^2,Q^1]\in \mathfrak g^{\rm ess}_V$ means that
\[P(\chi^{11}_t,\chi^{12}_t)+\kappa_2P(\chi^{12},-\chi^{11})+\sigma^1_tM+\rho^1_t=\beta_1 Q^1+\beta_2M+\beta_3I,
\]
where $\beta_j,\ j=1,2,3$ are real constants.
Equating the corresponding components of the vector fields on both the sides, we derive the system
\begin{gather}\label{non-proportional tuple}
\chi^{11}_t+\kappa_2\chi^{12}=\beta_1\chi^{11},\quad \chi^{12}_t-\kappa_2\chi^{11}=\beta_1\chi^{12},\quad
\sigma^1_t=\beta_1\sigma^1+\beta_2,\quad\rho^1_t=\beta_1\rho^1+\beta_3.
\end{gather}
If $\kappa_2=0$, then the parameter functions~$\chi^{11}$ and~$\chi^{12}$ are linearly dependent.
Hence $\kappa_2=1$, i.e., $Q^2=D(1)+ J$.
Integrating the systems~\eqref{non-proportional tuple} and linearly combining~$Q^1$ with elements from the kernel,
we obtain a reduced form of $Q^1$ depending on the value of $\beta_1$.

If $\beta:=\beta_1\ne 0$, then
$Q^1=P(e^{\beta t}\cos t,e^{\beta t}\sin t)+\tilde\alpha e^{\beta t}M+\alpha e^{\beta t}I$
with real constants $\tilde\alpha$ and $\alpha$.
Pushing vector fields from $\mathfrak g^{\rm ess}_V$ forward
by $\mathcal P_*(-\tilde\alpha\sin t,\tilde\alpha\cos t)$, we set $\tilde\alpha=0$.
The successive substitution of the components of the vector fields~$Q^1$ and $Q^2$
into the classifying condition~\eqref{Multclasscond} gives the system
\[
V_1\cos t+V_2\sin t =\frac12(\beta^2-1)\omega_1+\beta\omega_2-i\beta\alpha,\quad
V_t+x_1V_2-x_2V_1=0
\]
whose general solution is presented in Case~14.

If $\beta_1=0$, then the solution of the system~\eqref{non-proportional tuple} provides,
after linearly combining with $M$ and~$I$,
$Q^1=P(\cos t,\sin t)+\beta_2tM +\beta_3tI$. In this case no further simplifications are possible.
For convenience, we re-denote $\beta_2:=\tilde\alpha$ and $\beta_3:=\alpha$.
Evaluating the classifying condition~\eqref{Multclasscond} at~$Q^1$ and at~$Q^2$ yields two equations for~$V$,
\[V_t+x_2V_1-x_1V_2=0,\quad V_1\cos t+V_2\sin t =-\frac12\omega_1+\tilde\alpha-i\alpha.\]
Therefore, we obtain~Case~15.

\medskip\par\noindent {\bf Further classification}. 
In view of Lemma~\ref{lem:k2r0}, 
the values of $(k_2,r_0)$ not considered so far are $(0,2)$ and $(1,2)$.

Since $r_0=2$, the algebra $\mathfrak g^{\rm ess}_V$ contains at least two vector fields 
of the form~$Q^a=P(\chi^{a1},\chi^{a2})+\sigma^a M+\rho^a I$, $a=1,2$,
where $\chi^{11}\chi^{22}-\chi^{12}\chi^{21}\ne 0$.
The substitution of the components of the vector fields $Q^a$
into the classifying condition~\eqref{Multclasscond} leads to the following system in $V$:
\[
\chi^{ab}V_b=\frac12 \chi^{ab}_{tt}x_b+\sigma^a_t-i\rho^a_t.
\]
This system can be written as
$V_a=\frac 12h^{ab}(t)x_b+\tilde h^{0a}(t)+ih^{0a}(t)$,
where the coefficients~$h^{ab}$, $\tilde h^{0a}$ and~$h^{0a}$ are real-valued functions of $t$ 
that satisfy the conditions
\begin{gather*}
\chi^{ab}_{tt}=\chi^{ac}h^{cb},\quad \sigma^a_t=\chi^{ac}\tilde h^{0c},\quad \rho^a_t=-\chi^{ac} h^{0c}.
\end{gather*}
The condition $V_{12}=V_{21}$ implies that the matrix $(h^{ab})$ is symmetric.
Thus, the potential~$V$ is a quadratic polynomial in $x_1$ and $x_2$ with coefficients depending on $t$,
\begin{gather}\label{QuadraticPotential}
V=\frac 14h^{ab}(t)x_ax_b+\tilde h^{0b}(t)x_b+ih^{0b}(t)x_b+\tilde h^{00}(t)+ih^{00}(t),
\end{gather}
where the functions $\tilde h^{0b}$, $\tilde h^{00}$ and $ h^{00}$ can be set equal to zero up to the $G^\sim$-equivalence 
but we will use this possibility later.

Next consider the subclass~$\mathcal F_{\rm q}$ of equations from the class~$\mathcal F$
with potentials of the form~\eqref{QuadraticPotential}.
Theorem~\ref{thmeMquivptransf} implies that, similarly to the entire class $\mathcal F$,
the subclass~$\mathcal F_{\rm q}$ is uniformly semi-normalized with respect to the linear superposition of solutions,
and its equivalence group coincides with the equivalence group $G^\sim$ of $\mathcal F$.
The further group classification of the subclass $\mathcal F_{\rm q}$ splits into cases 
depending on the value of~$k_2$.
The conditions $k_2=0$ and $k_2=1$ partition the subclass $\mathcal F_{\rm q}$
into the subclasses $\mathcal F^0_{\rm q}$ and $\mathcal F^1_{\rm q}$, respectively.
These subclasses are uniformly semi-normalized with respect to the linear superposition of solutions as well, 
and their equivalence groups also coincide with the equivalence group $G^\sim$ of the entire class $\mathcal F$.
The classifying condition~\eqref{Multclasscond} implies
that the subclass~$\mathcal F^1_{\rm q}$ is singled out from the class~$\mathcal F_{\rm q}$
by the conditions $h^{12}=h^{21}=0$, $h^{11}=h^{22}$, $h^{01}=0$ and $h^{02}=0$,
and the subclass~$\mathcal F^0_{\rm q}$ is then singled out by requiring that one of these conditions fails to hold.
In view of Corollary~\ref{free Schrodingerequation} the above condition means
that each equation in the subclass~$\mathcal F^1_{\rm q}$ is
$G^\sim$-equivalent to the (1+2)-dimensional free Schr\"odinger equation (with $V=0$).
In other words, the subclass~$\mathcal F^1_{\rm q}$ is the $G^\sim$-orbit of the free Schr\"odinger equation,
which gives Case~19.

Therefore, the last consideration concerns the subclass $\mathcal F^0_{\rm q}$,
for which $k_2=0$ and $r_0=2$.
Up to the $G^\sim$-equivalence, the simplified form of potentials for equations from the subclass $\mathcal F^0_{\rm q}$ is
\begin{gather}\label{QuadraticSimplifiedPotential}
V=\frac 14h^{ab}(t)x_ax_b+i h^{0b}(t)x_b.
\end{gather}
Substituting $V$ of this form
into the classifying condition~\eqref{Multclasscond} with $\tau=0$ and $\kappa=0$,
splitting with respect to $x_a$ and separating real and imaginary parts,
we derive the following system for $(\chi^1,\chi^2,\sigma,\rho)$:
\begin{gather}\label{MlinSchSimirltycl}
\chi^a_{tt}=h^{ab}\chi^b,\quad \sigma_t=0,\quad \rho_t=- h^{0b}\chi^b.
\end{gather}
This system admits a fundamental set of solutions $(\chi^{p1},\chi^{p2},0,\rho^p)$, $p=1,\dots,4$, $(0,0,1,0)$ and $(0,0,0,1)$,
where the tuples $(\chi^{p1},\chi^{p2})$ are linearly independent, and $\rho^p=-\int h^{0b}\chi^{pb}{\rm d}t$.
Thus, the algebra $\mathfrak g^{\rm ess}_V$ contains the four vector fields $Q^p=P(\chi^{p1},\chi^{p2})+\rho^p I$.
The last two solutions correspond to the vector fields $M$ and $I$ from the kernel algebra~$\mathfrak g^{\cap}$.
Case~16 presents general equations from the subclass $\mathcal F^0_{\rm q}$ with no
additional Lie symmetry extension.

Otherwise, the algebra $\mathfrak g^{\rm ess}_V$ contains a vector field $Q^5$ with nonzero $\tau^5$.
Up to the $G^\sim$-equivalence, the parameter function~$\tau^5$ is reduced to~1, $\kappa_5\in\{0,1\}$,
and the simplified form~\eqref{QuadraticSimplifiedPotential} of~$V$ is preserved.
We substitute the components of $Q^5$ jointly with the form~\eqref{QuadraticSimplifiedPotential} of~$V$
into the classifying condition~\eqref{Multclasscond} and split with respect to different powers of~$x_a$.
As a result, we obtain that the tuple $(\chi^{51},\chi^{52},\sigma^5,\rho^5)$ satisfies
the system~\eqref{MlinSchSimirltycl},
and the parameter functions~$h^{ab}$ and~$h^{0b}$ are constrained by the equations
\begin{equation}\label{equationskappa}
\begin{split}
& h^{11}_t+2\kappa_5 h^{12}=0,\quad h^{12}_t+\kappa_5 (h^{22}-h^{11})=0,\quad h^{22}_t-2\kappa_5 h^{12}=0,\\
& h^{01}_t+\kappa_5 h^{02}=0,\quad h^{02}_t-\kappa_5 h^{01}=0.
\end{split}
\end{equation}
Therefore, up to linearly combining $Q^5$ with~$Q^p$, $M$ and~$I$,
we can set $\chi^{5a}=\sigma^5=\rho^5=0$. This reduces $Q^5$ to $D(1)+\kappa_5J$. 
Consider the cases $\kappa_5=0$ and $\kappa_5=1$ separately. 

For $\kappa_5=0$, the system~\eqref{equationskappa} implies that all $h^{ab}$ and $h^{0a}$ are constants.
Up to rotations, we can reduce the matrix $(h^{ab})$ to a diagonal matrix
${\rm diag}(\alpha,\beta)$. The maximality of the Lie symmetry extension requires
$\alpha \ne \beta$ or $(\nu_1,\nu_2)\ne (0,0)$ with $\nu_a:=h^{0a}$, which gives~Case~17.

If $\kappa_5=1$, then, up to translations of time, the general solution of the system~\eqref{equationskappa} is
$h^{11}=\alpha \cos^2 t+\beta\sin^2 t$, $h^{12}=h^{21}=(\alpha-\beta) \cos t \sin t$,
$h^{22}=\alpha\sin^2 t+\beta\cos^2 t$, $h^{01}=\nu_1\cos t-\nu_2\sin t$ and
$h^{02}=\nu_1\sin t+\nu_2\cos t$, where $\alpha$, $\beta$, $\nu_1$ and $\nu_2$ are arbitrary real constants.
The Lie symmetry extension is maximal if these constants satisfy
the inequalities $\alpha \ne \beta$ or $(\nu_1,\nu_2)\ne (0,0)$.
Rearranging the potential $V$ in terms of $\omega_a$ leads to~Case~18. 
\looseness=-1

Let us show that the dimension of additional Lie symmetry extension cannot exceed one.
Suppose that this is not the case.
Then the algebra~$\mathfrak g^{\rm ess}_V$ contains a two-dimensional subalgebra
spanned by vector fields~$Q^5$ and~$Q^6$ with linearly independent~$\tau^5$ and~$\tau^6$.
Up to the $\pi_*G^\sim$-equivalence, we can assume that $\tau^5=1$ and $\tau^6=t$.
In a similar way as above, the vector field~$Q^5$ takes the form $Q^5=D(1)+\kappa_5J$.
The commutation relation for~$Q^5$ and~$Q^6$ implies that $\kappa_5=0$.
Successively evaluating the classifying condition~\eqref{Multclasscond} at~$Q^5$ and at~$Q^6$
for the form~\eqref{QuadraticSimplifiedPotential} of~$V$
and splitting with respect to different powers of $x_a$ provides
the equations $h^{ab}=h^{0b}=0$, which contradicts the condition
singling out the subclass~$\mathcal F^0_{\rm q}$ from the class~$\mathcal F_{\rm q}$.
\end{proof}

\section{Subclass with real-valued potentials}\label{sec:Real-ValuedPotentials}

Consider the subclass~$\mathcal F_{\mathbb R}$ of the class~$\mathcal F$, 
which consists of the equations of the form~\eqref{MLinSchEqs} with real-valued potentials~$V$, 
i.e., it is singled out from the class~$\mathcal F$ by the constraint $\mathop{\rm Im}V:=-\frac12i(V-V^*)=0$.
Since the class~$\mathcal F$ is uniformly semi-normalized with respect to the linear superposition of solutions, 
all results on symmetry analysis of its subclass~$\mathcal F_{\mathbb R}$ 
can be easily derived from the corresponding results for this entire class 
via singling out objects that are consistent with the constraint $\mathop{\rm Im}V=0$.
In particular, using Theorem~\ref{thmeMquivptransf}, we construct 
the equivalence groupoid~$\mathcal G^\sim_{\mathbb R}$ of the class~$\mathcal F_{\mathbb R}$ 
as the maximal subgroupoid of~$\mathcal G^\sim$  
with the set of objects $\mathcal S_{\mathbb R}\subset\mathcal S$ 
consisting of the real-valued potentials~$V$.

\begin{corollary}\label{thm:EquivGroupoidOfFR}
The equivalence groupoid~$\mathcal G^\sim_{\mathbb R}$ of the class~$\mathcal F_{\mathbb R}$ 
consists of the triples of the form $(V,\Phi,\tilde V)$,
where $\Phi$ is a point transformation in the space of variables,
whose components are of the form~\eqref{Mequgroupoidsub_a}--\eqref{Mequgroupoidsub_b}
and the target potential~$\tilde V$ is expressed via the source potential~$V$ 
as in~\eqref{Mequgroupoidsub_c},  
where in addition we omit the hat accent $\hat{\ }$ over~$V$ and
\[
\Upsilon_t=-\frac{nT_{tt}}{4T_t}.
\]
\end{corollary}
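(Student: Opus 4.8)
The plan is to deduce this corollary directly from Theorem~\ref{thmeMquivptransf} via the restriction indicated just above its statement: $\mathcal G^\sim_{\mathbb R}$ is the maximal subgroupoid of~$\mathcal G^\sim$ whose source and target objects both lie in~$\mathcal S_{\mathbb R}$. Thus a triple $(V,\Phi,\tilde V)\in\mathcal G^\sim$ belongs to~$\mathcal G^\sim_{\mathbb R}$ if and only if both~$V$ and~$\tilde V$ are real-valued. Since the transformational component~$\Phi$ and the relation between~$V$ and~$\tilde V$ are already pinned down by~\eqref{Mequgroupoidsub_a}--\eqref{Mequgroupoidsub_c}, the only task is to translate the reality requirement on~$\tilde V$, under the assumption that~$V$ is real, into a constraint on the transformation parameters.

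First I would observe that once~$V$ is real-valued we have $V^*=V$, so the sign-dependent accent collapses, $\hat V=V$, regardless of $\sgn T_t$; this accounts for the omission of the hat over~$V$ in the statement. Next I would inspect the right-hand side of~\eqref{Mequgroupoidsub_c} term by term. The parameters $T$, $\mathcal X^a$, $\Sigma$, $\Upsilon$ and the entries~$O^{ab}$ are real-valued and $\varepsilon'=\sgn T_t$, so the first term~$V/|T_t|$, the $|x|^2$-term, the $x_a$-term, the summand $\Sigma_t/T_t$ and the $\mathcal X^a_t\mathcal X^a_t$-contribution are all real. The only purely imaginary contributions are~$-i\Upsilon_t/T_t$ from the fourth term and~$-inT_{tt}/(4T_t^2)$ from the last one.

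Consequently, extracting the imaginary part gives
\[
\mathop{\rm Im}\tilde V=-\frac{\Upsilon_t}{T_t}-\frac{nT_{tt}}{4T_t{}^2},
\]
and the reality condition $\mathop{\rm Im}\tilde V=0$ is equivalent to $\Upsilon_t=-nT_{tt}/(4T_t)$, which is exactly the additional constraint claimed. I would then note that the argument is reversible: under this constraint on~$\Upsilon_t$ together with the reality of~$V$, every summand in~\eqref{Mequgroupoidsub_c} is real, so~$\tilde V$ is real-valued and the triple indeed lies in~$\mathcal G^\sim_{\mathbb R}$. The solution parameter~$\Lambda$ in~\eqref{Mequgroupoidsub_b} remains an arbitrary solution of the (now real-potential) source equation and is not further constrained.

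There is no serious obstacle here; the entire content is the observation that the reality of the source potential neutralizes the complex conjugation encoded in~$\hat V$, after which the imaginary part of~\eqref{Mequgroupoidsub_c} reduces to the two displayed terms, whose cancellation yields a single first-order relation for~$\Upsilon$. The only point requiring attention is the bookkeeping of which summands in~\eqref{Mequgroupoidsub_c} are manifestly real, which is routine.
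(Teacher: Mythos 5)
Your proposal is correct and follows exactly the derivation the paper intends: the corollary is obtained by restricting the groupoid of Theorem~\ref{thmeMquivptransf} to real-valued source and target potentials, noting that $\hat V=V$ collapses, and equating the imaginary part of~\eqref{Mequgroupoidsub_c}, namely $-\Upsilon_t/T_t-nT_{tt}/(4T_t{}^2)$, to zero. Your bookkeeping of which summands are real is accurate, and the reversibility remark completes the characterization, so nothing is missing.
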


\begin{corollary}
A (1+n)-dimensional linear Schr\"odinger equation of the form~\eqref{MLinSchEqs} 
with a real-valued potential~$V$
is reduced to the free linear Schr\"odinger equation by a point transformation if and only if
\[V=\varrho(t)|x|^2+\varrho^a(t)x_a+\varrho^0(t)\]
for some real-valued smooth functions $\varrho$, $\varrho^a$ and $\varrho^0$ of $t$.
\end{corollary}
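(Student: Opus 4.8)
The plan is to deduce this corollary as a direct specialization of Corollary~\ref{free Schrodingerequation}, which already characterizes reducibility to the free equation for the full complex class~$\mathcal F$. Recall that by that corollary, an equation~$\mathcal L_V$ is reduced to the free Schr\"odinger equation by a point transformation if and only if $V=\varrho(t)|x|^2+\varrho^a(t)x_a+\varrho^0(t)+i\tilde\varrho^0(t)$ for some real-valued smooth functions $\varrho$, $\varrho^a$, $\varrho^0$ and~$\tilde\varrho^0$ of~$t$. The crucial structural observation to exploit is that the imaginary part of such a~$V$ is $\mathop{\rm Im}V=\tilde\varrho^0(t)$, a function of~$t$ alone.

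First I would treat the necessity direction. Suppose $V$ is real-valued and $\mathcal L_V$ is reducible to the free equation by a point transformation. Then Corollary~\ref{free Schrodingerequation} forces $V=\varrho(t)|x|^2+\varrho^a(t)x_a+\varrho^0(t)+i\tilde\varrho^0(t)$. Imposing the reality constraint $\mathop{\rm Im}V=0$ yields $\tilde\varrho^0\equiv0$, whence $V=\varrho(t)|x|^2+\varrho^a(t)x_a+\varrho^0(t)$, as claimed. Conversely, a real-valued $V$ of this form is the special case $\tilde\varrho^0=0$ of the complex form, so Corollary~\ref{free Schrodingerequation} immediately provides the required reducing point transformation.

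To make the argument fully consistent with the real subclass~$\mathcal F_{\mathbb R}$, I would additionally note that the reducing transformation may be taken within~$\mathcal G^\sim_{\mathbb R}$. Indeed, the free equation has real (zero) potential, so any admissible transformation $(V,\Phi,0)\in\mathcal G^\sim$ connecting~$\mathcal L_V$ to it has both its source and target objects in~$\mathcal S_{\mathbb R}$ and hence belongs to the maximal subgroupoid~$\mathcal G^\sim_{\mathbb R}$ over real-valued potentials, described in Corollary~\ref{thm:EquivGroupoidOfFR}. Alternatively, one obtains the same conclusion intrinsically from Corollary~\ref{thm:EquivGroupoidOfFR} by setting $\tilde V=0$ in the real version of~\eqref{Mequgroupoidsub_c} (with $\hat V$ replaced by~$V$, which is legitimate since $\hat V=V$ for real~$V$, and with $\Upsilon_t=-nT_{tt}/(4T_t)$) and reading off that the resulting expression for~$V$ is precisely a real quadratic polynomial in~$x$ with $t$-dependent coefficients.

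There is essentially no serious obstacle here, since the statement is a clean reality-reduction of an already-proved criterion. The only point requiring a moment's care is confirming that the imaginary part of the admissible complex form depends on~$t$ alone, so that the single scalar constraint $\mathop{\rm Im}V=0$ suffices to eliminate it rather than over-constraining the spatial coefficients~$\varrho$ and~$\varrho^a$; this is exactly what the form furnished by Corollary~\ref{free Schrodingerequation} guarantees.
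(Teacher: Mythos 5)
Your proposal is correct and follows essentially the same route as the paper, which derives all results for the subclass~$\mathcal F_{\mathbb R}$ by specializing the corresponding results for~$\mathcal F$ under the constraint $\mathop{\rm Im}V=0$; here that amounts to taking Corollary~\ref{free Schrodingerequation} and observing that the imaginary part of the admissible complex form is the purely time-dependent summand $\tilde\varrho^0(t)$, so reality simply forces $\tilde\varrho^0=0$ without disturbing the spatial coefficients. Your additional remark that the reducing transformation automatically lies in~$\mathcal G^\sim_{\mathbb R}$ (since both source and target potentials are real) is a correct and welcome, though not strictly required, refinement.
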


\begin{corollary}
The (usual) equivalence group~$G^\sim_{\mathbb R}$ of the class~$\mathcal F_{\mathbb R}$ consists 
of point transformations in the space with the coordinates~$(t,x,\psi,\psi^*,V)$
whose components are of the form~\eqref{Mequgroupoidsub}, 
where in addition we omit the hat accent $\hat{\ }$ over~$V$ and
\[
\Upsilon_t=-\frac{nT_{tt}}{4T_t},\quad \Lambda=0.
\]
\end{corollary}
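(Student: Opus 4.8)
The plan is to read the statement off from Corollary~\ref{thm:EquivGroupoidOfFR}, which already supplies the equivalence groupoid~$\mathcal G^\sim_{\mathbb R}$ of~$\mathcal F_{\mathbb R}$, by performing the same passage from an equivalence groupoid to its usual equivalence group that produced~$G^\sim$ from~$\mathcal G^\sim$ for the whole class~$\mathcal F$. Recall that the defining projectability requirement for the usual equivalence group forces the transformational part acting on $(t,x,\psi,\psi^*)$ to be a single fixed map, independent of the arbitrary element~$V$. Thus I would single out precisely those admissible transformations in~$\mathcal G^\sim_{\mathbb R}$ whose $(t,x,\psi,\psi^*)$-components do not vary with the source potential.

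Inspecting the components~\eqref{Mequgroupoidsub_a}--\eqref{Mequgroupoidsub_b} as described in Corollary~\ref{thm:EquivGroupoidOfFR}, the only ingredient depending on~$V$ is the summand~$\hat\Lambda$ in~\eqref{Mequgroupoidsub_b}, where $\Lambda$ ranges over the solution set of the source equation~$\mathcal L_V$. For the transformation to generate an admissible transformation $(V,\cdot,\tilde V)$ for every real-valued~$V$ while being $V$-independent, the function~$\Lambda$ would have to solve~$\mathcal L_V$ for all admissible~$V$ simultaneously; since the only function annihilated by all these operators at once is the zero function, this forces $\Lambda=0$. All remaining parameters~$T$, $\mathcal X$, $\Sigma$, $\Upsilon$ and~$O$, together with the potential transformation law~\eqref{Mequgroupoidsub_c}, are already $V$-independent and extend to a fixed point transformation of the space with coordinates $(t,x,\psi,\psi^*,V)$. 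The reality of~$V$ lets me discard~$V^*$ and drop the hat accent over~$V$ (so that $\hat V=V$ for either sign of~$T_t$), while the constraint $\Upsilon_t=-nT_{tt}/(4T_t)$ is inherited verbatim from Corollary~\ref{thm:EquivGroupoidOfFR}, where it was imposed exactly to keep~$\tilde V$ real. This yields the asserted description of~$G^\sim_{\mathbb R}$.

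It then remains to confirm that this set is a group, which I would obtain without recomputation by identifying $G^\sim_{\mathbb R}$ with the stabilizer of the subclass~$\mathcal F_{\mathbb R}$ inside the equivalence group~$G^\sim$ of~$\mathcal F$: it consists exactly of those elements of~$G^\sim$ (transformations of the form~\eqref{Mequgroupoidsub} with $\Lambda=0$) that send real-valued potentials to real-valued potentials, and by inspection of the imaginary part of~\eqref{Mequgroupoidsub_c} this stabilizing condition reduces to $\Upsilon_t=-nT_{tt}/(4T_t)$. As a stabilizer subgroup it is automatically closed under composition and inversion. The main, and only mildly delicate, point is precisely this closure: a direct argument would have to check that the reality condition on~$\Upsilon_t$ survives the partial multiplication of~$\mathcal G^\sim$, and it is the stabilizer interpretation that lets me bypass an explicit verification of how~$T$ composes and how~$\Upsilon$ transforms under composition.
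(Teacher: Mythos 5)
Your proposal is correct and follows essentially the same route the paper intends: the corollary is read off from the equivalence groupoid of~$\mathcal F_{\mathbb R}$ by retaining only the admissible transformations generated by $V$-independent point transformations, which forces $\Lambda=0$ (a nonzero $\Lambda$ cannot solve $\mathcal L_V$ for two distinct real potentials, since subtracting the two equations gives $(V_1-V_2)\Lambda=0$), while the constraint $\Upsilon_t=-nT_{tt}/(4T_t)$ is exactly the reality condition on~$\tilde V$ already built into the groupoid. Your stabilizer interpretation of~$G^\sim_{\mathbb R}$ inside~$G^\sim$ is a clean way to dispose of closure under composition and inversion and is consistent with the paper's strategy of singling out all transformational objects of~$\mathcal F$ compatible with $\mathop{\rm Im}V=0$.
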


\begin{corollary}
The class~$\mathcal F_{\mathbb R}$ is uniformly semi-normalized with respect to the linear superposition of solutions.
\end{corollary}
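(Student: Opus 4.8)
The plan is to verify, for the subclass~$\mathcal F_{\mathbb R}$, the two sufficient conditions for uniform semi-normalization with respect to the linear superposition of solutions that are singled out in the second part of Section~\ref{sec:UniformSemi-norm}: namely, that the transformational part of every admissible transformation in the class is of the affine form~\eqref{eq:EquivTransOfLinSystems}, and that every admissible transformation whose superposition summand~$\Lambda$ vanishes is generated by an element of the equivalence group~$G^\sim_{\mathbb R}$. Since each equation~$\mathcal L_V\in\mathcal F_{\mathbb R}$ is homogeneous linear (hence locally solvable), and its complete equivalence groupoid~$\mathcal G^\sim_{\mathbb R}$ and equivalence group~$G^\sim_{\mathbb R}$ have already been computed in the corollaries preceding this statement, the verification reduces to reading off these two properties from the explicit descriptions, exactly as the analogous statement for~$\mathcal F$ was obtained in Corollary~\ref{col:UniformSemi-normalizationOfClassF}.

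First I would observe that, by Corollary~\ref{thm:EquivGroupoidOfFR}, every element of~$\mathcal G^\sim_{\mathbb R}$ has transformational part of the form~\eqref{Mequgroupoidsub_a}--\eqref{Mequgroupoidsub_b}, in which the $\psi$-component is manifestly affine in~$\psi$ (equivalently in~$\hat\psi$), the inhomogeneous summand being the fixed multiplier times~$\hat\Lambda$ with $\Lambda$ running through the solution set of the source equation. This is precisely the shape~\eqref{eq:EquivTransOfLinSystems}, so the first condition holds and the family of linear-superposition symmetry groups $N^{\rm lin}=\{G^{\rm lin}_V\}$ is well defined. For the second condition I would set $\Lambda=0$ in~\eqref{Mequgroupoidsub_a}--\eqref{Mequgroupoidsub_b} and compare the resulting transformations with the description of~$G^\sim_{\mathbb R}$. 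The key point is that for an admissible transformation connecting two \emph{real} potentials the reality of~$\tilde V$ forces the imaginary part of the right-hand side of~\eqref{Mequgroupoidsub_c} to vanish; as the only imaginary contributions there are $-i\Upsilon_t/T_t$ and $-inT_{tt}/(4T_t{}^2)$, this is equivalent to the constraint $\Upsilon_t=-nT_{tt}/(4T_t)$, which is exactly the constraint in the definition of~$G^\sim_{\mathbb R}$. Hence the subset of~$\mathcal G^\sim_{\mathbb R}$ cut out by $\Lambda=0$ coincides with the action groupoid of~$G^\sim_{\mathbb R}$, with its transformations trivially prolonged to the arbitrary element~$V$.

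Putting these together, the hypotheses in Section~\ref{sec:UniformSemi-norm} are met with $H_0=G^\sim_{\mathbb R}$ and $N^{\rm lin}$ the family of linear-superposition symmetry groups, whence $\mathcal F_{\mathbb R}$ is disjointedly uniformly $(H_0,N^{\rm lin})$-semi-normalized in the sense of Definition~\ref{def:DisjointedlySemi-normClass}, that is, uniformly semi-normalized with respect to the linear superposition of solutions. I expect the only genuine subtlety — and thus the main obstacle — to be the bookkeeping in the previous paragraph: one must be sure that reality of~$\tilde V$ imposes \emph{exactly} the $\Upsilon_t$ constraint of~$G^\sim_{\mathbb R}$ and no further restriction on the remaining parameters~$T$, $O$, $\mathcal X^a$ and~$\Sigma$, so that no admissible transformation with $\Lambda=0$ escapes the equivalence group. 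Since distinct equations in~$\mathcal F_{\mathbb R}$ share no nonzero common solution, this is the regular situation in which $H_0$ may be taken to be the full group~$G^\sim_{\mathbb R}$, and no passage to a proper subgroup is needed.
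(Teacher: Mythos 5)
Your proposal is correct and follows essentially the same route as the paper: the corollary is obtained by restricting the description of the equivalence groupoid to real-valued potentials, observing that reality of the target potential in~\eqref{Mequgroupoidsub_c} imposes exactly the constraint $\Upsilon_t=-nT_{tt}/(4T_t)$ appearing in the definition of~$G^\sim_{\mathbb R}$, and then invoking the two sufficient conditions of Section~\ref{sec:UniformSemi-norm}. Your explicit verification that no further constraint on $T$, $O$, $\mathcal X^a$, $\Sigma$ arises is the only nontrivial check, and it is carried out correctly.
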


\begin{remark}
The identity component of~$G^\sim_{\mathbb R}$ is constituted by the transformations from this group,
where in addition $\det O=1$ and $T_t>0$, i.e., $\varepsilon'=1$.
Similarly to~$G^\sim$, the entire equivalence group~$G^\sim_{\mathbb R}$ is generated
by the transformations from its identity component and two discrete transformations,
the space reflection
$\tilde t=t,$ $\tilde x_a=-x_a,$ $\tilde x_b=x_b,$ $b\ne a$, $\tilde \psi=\psi,$ $\tilde V=V$ for a fixed~$a$
and the Wigner time reflection
$\tilde t=-t,$ $\tilde x=x,$ $\tilde \psi=\psi^*$, $\tilde V=V$.
\end{remark}

\begin{corollary}
In the notation of Theorem~\ref{thm:EquivAlgOfF}, 
where we omit the $V^*$-components in all vector fields, 
the equivalence algebra of the class~$\mathcal F_{\mathbb R}$ is the algebra
\[
\mathfrak g^\sim_{\mathbb R}=\left\langle
\hat D(\tau)-\frac n4\hat I(\tau_t),\,\hat J_{ab},\,a<b,\,\hat P(\chi),\,\hat M(\sigma),\,\hat I(1)\right\rangle.
\]
\end{corollary}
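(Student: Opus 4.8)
The plan is to identify $\mathfrak g^\sim_{\mathbb R}$ with the subalgebra of $\mathfrak g^\sim$ consisting of those vector fields tangent to the reality locus $\mathop{\rm Im}V=0$, which is precisely the Lie algebra of the equivalence group $G^\sim_{\mathbb R}$ described in the preceding corollary. A vector field $Q\in\mathfrak g^\sim$ generates a one-parameter subgroup of $G^\sim_{\mathbb R}$ exactly when its flow preserves the constraint $V=V^*$; since the $V^*$-component of every element of $\mathfrak g^\sim$ is the complex conjugate of its $V$-component, this amounts to requiring that the $V$-component of $Q$ be real-valued once we set $V=V^*$. Thus the whole computation reduces to extracting the imaginary part of the $\p_V$-coefficient of a general element of $\mathfrak g^\sim$ and forcing it to vanish on the real locus.

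First I would write a generic element as $Q=\hat D(\tau)+\sum_{a<b}\kappa_{ab}\hat J_{ab}+\hat P(\chi)+\hat M(\sigma)+\hat I(\rho)$ and read off its $\p_V$-coefficient from Theorem~\ref{thm:EquivAlgOfF}. Inspecting the spanning fields, the contributions of $\hat J_{ab}$ (no $V$-component), of $\hat P(\chi)$ (coefficient $\tfrac12\chi^a_{tt}x_a$) and of $\hat M(\sigma)$ (coefficient $\sigma_t$) are already real and hence impose no condition. The only imaginary contributions come from $\hat D(\tau)$ and $\hat I(\rho)$: the purely imaginary summand of the $\p_V$-coefficient of $\hat D(\tau)$, which by~\eqref{Mequgroupoidsub_c} originates from the term $-(inT_{tt})/(4T_t{}^2)$ and equals $-in\tau_{tt}/4$ at infinitesimal order, while $\hat I(\rho)$ contributes $-i\rho_t$. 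Setting the total imaginary part to zero on $V=V^*$ yields the single relation $\rho_t=-\tfrac n4\tau_{tt}$, that is, $\rho=-\tfrac n4\tau_t$ up to an additive constant.

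Reading off the consequences then gives the asserted basis. The parameters $O$, $\chi$ and $\sigma$ are unconstrained, so $\hat J_{ab}$, $\hat P(\chi)$ and $\hat M(\sigma)$ survive for all admissible values; a standalone $\hat I(\rho)$ (obtained by setting $\tau=0$) is admissible iff $\rho_t=0$, which collapses the infinite $\hat I$-freedom to the single generator $\hat I(1)$; and every $\hat D(\tau)$ must be accompanied by the compensating term forced by $\rho=-\tfrac n4\tau_t$, producing $\hat D(\tau)-\tfrac n4\hat I(\tau_t)$. Since $V$ is now real we identify $\p_V$ with $\p_V+\p_{V^*}$ and discard the redundant $V^*$-components, which is the stated convention. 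As a consistency check, the same list arises by differentiating the one-parameter subgroups of $G^\sim_{\mathbb R}$ directly: the extra defining constraint $\Upsilon_t=-nT_{tt}/(4T_t)$ leaves the $O$-, $\mathcal X$- and $\bar\Sigma$-flows untouched, allows only a constant residual $\Upsilon$-shift (giving $\hat I(1)$), and ties each $T$-flow to the induced shift $\Upsilon=-\tfrac n4\tau_t$ (giving $\hat D(\tau)-\tfrac n4\hat I(\tau_t)$).

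There is no conceptual obstacle, since tangency to an invariant submanifold automatically yields a subalgebra (equivalently, $\mathfrak g^\sim_{\mathbb R}$ is by definition the Lie algebra of $G^\sim_{\mathbb R}$, so closure under the bracket is free). The only point demanding genuine care is the bookkeeping of the imaginary part of the $\p_V$-coefficient, and in particular the factor $n$ in the $\tau_{tt}$-term: it stems from the $inT_{tt}$ in~\eqref{Mequgroupoidsub_c}, and getting it right is exactly what fixes the coefficient $-\tfrac n4$ in $\hat D(\tau)-\tfrac n4\hat I(\tau_t)$ rather than some $n$-independent value.
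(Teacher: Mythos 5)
Your proposal is correct and follows essentially the paper's own route: the paper likewise obtains $\mathfrak g^\sim_{\mathbb R}$ by singling out the objects of $G^\sim$ and $\mathfrak g^\sim$ compatible with the constraint $\mathop{\rm Im}V=0$, which reduces to the single relation $\rho_t=-\tfrac n4\tau_{tt}$ (equivalently $\Upsilon_t=-nT_{tt}/(4T_t)$ at the group level), exactly as you derive, yielding the generators $\hat D(\tau)-\tfrac n4\hat I(\tau_t)$ and $\hat I(1)$ while leaving $\hat J_{ab}$, $\hat P(\chi)$ and $\hat M(\sigma)$ unconstrained. You were also right to read the imaginary summand $-in\tau_{tt}/4$ off the groupoid formula~\eqref{Mequgroupoidsub_c} rather than off the displayed $V$-component of $\hat D(\tau)$ in Theorem~\ref{thm:EquivAlgOfF}, where the factor $n$ appears to have been dropped (the classifying condition~\eqref{Multclasscond} and the constraint $\rho_t+\tfrac n4\tau_{tt}=0$ stated in Section~\ref{sec:Real-ValuedPotentials} confirm it should be present).
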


Lie-symmetry vector fields of equations from the class~$\mathcal F_{\mathbb R}$ 
have the same properties as those within the class~$\mathcal F$ 
that are presented in Sections~\ref{sec:AnalysOfDetEqsForLieSymsOfMultiDSchEqs} and~\ref{sec:FurtherPropertiesOfLieInvAlgs}
and additionally satisfy the constraint $\rho_t+\frac14n\tau_{tt}=0$.
In particular, 
the kernel invariance algebra of the class~$\mathcal F_{\mathbb R}$ is 
$\mathfrak g^\cap_{\mathbb R}=\mathfrak g^\cap=\langle M,I\rangle$, and 
the linear span of essential Lie-symmetry vector of equations from the class~$\mathcal F_{\mathbb R}$ is 
\[
\mathfrak g^{\rm ess}_{\spanindex\mathbb R}:=\sum_{V\in\mathcal S_{\mathbb R}}\mathfrak g^{\rm ess}_V
=\left\{D(\tau)-\frac n4\tau_tI+\sum_{a<b}\kappa_{ab}J_{ab}+P(\chi)+\sigma M+cI\right\}.
\]
The problem of group classification of (1+n)-dimensional linear Schr\"odinger equations 
with real-valued potentials
reduces to the classification of appropriate subalgebras of the algebra~$\mathfrak g^{\rm ess}_{\spanindex\mathbb R}$ 
with respect to the equivalence relation generated by the action of~$\pi_*G^\sim_{\mathbb R}$.

In view of the relation~\eqref{Mequgroupoidsub_c} between source and target potentials 
for admissible transformations within the class~$\mathcal F$, 
an equation~$\mathcal L_V$ from this class can be mapped by a point transformation 
to an equation from the class~$\mathcal F_{\mathbb R}$ if and only if 
the imaginary part of~$V$ depends at most on~$t$.
This is why we do not need to solve the group classification problem for the class~$\mathcal F_{\mathbb R}$ 
with $n=2$ from the very beginning, 
but its solution can be easily derived from that for the class~$\mathcal F$ 
presented in Theorem~\ref{thm:GroupClassifictionOf(1+2)DLinSchEqs}.
It suffices to single out, among the cases listed in Theorem~\ref{thm:GroupClassifictionOf(1+2)DLinSchEqs}, 
those where the imaginary parts of the corresponding potentials are zero. 
More specifically, we should set 
$\mathop{\rm Im}V=0$ in Case~0, 
$\mathop{\rm Im}U=0$ in Cases~1--15, 
$\rho=0$ in Cases~5 and~13, 
$\beta_1=0$ in Case~6, 
$\gamma=0$ in Case~8, 
$\beta=0$ in Cases~9--11, 
$\alpha=0$ in Cases~14 and~15,
$h^{01}=h^{02}=0$ in Case~16 and 
$\nu_1=\nu_2=0$ in Cases~17 and~18.

\section{Conclusion}\label{sec:Conclusion}

The initial aim of the paper was to study transformational properties 
of (1+$n$)-dimensional linear Schr\"odinger equations with complex-valued potentials, 
which are of the form~\eqref{MLinSchEqs} and constitute the class denoted by~$\mathcal F$, 
for arbitrary values of $n\in\mathbb N$ 
and to carry out the complete group of such equations with $n=2$.
It turned out that this effort requires advancing the theoretical background 
of symmetry analysis in classes of differential equations, 
which became the main subject of the paper.
We have revisited and extended the entire framework of normalized classes of differential equations, 
including its basic notions and terminology, 
and proved the theorems on factoring out point-symmetry groups and maximal Lie invariance algebras within semi-normalized classes 
and the stronger theorems on splitting such groups and algebras within disjointedly semi-normalized classes.
We have also constructed several proof-of-concept examples.
This has allowed us to extend the algebraic method of group classification to classes 
that are semi-normalized in a more general sense,  
whereas previously only the disjointed semi-normalization with respect to the entire corresponding equivalence groups 
(in the present terminology) was considered. 

For the class~$\mathcal F$ with an arbitrary~$n$, we first computed its equivalence groupoid~$\mathcal G^\sim$  
by the direct method and thus proved that this class is uniformly semi-normalized with respect to the linear superposition of solutions,
which motivates the use of the developed version of the algebraic method of group classification.
Knowing~$\mathcal G^\sim$, we have easily found the equivalence group~$G^\sim$ of the class~$\mathcal F$ 
and constructed the associated equivalence algebra~$\mathfrak g^\sim$ 
as the set of infinitesimal generators of one-parameter subgroups of the group~$G^\sim$.

Working within the framework of the algebraic method,
we have reduced the group classification of equations from the class~$\mathcal F$ 
to the classification of specific low-dimensional subalgebras of the algebra~$\mathfrak g^\sim$ 
or, equivalently,~$\pi_*\mathfrak g^\sim$.
After analyzing the determining equations for Lie symmetries of equations from the class~$\mathcal F$,
we have constructed the linear span $\mathfrak g_\spanindex$ of the vector fields from the maximal Lie invariance algebras 
of these equations.
This linear span can be represented as the semidirect sum of the so-called essential subalgebra
$\mathfrak g^{\rm ess}_\spanindex$ and an ideal $\mathfrak g^{\rm lin}_\spanindex$
related to the 
linear superposition of solutions,
\smash{$\mathfrak g_\spanindex=\mathfrak g^{\rm ess}_\spanindex\lsemioplus\mathfrak g^{\rm lin}_\spanindex$}. 
We have shown that $\mathfrak g^{\rm ess}_\spanindex=\pi_*\mathfrak g^\sim$.
For each equation~$\mathcal L_V$ from the class~$\mathcal F$, the representation for $\mathfrak g_\spanindex$
induces the similar representation for the maximal Lie invariance algebra~$\mathfrak g_V$ of~$\mathcal L_V$,
\smash{$\mathfrak g_V=\mathfrak g^{\rm ess}_V\lsemioplus\mathfrak g^{\rm lin}_V$}. 
The analysis of the determining equations has also resulted in the principal constraints 
for the appropriate subalgebras of the algebra~$\pi_*\mathfrak g^\sim$.

The group classification problem of the class~$\mathcal F$ with $n=2$ 
has been completely solved using the results obtained for the general value of~$n$. 
All possible inequivalent families of potentials
possessing essential Lie symmetry extensions have been listed in Theorem~\ref{thm:GroupClassifictionOf(1+2)DLinSchEqs}.

In contrast to the case of single space variable $(n=1)$ studied in~\cite{kuru2018a},
the maximal Lie invariance algebras of equations from the class~$\mathcal F$ with $n=2$
can be of essentially greater dimensions, 
and some of these algebras involve the rotation vector field, which cannot appear for $n=1$.  
These two facts make the group classification more complex.
To come up with this challenge and make the classification rigorous and efficient,
we have split the group classification of the class~$\mathcal F$ 
into different cases depending on three integers, $k_2\in\{0,1\}$, $r_0\in\{0,1,2\}$ and $k_3\in\{0,1,2,3\}$.
These integers
characterize the dimensions of parts of the corresponding essential Lie invariance subalgebra 
that are related to transformations of time, rotations 
and generalized time-dependent shifts with respect to space variables, respectively,
and are invariant under acting by the equivalence group~$G^\sim$ of the class~$\mathcal F$.
Not all values of the tuple $(k_2,r_0,k_3)$ are appropriate. 
The constraint $(k_2,r_0)\ne(1,1)$ for the appropriate values of $(k_2,r_0,k_3)$ 
is quite obvious and is presented in Lemma~\ref{lem:k2r0} 
in the course of the preliminary analysis of Lie symmetries of equations from the class~$\mathcal F$.
This is not the case for the other constraint $k_3\ne2$ if $(k_2,r_0)\ne(0,0)$,
which can be derived only from the entire proof of Theorem~\ref{thm:GroupClassifictionOf(1+2)DLinSchEqs}.

We have also comprehensively studied the subclass~$\mathcal F_{\mathbb R}$ of~$\mathcal F$, 
which is constituted by equations of the form~\eqref{MLinSchEqs} with real-valued potentials. 
One can easily convert the above results for the class~$\mathcal F$ 
into those for the subclass~$\mathcal F_{\mathbb R}$, 
merely selecting all the transformational objects related to the class~$\mathcal F$ 
that are relevant for real-valued potentials.
In this way, we have straightforwardly constructed 
the equivalence groupoid and the equivalence group of the subclass~$\mathcal F_{\mathbb R}$, 
proved its uniform semi-normalization with respect to the linear superposition of solutions,
described properties of Lie symmetries of its elements 
and obtained its complete group classification in the case of two space variables. 

The similar study for the subclasses~$\mathcal F'$ and~$\mathcal F'_{\mathbb R}$ 
consisting of equations of the form~\eqref{MLinSchEqs} with time-independent complex- and real-valued potentials, 
respectively, is not straightforward at all. 
Even singling out the equivalence groupoids of these subclasses as subgroupoids 
of the groupoids~$\mathcal G^\sim$ and~$\mathcal G^\sim_{\mathbb R}$ 
given in Theorem~\ref{thmeMquivptransf} and Corollary~\ref{thm:EquivGroupoidOfFR} 
requires significant efforts. 
The entire group analysis of the subclasses~$\mathcal F'$ and~$\mathcal F'_{\mathbb R}$ 
is complicated due to the fact that these subclasses are not semi-normalized. 
A particular consequence of this fact is that one should consider 
two different group classification problems for each of the subclasses~$\mathcal F'$ and~$\mathcal F'_{\mathbb R}$, 
up to the equivalence generated by the corresponding equivalence groupoid and 
up to the equivalence generated by the corresponding equivalence group.  
The group classification lists for~$\mathcal F'$ and~$\mathcal F'_{\mathbb R}$ up to the former equivalence 
can be formally derived from Theorem~\ref{thm:GroupClassifictionOf(1+2)DLinSchEqs}, 
but they are not too useful if the equivalence groupoids of~$\mathcal F'$ and~$\mathcal F'_{\mathbb R}$ 
are not known. 
The study of Lie symmetries of Schr\"odinger equations from the class~$\mathcal F'_{\mathbb R}$ 
was initiated by Boyer in~\cite{Boyer1974}. 
The classification list of potentials was presented for $n=3$, 
including the unfounded claim that it can be converted to the lists for $n=1$ and $n=2$ by ``straightforward restriction''.
It is of common knowledge that Boyer's classification list has a number of weaknesses, 
see, e.g.,~\cite{niki2016a}. 
There are missed and equivalent cases in this list and essential gaps in its derivation. 
In particular, it is not clear which kind of equivalence had been used. 
The above weaknesses have not been comprehensively corrected in the literature. 
Hence the rigorous group analysis of the subclasses~$\mathcal F'$ and~$\mathcal F'_{\mathbb R}$ 
with explicit indication of the used equivalence is still an open problem. 

The results on group classification of the class~$\mathcal F$ and its subclasses can be used, in particular, 
to classify Lie reductions of equations from this class and construct their invariant solutions.

\section*{Acknowledgements}

The authors are grateful to Peter Basarab-Horwath, Michael Kunzinger and Galyna Popovych 
for helpful discussions and suggestions.
The research of C.K. was supported by International Science Programme (ISP) 
in collaboration with Eastern Africa Universities Mathematics Programme (EAUMP) 
and the Abdus Salam International Centre for Theoretical Physics (ICTP). 
The research of D.R.P. was undertaken thanks to funding from the Canada Research Chairs program,
the InnovateNL LeverageR{\&}D program and the NSERC Discovery Grant program.
The research was also supported in part by the Ministry of Education, Youth and Sports of the Czech Republic (M\v SMT \v CR)
under RVO funding for I\v C47813059.
R.O.P. expresses his gratitude for the hospitality shown by the University of Vienna during his long staying at the university.
D.R.P. and R.O.P. deeply thank to the Armed Forces of Ukraine and the civil Ukrainian people
for their bravery and courage in defense of peace and freedom in Europe and in the entire world from russism.

\end{document}